\algrenewcommand\algorithmicrequire{\textbf{Description:}}
\algrenewcommand\algorithmicensure{\textbf{Postcondition:}}
\algnewcommand{\IIf}[1]{\State\algorithmicif\ #1\ \algorithmicthen}
\algnewcommand{\EndIIf}{\unskip\ \algorithmicend\ \algorithmicif}
\newcommand{\poly}{\operatorname{poly}}
\newcommand{\E}{\mathbb{E}}
\renewcommand{\epsilon}{\varepsilon}
\newcommand{\polylog}{\operatorname{polylog}}
\newcommand{\defn}[1]{\textbf{\emph{#1}}} \renewcommand{\paragraph}[1]{\vspace{.5 cm} \noindent \textbf{#1.} }
\newtheoremstyle{slanted}
{3pt}
{3pt}
{\slshape}
{}
{\bfseries}
{.}
{.5em}
{}
\theoremstyle{slanted}
\newtheorem{theorem}{Theorem}[section]
\newtheorem{definition}[theorem]{Definition}
\newtheorem{lemma}[theorem]{Lemma}
\newtheorem{claim}[theorem]{Claim}
\theoremstyle{remark}
\newtheorem{rem}[theorem]{Remark}
\begin{document}

\title[How Asymmetry Helps Buffer Management: Achieving Optimal Tail Size in Cup Games]{How Asymmetry Helps Buffer Management: Achieving Optimal Tail Size in Cup Games}
\author{William Kuszmaul}
\thanks{MIT CSAIL. \textit{kuszmaul@mit.edu}. Supported by a Fannie \& John Hertz Foundation Fellowship, a NSF GRFP Fellowship, and NSF grant CCF 1533644.}

\maketitle

\thispagestyle{empty}
\begin{abstract}

The cup game on $n$ cups is a multi-step game with two players, a filler and an emptier. At each step, the filler distributes $1$ unit of water among the cups, and then the emptier selects a single cup to remove (up to) $1$ unit of water from.

There are several objective functions that the emptier might wish to minimize. One of the strongest guarantees would be to minimize \emph{tail size}, which is defined to be the number of cups with fill $2$ or greater. A simple lower-bound construction shows that the optimal tail size for deterministic emptying algorithms is $\Theta(n)$, however.

We present a simple randomized emptying algorithm that achieves tail size $\tilde{O}(\log n)$ with high probability in $n$ for $\poly n$ steps. Moreover, we show that this is tight up to doubly logarithmic factors. We also extend our results to the \emph{multi-processor cup game}, achieving tail size $\tilde{O}(\log n + p)$ on $p$ processors with high probability in $n$. We show that the dependence on $p$ is near optimal for any emptying algorithm that achieves polynomial-bounded backlog.

A natural question is whether our results can be extended to give \emph{unending guarantees}, which apply to arbitrarily long games. We give a lower bound construction showing that no monotone memoryless emptying algorithm can achieve an unending guarantee on either tail size or the related objective function of backlog. On the other hand, we show that even a very small (i.e., $1 / \poly n$) amount of resource augmentation is sufficient to overcome this barrier.
\end{abstract}

\newpage

\setcounter{page}{1}
\section{Introduction}

At the start of the \defn{cup game on $n$ cups}, there are $n$ empty
cups. In each step of the game, a \defn{filler} distributes $1$ unit
of water among the cups, and then an \defn{emptier} removes (up to) $1$
unit of water from a single cup of its choice. The emptier aims to
minimize some measure of ``behind-ness'' for cups in the system (e.g.,
the height of the fullest cup, or the number of cups above a certain
height). If the emptier's algorithm is randomized, then the filler is
an \defn{oblivious adversary}, meaning it cannot adapt to the
behavior of the emptier.

The cup game naturally arises in the study of processor
scheduling, modeling a problem in which $n$ tasks each receive work at
varying rates, and a scheduler must pick one task to schedule at each
time step
\cite{BaruahCoPl96,GkasieniecKl17,BaruahGe95,LitmanMo11,LitmanMo05,MoirRa99,BarNi02,GuanYi12,Liu69,
  LiuLa73, AdlerBeFr03, LitmanMo09, DietzRa91}. The cup game has also found
numerous applications to network-switch buffer
management~\cite{Goldwasser10,AzarLi06,RosenblumGoTa04,Gail93},
quality of service
guarantees~\cite{BaruahCoPl96,AdlerBeFr03,LitmanMo09}, and data
structure deamortization
\cite{AmirFaId95,DietzRa91,DietzSl87,AmirFr14,Mortensen03,GoodrichPa13,FischerGa15,Kopelowitz12,BenderDaFa20}.

\paragraph{Bounds on backlog}
Much of the work on cup games has focused on bounding the
\defn{backlog} of the system, which is defined to be the amount of
water in the fullest cup.

Research on bounding backlog has spanned five decades
\cite{BaruahCoPl96,GkasieniecKl17,BaruahGe95,LitmanMo11,LitmanMo05,MoirRa99,BarNi02,GuanYi12,Liu69,
  LiuLa73,DietzRa91, BenderFaKu19, Kuszmaul20, AdlerBeFr03, DietzSl87,
  LitmanMo09}. Much of the early work focused on the \defn{fixed-rate}
version of the game, in which the filler places a fixed amount of
water $f_j$ into each cup $j$ on every step
\cite{BaruahCoPl96,GkasieniecKl17,BaruahGe95,LitmanMo11,LitmanMo05,MoirRa99,BarNi02,GuanYi12,Liu69,
  LiuLa73}; in this case constant backlog is achievable  \cite{Liu69, BaruahCoPl96}. For the full
version of the game, without fixed rates, constant backlog is not
possible. In this case, the optimal deterministic emptying algorithm
is known to be the greedy emptying algorithm, which always empties
from the fullest cup, and which achieves backlog $O(\log n)$
\cite{AdlerBeFr03, DietzSl87}. If the emptier is permitted to use a
randomized algorithm, then it can do much better, achieving an
asymptotically optimal backlog of $O(\log \log n)$ for $\poly n$ steps
with high probability \cite{DietzRa91, BenderFaKu19, Kuszmaul20}.

\paragraph{A strong guarantee: small tail size}
The \defn{tail size} of a cup game at each step is the number of cups
containing at least some constant $c$ amount of water. For the
guarantees in this paper, $c$ will be taken to be $2$.

A guarantee of small tail size is particularly appealing for
scheduling applications, where cups represent tasks and water
represents work that arrives to the tasks over time. Whereas a bound
of $b$ on backlog guarantees that the furthest behind worker is only
behind by at most $b$, it says nothing about the \emph{number} of
workers that are behind by $b$. In contrast, a small bound on tail
size ensures that \emph{almost no workers} are behind by more than
$O(1)$.

The main result in this paper is a randomized emptying algorithm that
achieves tail size \\ $O(\log n \log \log n)$. The algorithm also
simultaneously optimizes backlog, keeping the maximum height at
$O(\log \log n)$. As a result the \emph{total amount of water} above
height $2$ in the system is $\tilde{O}(\log n)$ with high
probability. In constrast, the best possible deterministic emptying
algorithm allows for up to $n^{1- \epsilon}$ cups to \emph{all} have
fills $\Omega(\log n)$ at once (see the lower-bound constructions
discussed in \cite{DietzRa91} and \cite{BenderFeKr15}).

The problems of optimizing tail size and backlog are closely related
to the problem of optimizing the \defn{$c$-shifted $\ell_p$ norm} of the cup game. Formally, the $c$-shifted $\ell_p$ norm is given by
$$\left(\sum_{i=1}^n \max(f_i - c, 0)^p\right)^{1/p},$$
where $f_i$ is the fill of cup $i$. \footnote{When $p \neq \infty$, in order for the 
$c$-shifted $\ell_p$ norm to be interesting, it is necessary to require that
$c \geq \Omega(1)$, since trivial filling strategies can achieve 
fill $\Omega(1)$ in $\Theta(n)$ cups deterministically.}
The problem of bounding backlog corresponds to the problem of optimizing the
$\ell_\infty$ norm of the cup game, and the problem of bounding tail size corresponds
to bounding the $2$-shifted $\ell_1$ norm. By optimizing both metrics 
simultaneously our algorithm has the desirable property that 
it also achieves a bound of $\tilde{O}(\log n)$ for the $2$-shifted $\ell_p$ norm 
for \emph{any} $p \in O(1)$, which is optimal up to doubly logarithmic factors.

\paragraph{Past work on tail size using beyond-worst-case analysis}
To approach the combinatorial difficulty of analyzing cup games, past
works have often turned to various forms of beyond worst-case analysis
(e.g., smoothed analysis \cite{BenderFaKu19}, semi-clairvoyance \cite{LitmanMo09}, resource augmentation \cite{BenderFaKu19, LitmanMo09, DietzRa91, DietzSl87}). 
The most successful of these approaches has arguably been resource augmentation. 
In the cup game with $\epsilon$ resource
augmentation, the filler is permitted to distribute at most $1 - \epsilon$ water into cups
at each step (rather than $1$ full unit), giving the emptier a small advantage.
Resource augmentation can
also be studied in a more extreme form by allowing the emptier to
fully empty a cup on each step, rather than simply removing a single
unit \cite{DietzRa91, DietzSl87}. Resource augmentation significantly simplifies the
task of analyzing cup games --- for example, there was nearly a 30 year
gap between the first randomized bounds on backlog with resource
augmentation \cite{DietzRa91} versus the first bounds without resource augmentation \cite{Kuszmaul20}.

Currently, the best known guarantees with resource augmentation are
achieved by the algorithm of \cite{BenderFaKu19} which, using
$\epsilon = 1 / \polylog n$, limits backlog to $O(\log \log n)$ and
tail size to $\polylog n$ (with high probability).

The algorithm,
which is called the \defn{smoothed greedy algorithm}, begins by
randomly perturbing the starting state of the system, and then
following a variant of the deterministic greedy emptying
algorithm.
Roughly speaking, the random perturbations at the beginning of the
game allow for the number of cups $X_t$ containing more than $1$ unit
of water after each step $t$ to be modeled by a biased random walk
$X_1, X_2, \ldots$, where $\Pr[X_i = X_{i - 1} + 1] = 1/2 - \epsilon$
and $\Pr[X_i = \max(X_{i - 1} - 1, 0)] = 1/2 + \epsilon$. This is
where the resource augmentation plays a critical role, since it
introduces a bias to the random walk which pushes the walk near $0$ at
all times. In contrast, without resource augmentation, the random walk
is unbiased.

Subsequent work \cite{Kuszmaul20} showed that the algorithm's
guarantees on backlog continue to hold without resource augmentation
(at least, for a polynomial number of steps). More generally, the
analysis bounds the number of cups at a given height $\alpha$ by
roughly $n^{1 / 2^{\Omega(\alpha)}}$, which in turn implies an
arbitrarily small polynomial tail size. Whether or not a subpolynomial
tail size can be achieved without resource augmentation has remained
open.

\paragraph{This paper: the asymmetric smoothed greedy algorithm}
We show that resource augmentation is not needed to bound tail
size. We present a randomized emptying algorithm (called the \defn{asymmetric smoothed greedy algorithm}) that achieves tail
size $\tilde{O}(\log n)$ with high probability in $n$ after each of
the first $\poly n$ steps of a cup game. We prove that the algorithm
is nearly optimal, in that any emptying algorithm must allow for a
tail size of $\tilde{\Omega}(\log n)$ with probability at least
$\frac{1}{\poly n}$.

The analysis of the algorithm takes an indirect approach to bounding tail size. Rather than examining tail size directly, we instead
prove that the use of randomness by the algorithm makes the state of
the cups at each step ``difficult'' for the filler to accurately
predict. We call this the \defn{unpredictability guarantee}. We then
show that any greedy-like algorithm that satisfies the
unpredictability guarantee is guaranteed to have tail size
$\tilde{O}(\log n)$ and backlog $O(\log \log n)$.

Algorithmically, the key to achieving the unpredictability guarantee
is to add a small amount of asymmetry to the smoothed greedy emptying algorithm. When choosing between cups that have 2 or more units of water, the emptier follows the standard smoothed greedy algorithm; but when choosing between cups whose fills are between 1 and 2, the emptier ignores the specifics of how much water is in each cup, and instead chooses between the cups based on random priorities that are assigned to the cups at the beginning of the game.

Intuitively, the asymmetric treatment of cups ensures that
there is a large collection (of size roughly $n / 2$) of randomly
selected cups that are ``almost always'' empty. The fact that the emptier doesn't know which cups these are then implies the unpredictability guarantee. 
Proving this intuition remains highly nontrivial, however, and requires several new combinatorial ideas.

\paragraph{Multi-processor guarantees}
The cup game captures a scheduling problem in which a single processor
must pick one of $n$ tasks to make progress on in each time step. The
multi-processor version of this scheduling problem is captured by the
\defn{$p$-processor cup game} \cite{BenderFaKu19, Kuszmaul20,
  LitmanMo09,AdlerBeFr03,BaruahCoPl96,Liu69}. In each step of the
$p$-processor cup game, the filler distributes $p$ units of water
among cups, and the emptier removes $1$ unit of water from (up to) $p$
different cups. Because the emptier can remove at most $1$ unit of
water from each cup at each step, an analogous constraint is also
placed on the filler, requiring that it places at most $1$ unit of
water into each cup at each step.

A key feature of the $p$-processor cup game is that the emptier is
required to remove water from $p$ \emph{distinct} cups in each step,
even if the vast majority of water is contained in fewer than $p$
cups. 

Until recently, establishing any nontrivial bounds on backlog in the
multi-processor cup game remained an open problem, even with the help
of resource augmentation. Recent work by Bender et
al. \cite{BenderFaKu19} (using resource augmentation) and then by
Kuszmaul \cite{Kuszmaul20} (without resource augmentation) established
bounds on backlog closely matching those for the single-processor
game.

By extending our techniques to the multi-processor setting, we
construct a randomized emptying algorithm achieves tail size
$\tilde{O}(\log n + p)$ with high probability in $n$ after each of the
first $\poly n$ steps of a $p$-processor cup game. Moreover, we show
that the dependence on $p$ is near optimal for any backlog-bounded
algorithm (i.e., any algorithm that achieves backlog $\poly n$ or
smaller).

\paragraph{Lower bounds against unending guarantees}
In the presence of resource augmentation $\epsilon = 1 / \polylog n$,
the smoothed greedy emptying algorithm is known to provide an
\emph{unending guarantee} \cite{BenderFaKu19}, meaning that the
high-probability bounds on backlog and tail size continue to hold even
for \emph{arbitrarily} large steps $t$.

A natural question is whether unending guarantees can also be achieved
\emph{without} the use of resource augmentation. It was previously
shown that, when $p \ge 2$, the smoothed greedy algorithm does not
offer unending guarantees \cite{Kuszmaul20}. Analyzing the
single-processor game has remained an open question, however.

We give a lower bound construction showing that neither the smoothed
greedy algorithm nor the asymmetric smoothed greedy algorithm
offer unending guarantees for the single-processor cup game without
the use of resource augmentation. Even though resource augmentation
$\epsilon > 0$ is needed for the algorithms to achieve unending
guarantees, we show that the amount of resource augmentation required
is very small. Namely, $\epsilon = 1 / 2^{\polylog n}$ is both
sufficient and necessary for the asymmetric smoothed greedy
algorithm to offer unending guarantees on both tail size and backlog.

We generalize our lower-bound construction to work against \emph{any
  emptying algorithm} that is both monotone and memoryless, including
emptying algorithms that are equipped with a clock. We show that no
such emptying algorithm can offer an unending guarantee of $o(\log n)$
backlog in the single-processor cup game, and that any unending
guarantee of $\polylog n$ tail size must come of the cost of
polynomial backlog.

We call the filling strategy in our lower bound construction the
\defn{fuzzing algorithm}. The fuzzing algorithm takes a very simple
approach: it randomly places water into a pool of cups, and shrinks
that pool of cups very slowly over time. The fact that gradually
shrinking random noise represents a worst-case workload for cup games
suggests that real-world applications of cup games (e.g., processor scheduling,
network-switch buffer management, etc.) may be at risk of experiencing
``aging'' over time, with the performance of the system degrading due
to the impossibility of strong unending guarantees.

\paragraph{Related work on other variants of cup games}
Extensive work has also been performed on other variants of the cup
game. Bar-Noy et al.~\cite{Bar-NoyFrLa02} studied the backlog for a
variant of the single-processor cup game in which the filler can place
arbitrarily large integer amounts of water into cups at each
step. Rather than directly bounding the backlog, which would be
impossible, they show that the greedy emptying algorithm achieves
competitive ratio $O(\log n)$, and that this is optimal for both
deterministic and randomized online emptying algorithms. Subsequent
work has also considered weaker adversaries~\cite{FleischerKo04,
  DamaschkeZh05}.

Several papers have also explored variants of cup games in which cups are
connected by edges in a graph, and in which the emptier is constrained
by the structure of the
graph~\cite{BodlaenderHuKu12,BenderFeKr15,BodlaenderHuWo11,ChrobakCsIm01}. This
setting models multi-processor scheduling with conflicts between
tasks~\cite{BodlaenderHuWo11,ChrobakCsIm01} and some problems in
sensor radio networks~\cite{BenderFeKr15}.

Another recent line of work is that by Kuszmaul and Westover \cite{variable}, which considers a variant of the $p$-processor cup game in which the filler is permitted to change the value of $p$ over time. Remarkably, the optimal backlog in this game is significantly worse than in the standard game, and is $\Theta(n)$ for an (adaptive) filler.

Cup games have also been used to model memory-access heuristics in
databases ~\cite{BenderCrCo18}. Here, the emptier is allowed to
completely empty a cup at each step, but the water from that cup is
then ``recycled'' among the cups according to some probability
distribution. The emptier's goal is achieve a large recycling rate,
which is the average amount of water recycled in each step.

Closely related to the study of cup games is the problem of \defn{load balancing}, in which one must assign balls to bins in order to minimize the number of balls in the fullest bin. In the classic load balancing problem, $n$ balls arrive over time, and each ball comes with a selection of $d$ random bins (out of $n$ bins) in which it can potentially be placed. 
The load balancing algorithm gets to select which of the $d$ bins to place the ball in, and can, for example, always select the bin with the fewest balls. But what should the algorithm do when choosing between bins that have the same number of balls? In this case, V\"ocking famously showed that the algorthm should always break ties in the same direction \cite{vocking2003asymmetry}, and that this actually results in an asymptotically better bound on load than if one breaks ties arbitrarily. Interestingly, one can think of the asymmetry used in V\"ocking's algorithm for load balancing as being analogous to the asymmetry used in our algorithm for the cup game: in both cases, the algorithm always breaks ties in the same random direction, although in our result, the way that one should define a ``tie'' is slightly nonobvious. In the case of V\"ocking's result, the asymmetry is known to be necessary in order to get an optimal algorithm \cite{vocking2003asymmetry}; it remains an open question whether the same is true for the problem of bounding tail size in cup games.

\paragraph{Related work on the roles of backlog and tail size in data structures}
Bounds on backlog have been used extensively in data-structure
deamortization
\cite{AmirFaId95,DietzRa91,DietzSl87,AmirFr14,Mortensen03,GoodrichPa13,FischerGa15,Kopelowitz12},
where the scheduling decisions by the emptier are used to decide how a
data structure should distribute its work.

Until recently, the applications focused primarily on in-memory data
structures, since external-memory data structures often cannot afford
the cost of a buffer overflowing by an $\omega(1)$ factor. Recent work
shows how to use bounds in tail-size in order to solve this problem,
and presents a new technique for applying cup games to external-memory
data structures \cite{BenderDaFa20}. A key insight is that if a cup
game has small tail size, then the water in ``overflowed cups'' (i.e.,
cups with fill more than $O(1)$) can be stored in a small in-memory
cache. The result is that every cup consumes exactly $\Theta(1)$
blocks in external memory, meaning that each cup can be read/modified
by the data structure in $O(1)$ I/Os. This insight was recently
applied to external-memory dictionaries in order to eliminate flushing
cascades in write optimized data structures \cite{BenderDaFa20}.

\paragraph{Outline}
The paper is structured as follows. Section \ref{sec:alg} describes a
new randomized algorithm that achieves small tail size without
resource augmentation. Section \ref{sec:technical} gives a technical
overview of the algorithm's analysis and of the other results in this
paper. Section \ref{sec:analysis} then presents the full analysis of
the algorithm and Section \ref{sec:lowerbounds} presents (nearly)
matching lower bounds. Finally, Section \ref{sec:unending} gives lower
bounds against unending guarantees and analyzes the amount of resource
augmentation needed for such guarantees.


\paragraph{Conventions}
Although in principle an arbitrary constraint height $c$ can be used to
determine which cups contribute to the tail size, all of the
algorithms in this paper work with $c = 2$. Thus, throughout the rest
of the paper, we define the \defn{tail size} to be the number of cups
with height $2$ or greater.

As a convention, we say that an event occurs with \defn{high
  probability in $n$}, if the event occurs with probability at least
$1 - \frac{1}{n^c}$ for an arbitrarily large constant $c$ of our
choice. The constant $c$ is allowed to affect other constants in the
statement. For example, an algorithm that achieves tail size $c \log
n$ with probability $\frac{1}{n^c}$ is said to achieve tail size
$O(\log n)$ with high probability in $n$.

\section{The Asymmetric Smoothed Greedy Algorithm}\label{sec:alg}

Past work on randomized emptying algorithms has focused on analyzing
the \defn{smoothed greedy algorithm} \cite{BenderFaKu19,
  Kuszmaul20}. The algorithm begins by randomly perturbing the
starting state of the system: the emptier  places a
random offset $r_j$ of water into each cup $j$, where the $r_j$'s are
selected independently and uniformly from $[0, 1)$. The emptier then
follows a greedy emptying algorithm, removing water from the fullest
cup at each step. If the fullest cup contains fill less than $1$,
however, then the emptier skips its turn. This ensures that the
\defn{fractional} amount of water in each cup $j$ (i.e., the amount of
water modulo $1$) is permanently randomized by the initial offset
$r_j$. The randomization of the fractional amounts of water in each
cup has been critical to past randomized analyses \cite{BenderFaKu19,
  Kuszmaul20}, and continues to play an important (although perhaps less
central) role in this paper.

This paper introduces a new variant of the smoothed greedy
algorithm that we call the \defn{asymmetric smoothed greedy
  algorithm}. The algorithm assigns a random priorities
$p_j \in [0 ,1)$ to each cup $j$ (at the beginning of the game) and
uses these to ``break ties'' when cups contain relatively small
amounts of water. Interestingly, by always breaking these ties in the
same direction, we change the dynamics of the game in a way that
allows for new analysis techniques. We describe the algorithm in
detail below.

\paragraph{Algorithm description} At the beginning of the game, the emptier selects random offsets $r_j
\in [0, 1)$ independently and uniformly at random for each cup
  $j$. Prior to the game beginning, $r_j$ units of water are placed in
  each cup $j$. This water is for ``bookkeeping'' purposes only, and
  need not physically exist. During initialization, the emptier also
  assigns a random \defn{priority} $p_j \in [0, 1)$ independently and
    uniformly at random to each cup $j$.

After each step $t$, the emptier selects (up to) $p$ different cups to
remove $1$ unit of water from as follows. If there are $p$ or more
cups containing $2$ or more units of water, then the emptier selects
the $p$ fullest such cups. Otherwise, the emptier selects all of the
cups containing $2$ or more units of water, and then resorts to cups
containing fill in $[1, 2)$, choosing between these cups based on
  their priorities $p_j$ (i.e., choosing cups with larger priorities
  over those with smaller priorities). The emptier never removes water
  from any cup containing less than $1$ unit of water.

  \paragraph{Threshold crossings and a threshold queue}
  When discussing the algorithm, several additional definitions and
  conventions are useful.  We say that \defn{threshold $(j, i)$ is
    crossed} if cup $j$ contains at least $i$ units of water for
  positive integer $i$. When $i = 1$, the threshold $(j, i)$ is called
  a \defn{light threshold}, and otherwise $(j, i)$ is called a
  \defn{heavy threshold}. One interpretation of the emptying algorithm
  is that there is a queue $Q$ of thresholds $(j, i)$ that are
  currently crossed. Whenever the filler places water into cups, this
  may add thresholds $(j, i)$ to the queue. And whenever the emptier
  removes water from some cup $j$, this removes some threshold
  $(j, i)$ from the queue. When selecting thresholds to remove from
  the queue, the emptier prioritizes heavy thresholds over light
  ones. Within the heavy thresholds, the emptier prioritizes based on
  cup height, and within the light thresholds the emptier prioritizes
  based on cup priorities $p_j$.

  As a convention, we say that a cup $j$ is \defn{queued} if $(j, 1)$
  is in $Q$ (or, equivalently, if $(j, i)$ is in the queue for any
  $i$). The emptier is said to \defn{dequeue} cup $j$ whenever
  threshold $(j, 1)$ is removed from the queue. The \defn{size} of the
  queue $Q$ refers to the number of thresholds in the queue (rather
  than the number of cups).

\section{Technical Overview}\label{sec:technical}

In this section we give an overview of the analysis techniques used in
the paper. We begin by discussing the analysis of the asymmetric smoothed
greedy algorithm. To start, we focus our analysis on the
single-processor cup game, in which $p = 1$.

\paragraph{The unpredictability guarantee} 
At the heart of the analysis is what we call the
\defn{unpredictability guarantee}, which, roughly speaking,
establishes that the filler cannot predict large sets of cups that
will all be over-full at the same time as one-another. We show that if
an algorithm satisfies a certain version of the unpredictability guarantee, along
with certain natural ``greedy-like'' properties, then the algorithm is
guaranteed to exhibit a small tail size.

Formally, we say that an emptying algorithm satisfies
\defn{$R$-unpredictability} at a step $t$ if for any oblivious filling
algorithm, and for any set of cups $S$ whose size is a sufficiently
large constant multiple of $R$, there is high probability in $n$ that
at least one cup in $S$ has fill less than $1$ after step $t$. In
other words, for any polynomial $f(n)$, there exists a constant $c$
such that: for each set $S \subseteq [n]$ of $c R$ cups, the
probability that every cup in $S$ has height $1$ or greater at step
$t$ is at most $1/f(n)$.

\paragraph{How $R$-unpredictability helps}
Rather than proving that $R$-unpredictability causes the tail size to
stay small, we instead show the contrapositive. Namely, we show that
if there is a filling strategy that achieves a large tail size, the
strategy can be adapted to instead violate $R$-unpredictability.

Suppose that the filler is able to achieve tail size $cR$ at some step
$t$, where $c$ is a large constant. Then during each of the next $cR$
steps, the emptier will remove water from cups containing fill $2$ or
more (here, we use the crucial fact that the emptier always
prioritizes cups with fills $2$ or greater over cups with fills
smaller than $2$). This means that, during steps
$t + 1, \ldots, t + cR$, the set of cups with fill $1$ or greater is
monotonically increasing. During these steps the filler can place $1$
unit of water into each of the cups $1, 2, \ldots, cR$ in order to
ensure that these cups all contain fill $1$ or greater after step
$t + cR$. Thus the filler can transform the initial tail size of $cR$
into a large set of cups $S = \{1, 2, \ldots, cR\}$ that all have fill
$1$ or greater. In other words, any filling strategy for achieving
large tail size (at some step $t$) can be harnessed to violate
$R$-unpredictability (at some later step $t + cR$).

The directness of the argument above may seem to suggest that in order
to prove the $R$-unpredictability, one must first (at least
implicitly) prove a bound on tail size. A key insight in this paper is
that the use of priorities in the asymmetric smoothed greedy
algorithm allows for $R$-unpredictability to be analyzed as its own
entity.



  Our algorithm analysis establishes $\log n \log \log
  n$-unpredictability for the first $\poly n$ steps of any cup game,
  with high probability in $n$. This, in turn, implies a bound of
  $O(\log n \log \log n)$ on tail size.

  \paragraph{Establishing unpredictability}
  We prove that, out of the roughly $n / 2$ cups $j$ with priorities
  $p_j \ge 1/2$, at most $O(\log n \log \log n)$ of them are queued
  (i.e., contain fill $1$ or greater) at a time, with high probability
  in $n$. Recall that the cups with priority $p_j \ge 1/2$ are
  prioritized by the asymmetric smoothed greedy algorithm when the
  algorithm is choosing between cups with fills in the interval
  $[1, 2)$. This preferential treatment does not extend the case where
  there are cups containing fill $\ge 2$, however. Remarkably, the
  limited preferential treatment exhibited by the algorithm is enough
  to ensure that the number of queued high-priority cups never exceeds
  $O(\log n \log \log n)$.

  The bound of $O(\log n \log \log n)$ on the number of queued cups
  with priorities $\ge 1/2$ implies $\log n \log \log
  n$-unpredictability as follows. For any fixed set $S$ of cups, the
  number of cups $j$ in $S$ with priority $p_j \ge 1/2$ will be
  roughly $|S| / 2$ with high probability in $n$. If $|S| / 2$ is at
  least a sufficiently large constant multiple of $\log n \log \log
  n$, then the number of cups with $p_j \ge 1/2$ in $S$ exceeds the
  \emph{total} number of cups with $p_j \ge 1/2$ that are queued. Thus
  $S$ must contain at least one non-queued cup, as required for the
  unpredictability guarantee.

  In order to bound the number of queued cups with priority
  $p_j \ge 1/2$ by $O(\log n \log \log n)$, we partition the cups into
  $\Theta(\log \log n)$ \defn{priority levels} based on their
  priorities $p_j$. Let $q$ be a sufficiently large constant multiple
  of $\log \log n$.  The priority level of a cup $j$ is given by
  $\lfloor p_j \cdot q \rfloor + 1$. (Note that the priority levels are
  only needed in the analysis, and the algorithm does not have to know
  $q$.) We show that with high probability in $n$, there are never
  more than $O(\log n \log \log n)$ queued cups with priority level
  $\ge q/2$. Note that, although we only care about bounding the
  number of queued whose priority-levels are in the top fifty
  percentile, our analysis will take advantage of the fact that the
  priorities $p_j$ are defined at a high granularity (rather than, for
  example, being boolean).
    
   \paragraph{The stalled emptier problem}
   Bounding the number of queued cups with priority level greater than
   $\ell$ directly is difficult for the following reason: Over the
   course of a sequence of steps, the filler may cross many
   \emph{light} thresholds cups with priority level greater than
   $\ell$, while the emptier only removes \emph{heavy} thresholds from
   $Q$ (i.e., the emptier empties exclusively from cups of height $2$
   or greater). This means that, in a given sequence of steps, the
   number of queued cups with priority level greater than $\ell$ could
   increase substantially. We call this the \defn{stalled emptier
     problem}. Note that the stalled emptier problem is precisely what
   enables the connection between tail size and $R$-unpredictability
   above, allowing the filler to transform large tail size into a
   violation of $R$-unpredictability. As a result, any analysis that
   directly considers the stalled emptier problem must also first
   bound tail-size, bringing us back to where we started.

   Rather than bounding the number of queued cups with priority level
   greater than $\ell$, we instead compare the number of queued cups
   at priority level greater than $\ell$ to the number at priority
   level $\ell$. The idea is that, if the stalled-emptier problem
   allows for the number of queued priority-level greater than $\ell$
   to grow large, then it will allow for the number of queued
   priority-level-$\ell$ cups to grow even larger. That is, without
   proving any absolute bound on the number of cups at a given
   priority level, we can still say something about the ratio of
   high-priority cups to low-priority cups in the queue.

   To be precise, we prove that, whenever there are $k$ queued cups at
   some priority level $\ell$, there are at most
   $O(\sqrt{q k \log n} + \log n)$ queued cups at priority level
   $> \ell$ (recall that $q = \Theta(\log \log n)$ is the number of
   priority levels). Since the number of cups with priority level at
   least $1$ is deterministically anchored at $n$, this allows for us
   to inductively bound the number of queued cups with large priority
   levels $\ell$.  In particular, the number of queued cups at
   priority level $q/ 2$ or greater never exceeds
   $O(\log n \log \log n)$.

   \paragraph{Comparing the number of queued cups with priority level $\ell$ versus $> \ell$}
   Suppose that after some step $t$, there are some large number $k$
   of queued cups with priority level $\ge \ell$. We wish to show that
   almost all of these $k$ cups have priority level exactly $\ell$.

   Before describing our approach in detail (which we do in the
   following two subheaders), we give an informal description of the
   approach. Let $k_1$ be number of priority-level-$\ell$ queued cups,
   and let $k_2$ be the number of priority-level-greater-than-$\ell$
   queued cups. The only way that there can be a large number $k_2$ of
   priority-level-greater-than-$\ell$ cups queued is if they have all
   entered the queue since the last time that a level-$\ell$ cup was
   dequeued. This means that the size of $Q$ has increased by at least
   $k_2$ since the last time that a priority-level-$\ell$ cup was
   dequeued. On the other hand, we show that priority-level-$\ell$
   cups accumulate in $Q$ at a much faster rate than the size of $Q$
   varies. In particular, we show that both the rate at which
   priority-level $\ell$ cups accumulate in $Q$ and the rate at which
   $Q$'s size varies are controlled by what we call the ``influence''
   of a time-interval, and that the former is always much larger than
   the latter. This ensures that $k_1 \gg k_2$.
   
   Note that the analysis avoids arguing directly the number of queued
   high-priority cups small, which could be difficult due to the
   stalled emptier problem. Intuitively, the analysis instead shows
   that low priority cups do a good job ``pushing'' the high priority
   cups out of the queue, ensuring that the ratio of low-priority cups
   (i.e., cups with priority level $\ell$) to high-priority cups
   (i.e. cups with priority level $> \ell$) is always very large.

   \paragraph{Relating the number of high-priority queued cups to changes in $|Q|$}
    Let $t_0$ be the most recent step $t_0 \le t$ such that at least
    $k + 1$ \emph{distinct} cups $C$ with priority level $\ell$ cross
    thresholds during steps $t_0, \ldots, t$. (Recall that $k$ is the
    number of queued cups with priority level $\ge \ell$ after step
    $t$.) One can think of the steps $t_0, \ldots, t$ as representing
    a long period of time in which many cups with priority level
    $\ell$ have the opportunity to accumulate in $Q$. We will now show
    that the use of priorities in the asymmetric smoothed greedy
    algorithm causes the following property to hold: The number of
    queued cups with priority level $> \ell$ after step $t$ is bounded
    above by the amount that $|Q|$ varies during steps $t_0, \ldots,
    t$.
   
    Because $Q$ contains only $k$ queued cups with priority level
    $\ge \ell$ after step $t$, at least one cup from $C$ must be
    dequeued during steps $t_0, \ldots, t$ (otherwise, $Q$ would
    contain at least $|C| = k + 1$ level-$\ell$ cups after step
    $t$). Let $t^*$ be the final step in $t_0, \ldots, t$ out of those
    that dequeue a cup with priority level $\le \ell$, and let
    $Q_{t^*}$ and $Q_t$ denote the queue after steps $t^*$ and $t$,
    respectively.

   By design, the only way that the asymmetric smoothed greedy
   algorithm can dequeue a cup with priority level $\le \ell$ at step
   $t^*$ is if the queue $Q_{t^*}$ consists exclusively of light
   thresholds (i.e., thresholds of the form $(j, 1)$) for cups $j$
   with priority level $\le \ell$. Moreover, the thresholds in
   $Q_{t^*}$ must remain present in $Q_t$, since by the definition of
   $t^*$ no cups with priority level $\le \ell$ are dequeued during
   steps $t^* + 1, \ldots, t$.

   Since $Q_{t^*} \subseteq Q_t$ and $Q_{t^*}$ contains only
   thresholds for cups with priority level $\le \ell$, the total number of
   thresholds in $Q_t$ for cups with priority level $> \ell$ is at
   most $|Q_t| - |Q_{t^*}|$. In other words, the only way that a large
   number of cups with priority level $> \ell$ can be queued after
   step $t$ is if the size of $Q$ varies by a large amount during
   steps $t_0, \ldots, t$.

   Although $t - t_0$ may be very large compared to $k$ (e.g. $\poly
   n$) we show that the amount by which $|Q|$ varies during steps
   $t_0, \ldots, t$ is guaranteed to be small as a function of $k$,
   bounded above by $O(\sqrt{kq\log n})$. This means that, out of the
   $k$ cups with priority level $\ge \ell$ in $Q_t$, at most
   $O(\sqrt{kq \log n})$ of them can have priority level $\ell + 1$ or larger.

   \paragraph{The influence property: bounding the rate at which $|Q|$ varies}
   The main tool in order to analyze the rate at which $Q$'s size
   varies is to analyze sequences of steps based on their
   \defn{influence}. For sequence of steps $I$, the influence of $I$
   is defined to be $\sum_{j = 1}^n \min(1, c_j(I))$, where $c_j(I)$
   is the amount of water poured into each cup $j$ during interval
   $I$. We show that, for any priority level $\ell$ and for any step
   interval $I$ with influence $2 rq$ for some $r$, either
   $r = O(\log n)$, or two important properties are guaranteed to hold
   with high probability:
   \begin{itemize}
   \item \textbf{Step interval $I$ crosses thresholds in at least
       $r$ cups with priority level $\ell$.} This is true of any
     interval $I$ with influence at least $2qr$ by a simple
     concentration-bound argument.
   \item \paragraph{The size of $Q$ varies by at most $O(\sqrt{q r
       \log n})$ during step interval $I$} The key here is to show
     that, during each subinterval $I' \subseteq I$, the number of
     thresholds crossed by the filler is within $O(\sqrt{q r \log n})$
     of $|I'|$. In order to do this, we take advantage of the initial
     random offsets $r_j$ that are placed into each cup by the
     algorithm. If the filler puts some number $c_j(I')$ of units of
     water into a cup $j$ during $I'$, then the cup $j$ will
     deterministically cross $\lfloor c_j(I') \rfloor$ thresholds, and
     with probability $c_j(I') - \lfloor c_j(I') \rfloor $ will cross
     one additional threshold (with the outcome depending on the
     random value $r_j$). Since the influence of $I'$ is at most
     $2rq$, we know that $\sum_j (c_j(I') - \lfloor c_j(I') \rfloor)
     \le 2rq$. That is, if we consider only the threshold crossings
     that are not certain, then the number of them is a sum of
     independent 0-1 random variables with mean at most $2rq$. By a
     Chernoff bound, this number varies from its mean by at most
     $O(\sqrt{q r \log n})$, with high probability in $n$.
   \end{itemize}
   Combined, we call these the \defn{influence property}. By a union
   bound, the influence property holds with high probability on all
   sub-sequences of steps during the cup game, and for all values $r$.
   
   The influence property creates a link between the number of cups
   with priority level $\ell$ that cross thresholds during a sequence
   of steps $I$, and the amount by which $|Q|$ varies during steps
   $I$. Applying this link with $r = k + 1$ to steps $t_0, \ldots, t$,
   as defined above, implies that $|Q|$ varies by at most
   $O(\sqrt{qk \log n})$ during steps $t_0, \ldots, t$. This, in turn,
   bounds the number of queued cups with priority level $\ell + 1$ or
   larger by $O(\sqrt{qk \log n})$ after step $t$, completing the
   analysis.

   \paragraph{Extending the analysis to the multi-processor cup game}
   The primary difficulty in analyzing the multi-processor cup game
   (i.e., when $p > 1$) is that the emptier must remove water from $p$
   different cups, even if almost all of the water in the system
   resides in fewer than $p$ cups. For example, the emptier may
   dequeue a cup $j$ even though there are up to $p - 1$ other
   higher-priority cups that are still queued; furthermore, each of
   these higher-priority cups may contribute a large number of heavy
   thresholds to the queue $Q$.

   We solve this issue by leveraging recent bounds on backlog for
   the $p$-processor cup game \cite{Kuszmaul20}, which prove that the
   deterministic greedy emptying algorithm achieves backlog $O(\log
   n)$. This can be used to ensure that, for any $p - 1$ cups that are
   queued, each of them can only contribute a relatively small number
   of thresholds to the queue $Q$. These ``miss-behaving'' thresholds
   can then be absorbed into the algorithm analysis.
  
   \paragraph{Nearly matching lower bounds on tail size}
   Our lower-bound constructions extend the techniques used in past
   works for backlog \cite{Kuszmaul20, BenderFaKu19, DietzRa91} in
   order to apply similar ideas to tail size. One of the surprising
   features of our lower bounds is that they continue to be nearly
   tight even in the multi-processor case --- the same is not known to
   be true for backlog. We defer further discussion of the lower
   bounds to Section \ref{sec:lowerbounds}.

   \paragraph{Lower bounds against unending guarantees}
   Finally, we consider the question of whether the analysis of the
   asymmetric smoothed greedy algorithm can be extended to offer
   an \defn{unending guarantee}, i.e., a guarantee that for any step $t$,
   no matter how large, there a high probability at step $t$ that the
   backlog and tail size are small. 

   We show that, without the use of resource augmentation, unending
   guarantees are not possible for the asymmetric smoothed greedy
   algorithm, or, more generally, for any monotone memoryless emptying
   algorithm. Lower bounds against unending guarantees have previously
   been shown for the multi-processor cup game \cite{Kuszmaul20}, but
   remained open for the single-processor cup game.

The filling strategy, which we call the \defn{fuzzing algorithm}, has
a very simple structure: the filler spends a large number (i.e.,
$n^{\tilde{\Theta}(n)}$) of steps randomly placing water in multiples
of $1/2$ into cups $1, 2, \ldots, n$. The filler then disregards a
random cup, which for convenience we will denote by $n$, and spends a
large number of steps randomly placing water into the remaining cups
$1, 2, \ldots, n - 1$. The filler then disregards another random cup,
which we will call cup $n - 1$ ,and spends a large number of steps
randomly placing water into cups $1, 2, \ldots, n - 2$, and so on. We
call the portion of the algorithm during which the filler is focusing
on cups $1, 2, \ldots, i$ the \defn{$i$-cup phase}.

Rather than describe the analysis of the fuzzing algorithm (which is
somewhat complicated), we instead give an intuition for why the
algorithm works. For simplicity, suppose the emptier follows the
(standard) smoothed greedy emptying algorithm.

Between the $i$-cup phase and the $(i - 1)$-cup phase, the filler
disregards a random cup (that we subsequently call cup
$i$). Intuitively, at the time that cup $i$ is discarded, there is a
roughly $50\%$ chance that cup $i$ has more fill than the average of
cups $1, 2, \ldots, i$. Then, during the $(i - 1)$-cup phase, there is
a reasonably high probability that the filler \emph{at some point}
manages to make all of cups $1, 2, \ldots, i - 1$ have almost equal
fills to one-another. At this point, the emptier will choose to empty
out of cup $i$ instead of cups $1, 2, \ldots, i - 1$. The fact that
the emptier neglects cups $1, 2, \ldots, i - 1$ during the step, even
though the filler places $1$ unit of water into them, causes their
average fill to increase by $1 / (i - 1)$. Since this
happens with constant probability in every phase, the result is that,
by the beginning of the $\sqrt{n}$-cup phase there are $\sqrt{n}$ cups
each with expected fill roughly
$$\Omega\left(\frac{1}{n} + \frac{1}{n - 1} + \cdots + \frac{1}{\sqrt{n} + 1}\right) = \Omega(\log n).$$

Formalizing this argument leads to several interesting technical
problems. Most notably, the cups $1, 2, \ldots, i - 1$ having
\emph{almost} equal fills (rather than exactly equal fills) may not be
enough for cup $i$ to receive the emptier's attention. Moreover, if we
wish to analyze the \emph{asymmetric} smoothed greedy algorithm or,
more generally, the class of monotone memoryless algorithms, then cups
are not necessarily judged by the emptying algorithm based on their
fill heights, and may instead be selected based on an essentially
arbitrary objective function that need not treat cups
symmetrically. These issues are handled in Section \ref{sec:unending}
by replacing the notion of cups $1, 2, \ldots, i - 1$ having almost
equal fills as each other with the notion of cups $1, \ldots, i - 1$
reaching a certain type of specially designed equilibrium state that
interacts well with the emptier.

   \section{Algorithm analysis}\label{sec:analysis}

   In this section, we give the full analysis of the $p$-processor
   asymmetric smoothed greedy algorithm. The main result of the
   section is Theorem \ref{thm:main}, which bounds the tail size of
   the game by $O(\log n \log \log n + p \log p)$ for the first $\poly
   n$ steps of the game with high probability in $n$.
   
   In addition to using the conventions from Section \ref{sec:alg} we
   find it useful to introduce one additional notation: for a sequence
   of steps $I$, define $c_j(I)$ to be the amount of water placed into
   cup $j$ during $I$. We also continue to use the convention from
   Section \ref{sec:technical} that $q$ is a large constant multiple
   of $\log \log n$, and that each cup $j$ is assigned a
   \defn{priority level} given by $\lfloor p_j \cdot q\rfloor +
   1$. 

   Recall that a cup $j$ crosses a threshold $(j, i)$ whenever the
   fill of cup $j$ increases from some quantity $f < i$ to some
   quantity $f' \ge i$ for $i \in \mathbb{N}$. A key property of the
   smoothed greedy algorithm, which was originally noted by Bender et
   al. \cite{BenderFaKu19}, is that the number of threshold crossings
   across any sequence of steps can be expressed using a sum of
   independent 0-1 random variables.\footnote{Note that, when counting
     the number of threshold crossings across a sequence of steps,
     the same threshold $(j, i)$ may get crossed multiple times, and
     thus contribute more than $1$ to the count.}  This remains true
   for the asymmetric smoothed greedy algorithm, and is formalized
   in Lemma \ref{lem:smoothing}.
   \begin{lemma}[Counting threshold crossings]
   For a sequence of steps $I$, and for a cup $j$, the number of
   threshold crossings in cup $j$ is $\lfloor c_j(I) \rfloor + X_j$,
   where $X_j$ is a 0-1 random variable with mean $c_j(I) - \lfloor
   c_j(I) \rfloor$. Moreover, $X_1, X_2, \ldots, X_n$ are independent.
   \label{lem:smoothing}
   \end{lemma}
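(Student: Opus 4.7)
The plan is to prove the lemma by introducing a \emph{pseudo-fill} $\tilde{f}_j(t) := r_j + P_j(t)$, where $P_j(t)$ denotes the total amount of water the filler has placed into cup $j$ by the end of step $t$. The actual fill satisfies $f_j(t) = \tilde{f}_j(t) - E_j(t)$, where $E_j(t) \in \mathbb{N}$ counts the number of times cup $j$ has been emptied by time $t$. Because the emptier never removes water from a cup with fill less than $1$, every emptying operation removes exactly $1$ unit, so $f_j$ and $\tilde{f}_j$ differ only by an integer at every time, i.e., they always share the same fractional part.

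First I would argue that upward integer crossings of $f_j$ during $I$ are in bijection with upward integer crossings of $\tilde{f}_j$ during $I$. Since $\tilde{f}_j$ is non-decreasing, the number of integers it crosses equals the number of times its fractional part ``wraps around'' (from just below $1$ to just above $0$). Because the fractional parts of $f_j$ and $\tilde{f}_j$ agree at every moment, every wrap of $\tilde{f}_j$ is also a wrap of $f_j$, and each such wrap of $f_j$ is an upward (not downward) crossing of an integer, since emptying events change the integer part by $-1$ without affecting the fractional part and hence cannot produce wraparounds. So the two crossing counts are equal.

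Next I would count the crossings of $\tilde{f}_j$ directly. Let $s_j = P_j(\text{start of }I)$; because the filler is oblivious, $s_j$ is deterministic. Then $\tilde{f}_j$ rises monotonically from $r_j + s_j$ to $r_j + s_j + c_j(I)$ during $I$. Writing $r'_j = (r_j + s_j) \bmod 1$ and $\delta_j = c_j(I) - \lfloor c_j(I) \rfloor$, the number of integers strictly crossed equals $\lfloor c_j(I) \rfloor + \mathbf{1}[r'_j + \delta_j \ge 1]$. Since $r_j$ is uniform on $[0,1)$ and $s_j$ is a deterministic shift, $r'_j$ is also uniform on $[0,1)$, so $X_j := \mathbf{1}[r'_j + \delta_j \ge 1]$ is Bernoulli with mean $\delta_j = c_j(I) - \lfloor c_j(I) \rfloor$, as claimed. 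Independence of $X_1,\ldots,X_n$ is immediate because each $X_j$ is a measurable function of $r_j$ alone, and the $r_j$'s are mutually independent by construction.

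The main obstacle is the bijection argument: one must be careful to verify that the fractional-part-preserving nature of emptying (relying on the fact that the emptier only removes from cups with fill $\geq 1$) truly prevents ``extra'' or ``missing'' crossings, and to rule out measure-zero edge cases where the fill lands exactly on an integer (which happens with probability $0$ because $r_j$ is continuously distributed). Once this bijection is pinned down, the computation reducing the crossing count to $\lfloor c_j(I) \rfloor + X_j$ is essentially a direct calculation with uniform random variables, and independence is automatic from the setup.
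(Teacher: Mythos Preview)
Your proposal is correct and follows essentially the same approach as the paper: both arguments observe that because the emptier only removes water in integer units from cups with fill $\ge 1$, the fractional part of each cup's fill is unaffected by the emptier and hence determined entirely by $r_j$ plus the (deterministic, since the filler is oblivious) cumulative water placed; both then count crossings by noting that the fractional fill at the start of $I$ is uniform in $[0,1)$, yielding $\lfloor c_j(I)\rfloor$ guaranteed crossings plus one Bernoulli crossing with mean $c_j(I)-\lfloor c_j(I)\rfloor$, with independence inherited from the independence of the $r_j$'s. Your pseudo-fill bijection is a slightly more explicit packaging of the paper's one-line observation that ``the behavior of the emptier cannot affect when thresholds are crossed,'' but the substance is identical.
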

   \begin{proof}
     Recall that the emptier only removes water from a cup $j$ if cup
     $j$ contains at least $1$ unit. Moreover, the emptier always
     removes exactly $1$ unit of water from cups. Since threshold
     crossings in each cup $j$ depend only on the fractional amount of
     water (i.e., the amount of water modulo $1$) in the cup, the
     behavior of the emptier cannot affect when thresholds are crossed
     within each cup.

     Let $t_0$ be the final step prior to interval $I$. For each cup
     $j$, the fractional amount of water in the cup at the beginning
     of interval $I$ is
     \begin{equation}
       r_j + c_j([1, t_0]) \bmod 1.
       \label{eq:randomfractional}
     \end{equation}

     Since $r_j$ is uniformly random in $[0, 1]$, it follows that
     \eqref{eq:randomfractional} is as well. The first $c_j(I) -
     \lfloor c_j(I) \rfloor$ units of water poured into cup $j$ during
     interval $I$ will therefore cross a threshold with probability
     exactly $c_j(I) - \lfloor c_j(I) \rfloor$. The next $\lfloor
     c_j(I) \rfloor$ units of water placed into cup $j$ are then
     guaranteed to cause exactly $\lfloor c_j(I) \rfloor$ threshold
     crossings. The number of crossings in cup $j$ during the step
     sequence is therefore $\lfloor c_j(I) \rfloor + X_j$, where $X_j$
     is 0-1 random variable with mean $c_j(I) - \lfloor c_j(I)
     \rfloor$, and where the randomness in $X_j$ is due to the random
     initial offset $r_j$. Because $r_1, r_2, \ldots, r_n$ are
     independent, so are $X_1, X_2, \ldots, X_n$.
   \end{proof}

   One consequence of Lemma \ref{lem:smoothing} is that, if a sequence
   of steps $I$ has a large influence $s$, then each priority level
   $\ell$ will have at least $\Omega(s / q)$ cups that cross
   thresholds during interval $I$ (recall that $q$ is the number of
   priority levels).
   \begin{lemma}[The influence property, part 1]
     Consider a sequence of steps $I$ with influence $s$. Let $\ell$
     be a priority level. With high probability in $n$, at least
     $\frac{s}{2q} - O(\log n)$ distinct cups with priority level
     $\ell$ cross thresholds during the sequence of steps $I$.
     \label{lem:lots-of-k-crossings}
   \end{lemma}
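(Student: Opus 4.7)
The plan is to reduce the lemma to a Chernoff bound on a sum of independent $\{0,1\}$-valued indicators---one per cup---using Lemma~\ref{lem:smoothing} to compute the probability that each individual cup crosses a threshold and using the independence of the random priorities $p_j$ to isolate priority level~$\ell$.

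First, I would use Lemma~\ref{lem:smoothing} to observe that the probability cup~$j$ crosses at least one threshold during $I$ is exactly $\min(1, c_j(I))$: if $c_j(I) \geq 1$, the lemma guarantees $\lfloor c_j(I)\rfloor \geq 1$ crossings deterministically; if $c_j(I) < 1$, then $\lfloor c_j(I)\rfloor = 0$ and the number of crossings equals a $\{0,1\}$-valued random variable $X_j$ with mean $c_j(I)$. These per-cup indicator events depend only on the independent initial offsets $r_j$ and on the filler's (oblivious) schedule, so they are mutually independent across $j$.

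Next, let $Y_j$ denote the indicator that cup~$j$ both has priority level $\ell$ and crosses at least one threshold during $I$. Since each $p_j$ is drawn uniformly from $[0,1)$ independently of the $r_j$'s, of the filler, and of the other priorities, we have $\E[Y_j] = \min(1, c_j(I))/q$, and the $Y_j$'s are mutually independent. Consequently $Y := \sum_j Y_j$ is exactly the count of distinct priority-level-$\ell$ cups that cross a threshold during $I$, and $\mu := \E[Y] = s/q$ by the definition of influence.

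Finally, I would apply a multiplicative Chernoff bound, $\Pr[Y < \mu/2] \leq \exp(-\mu/8)$, and split into cases. When $\mu \geq C \log n$ for a sufficiently large constant $C$ (chosen based on the desired polynomial failure probability), this bound gives $Y \geq \mu/2 = s/(2q)$ with high probability in $n$. When $\mu < C \log n$, the target lower bound $s/(2q) - O(\log n) = \mu/2 - O(\log n)$ is nonpositive for a suitably large choice of the hidden constant, so the claim holds trivially since $Y \geq 0$. The only real thing to be careful about is the joint independence of the offsets $r_j$, the priorities $p_j$, and the (oblivious) filler, which is what makes the Chernoff hypothesis valid; once that is verified, the rest is routine.
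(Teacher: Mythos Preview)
Your proposal is correct and follows essentially the same argument as the paper: both compute $\Pr[\text{cup }j\text{ crosses a threshold}]=\min(1,c_j(I))$ via Lemma~\ref{lem:smoothing}, multiply by the independent $1/q$ chance of having priority level $\ell$ to get a sum of independent indicators with mean $s/q$, and then split into the trivial case $s/q = O(\log n)$ versus the Chernoff case $s/q \ge c\log n$. Your explicit emphasis on the joint independence of the $r_j$'s, the $p_j$'s, and the oblivious filler is a useful clarification that the paper leaves implicit.
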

   \begin{proof}
     By Lemma \ref{lem:smoothing}, the probability that cup $j$
     crosses at least one threshold during step sequence $I$ is
     $\min(c_j(I), 1)$, independently of other cups $j'$. The number
     $X$ of distinct cups that cross thresholds during interval $I$ is
     therefore a sum of independent indicator random variables with
     mean $s$, where $s$ is the influence of $I$. Since each cup has
     probability $\frac{1}{q}$ of having priority level $\ell$, the
     number $Y$ of cups with priority level $\ell$ that cross
     thresholds in interval $I$ is a sum of independent indicator
     random variables with mean $\frac{s}{q}$.

     If $s / q \le O(\log n)$, the number of distinct cups with
     priority level $\ell$ to cross thresholds is at least $0 \ge \frac{s}{2q} - O(\log n)$ trivially. Suppose, on the other hand, that $s
     / q \ge c \log n$ for a sufficiently large constant $c$. Then by
     a Chernoff bound,
     $$\Pr\left[Y < \frac{s}{2q}\right] \le \exp\Big[- \frac{s}{8q}  \Big] \le \frac{1}{n^{c / 8}},$$
     completing the proof.
   \end{proof}

   The proofs of the preceding lemmas have not needed to explicitly
   consider the effect of there being a potentially large number $p$
   of processors. In subsequent proofs, the multi-processor case will
   complicate the analysis in two ways. First, the emptier may
   sometimes dequeue a cup, even when there are more than $p$ heavy
   thresholds in the queue (this can happen when the heavy thresholds
   all belong to a set of fewer than $p$ cups). Second, and similarly,
   the emptier may sometimes be unable to remove a full $p$ thresholds
   from the queue $Q$, even though $|Q| > p$ (this can happen if of
   the thresholds in $Q$ belong to a set of fewer than $p$ cups). It
   turns out that both of these problems can be circumvented using the
   fact that the emptying algorithm achieves small backlog. In
   particular, this ensures that no single cup can ever contribute
   more than $O(\log \log n + \log p)$ thresholds to $Q$:
   \begin{lemma}[K. \cite{Kuszmaul20}]
     In any multi-processor cup game of $\poly n$ length, the
     asymmetric smoothed greedy algorithm achieves backlog $O(\log
     \log n + \log p)$ after each step, with high probability in $n$.
     \label{lem:Kuszmaul20}
   \end{lemma}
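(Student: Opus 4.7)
The plan is to reduce this lemma to the backlog bound proved in \cite{Kuszmaul20} for the (symmetric) smoothed greedy algorithm. The reduction rests on the observation that the asymmetric smoothed greedy algorithm differs from the symmetric version only in how it breaks ties among light thresholds, and that the backlog analysis of \cite{Kuszmaul20} is insensitive to this choice.

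First I would observe that, whenever the asymmetric smoothed greedy algorithm is selecting a cup with fill at least $2$ to empty, it applies the identical rule as the symmetric smoothed greedy algorithm (empty from the $p$ fullest cups whose fills are at least $2$). Thus the only algorithmic divergence between the two variants occurs when the emptier, after exhausting all heavy thresholds, selects among cups with fills in $[1,2)$. In the symmetric version this choice is made by fill height; in the asymmetric version, by the priorities $p_j$. In either case, the result is simply that some particular cup $j$ with fill in $[1, 2)$ has its fill decremented by $1$.

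Next I would verify that the backlog analysis of \cite{Kuszmaul20} depends only on: (i) the smoothing lemma (Lemma \ref{lem:smoothing} above), which follows from the random offsets $r_j$ and is independent of the priorities $p_j$; and (ii) the greedy discipline on heavy thresholds, which both variants share. The \cite{Kuszmaul20} argument bounds backlog by considering, over a suitable window, the number of heavy-threshold crossings produced by the filler and showing that the emptier removes heavy thresholds at least as fast as they accumulate beyond $O(\log\log n + \log p)$. Crucially, since the filler is oblivious, the water $c_j(I)$ poured into each cup over any interval $I$ is fixed in advance and identical in both variants; therefore the distribution of heavy-threshold crossings is identical, and the concentration bounds underlying \cite{Kuszmaul20} apply verbatim.

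The main obstacle to worry about is that divergent light-threshold choices could, in principle, change \emph{which} cups cross heavy thresholds, since a cup emptied at light level by one variant might be left to accumulate by the other. The way to dispatch this is to observe that \cite{Kuszmaul20}'s analysis is filler-conditional rather than cup-identity-dependent: it bounds backlog via the total count of heavy crossings in the system, not via a per-cup accounting. Once one checks (with a short inductive invariant on the multi-set of heavy-threshold crossings over time) that any greedy-on-heavy algorithm satisfying Lemma \ref{lem:smoothing} produces the same distribution of total heavy-threshold crossings against an oblivious filler, the bound $O(\log\log n + \log p)$ on backlog transfers directly to the asymmetric smoothed greedy algorithm, completing the proof.
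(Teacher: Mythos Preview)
Your approach is essentially the paper's: both defer to Theorem~5.1 of \cite{Kuszmaul20} and assert that the analysis there carries over to the asymmetric variant because the two algorithms differ only in how they select among cups with fill in $[1,2)$. The paper's own proof is in fact a single sentence to this effect, so your first two paragraphs already match it.

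Your third paragraph, however, over-reaches. The claim that ``any greedy-on-heavy algorithm satisfying Lemma~\ref{lem:smoothing} produces the same distribution of total heavy-threshold crossings against an oblivious filler'' is not correct, and no short inductive invariant will rescue it: Lemma~\ref{lem:smoothing} pins down the \emph{total} number of threshold crossings per cup (light and heavy together), but whether a given crossing is heavy depends on the integer part of the cup's fill at that moment, which is sensitive to the emptier's earlier light-threshold choices. Two greedy-on-heavy emptiers that diverge on a light threshold can genuinely produce different sequences of heavy crossings downstream. Fortunately you do not need this claim. The transfer works for the simpler reason you already identified in your second paragraph: the argument in \cite{Kuszmaul20} uses only the smoothing property of the random offsets and the fact that the emptier always prefers the fullest cups among those with fill $\ge 2$; it is agnostic to what happens among cups in $[1,2)$. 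So drop the third paragraph and you are done.
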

   \begin{proof}
     This follows from Theorem 5.1 of \cite{Kuszmaul20}. Although
     \cite{Kuszmaul20} considers the smoothed greedy algorithm (rather
     than the asymmetric smoothed greedy algorithm), the analysis
     applies without modification.
   \end{proof}
   
   Using Lemma \ref{lem:Kuszmaul20} as a tool to help in the case of
   $p > 1$, we now return to the analysis approach outlined in Section
   \ref{sec:technical}.

   \begin{rem}
     The proof of Lemma \ref{lem:Kuszmaul20} given in
     \cite{Kuszmaul20} is highly nontrivial. We remark that, although
     Lemma \ref{lem:Kuszmaul20} simplifies our analysis, there is also
     an alternative lighter weight approach that one can use in place
     of the lemma. In particular, one can begin by analyzing the
     \defn{$h$-truncated cup game} for some sufficiently large $h \le
     O(\log \log n + \log p)$. In this game, the height of each cup is
     deterministically bounded above by $h$, and whenever the height
     of a cup exceeds $h$, $1$ unit of water is removed from the cup
     (and that unit does not count as part of the emptier's turn). The
     $h$-truncated cup game automatically satisfies the backlog
     property stated by Lemma \ref{lem:Kuszmaul20}, allowing for it to
     be analyzed without requiring the lemma. The analysis can then be
     used to bound the backlog of the $h$-truncated cup game to at
     most $h / 2$ with high probability (using the analysis by
     \cite{Kuszmaul20} of the greedy algorithm, applied to the
     $\tilde{O}(\log n + p)$ cups in the tail). It follows that with
     high probability, the $h$-truncated cup game and the (standard)
     cup game are indistinguishable. This means that the
     high-probability bounds on tail size for the $h$-truncated cup
     game carry over directly to the the standard cup game.
   \end{rem}
   
   The next lemma shows that, even though many threshold crossings may
   occur in a sequence of steps $I$, the size of the queue $Q$ varies by only
   a small amount as a function of the influence of $I$.

   \begin{lemma}[The influence property, part 2]
     Consider a sequence of steps $I$ with influence at most $s$
     during a game of length at most $\poly n$. For each step $t \in
     I$, let $Q_t$ denote the queue after step $t$. With
     high probability in $n$,
     $$\Big||Q_{t_1}| - |Q_{t_2}|\Big| \le O(\sqrt{s \log n} + \log n + p (\log \log n + \log p))$$
          for all subintervals $[t_1, t_2] \subseteq I$.
     \label{lem:bolus-variance}
   \end{lemma}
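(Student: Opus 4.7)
The plan is to decompose $|Q_{t_2}| - |Q_{t_1}|$, for a subinterval $I' = (t_1,t_2]\subseteq I$, as the number $A$ of threshold crossings caused by the filler during $I'$ minus the number $D$ of thresholds removed by the emptier during $I'$, and to bound each contribution. By Lemma~\ref{lem:smoothing} applied to $I'$, we have $A = \sum_j \lfloor c_j(I')\rfloor + \sum_j X_j$, where the $X_j$ are independent $0$-$1$ random variables with $\sum_j \E[X_j] = \sum_j(c_j(I') - \lfloor c_j(I')\rfloor) \le \mathrm{influence}(I') \le s$. Since $\sum_j c_j(I') = p(t_2-t_1)$, this yields $A = p(t_2-t_1) + Y(I')$, where $Y(I')$ is a centered sum of independent indicators of total variance at most $s$. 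A standard Chernoff/Bernstein bound then gives $|Y(I')| \le O(\sqrt{s\log n} + \log n)$ with high probability in $n$.

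The emptier removes at most $p$ thresholds per step, so $D \le p(t_2-t_1)$, with equality whenever $Q$ (just after the filler acts) contains at least $p$ distinct cups. Let $t^\star$ be the largest step in $I'$ at which strictly fewer than $p$ thresholds were dequeued; if no such step exists, set $t^\star = t_1$. If $t^\star > t_1$, then $Q$ after the filler at step $t^\star$ has fewer than $p$ distinct cups; since Lemma~\ref{lem:Kuszmaul20} bounds the backlog with high probability by $h = O(\log\log n + \log p)$, each cup contributes at most $h$ thresholds to $Q$, and hence $|Q_{t^\star}| \le (p-1)h = O(p(\log\log n + \log p))$. During $(t^\star, t_2]$ the emptier is at full capacity by choice of $t^\star$, so $|Q_{t_2}| - |Q_{t^\star}| = Y((t^\star, t_2])$ exactly.

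Combining the two directions: when $|Q_{t_2}| \ge |Q_{t_1}|$, if $t^\star = t_1$ then $|Q_{t_2}| - |Q_{t_1}| = Y(I')$ is bounded directly, while if $t^\star > t_1$ we bound $|Q_{t_2}| - |Q_{t_1}| \le |Q_{t_2}| \le |Q_{t^\star}| + |Y((t^\star,t_2])| \le O(p(\log\log n+\log p) + \sqrt{s\log n}+\log n)$. In the reverse direction, $|Q_{t_1}| - |Q_{t_2}| = D - A \le p(t_2-t_1) - A = -Y(I') \le |Y(I')| \le O(\sqrt{s\log n}+\log n)$. A union bound over the at most $\poly n$ subintervals of $I$, together with the high-probability event of Lemma~\ref{lem:Kuszmaul20}, extends the guarantee simultaneously to every $[t_1,t_2]\subseteq I$.

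The main obstacle is the multi-processor wrinkle: whenever $Q$ has fewer than $p$ distinct cups, the emptier cannot fill all $p$ slots, breaking the clean identity $D = p(t_2-t_1)$. Resolving it hinges on the observation that at such an ``under-performing'' step the queue is already small, since Lemma~\ref{lem:Kuszmaul20}'s backlog bound caps each cup's contribution to $Q$ at $h$; this is precisely what produces the additive $p(\log\log n + \log p)$ term in the statement, and is the only place in the argument where the single-processor analysis has to be patched for $p > 1$.
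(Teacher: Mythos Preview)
Your proposal is correct and follows essentially the same approach as the paper: both bound the deviation of the number of threshold crossings from $p|I'|$ via a Chernoff-type bound on the independent indicators from Lemma~\ref{lem:smoothing}, handle the shrinking case directly, and handle the growing case by splitting at the last step where the emptier dequeues fewer than $p$ thresholds (the paper calls these ``removal-unfriendly'' steps) and invoking the backlog bound of Lemma~\ref{lem:Kuszmaul20} to cap $|Q|$ there. The union bound over the $\poly n$ subintervals and the conclusion are identical.
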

   \begin{proof}

     We begin with a simpler claim:
     \begin{claim}
       For any subinterval $I' \subseteq I$, the number of threshold
       crossings during $I'$ is within $O(\sqrt{s \log n} + \log
       n)$ of $p|I'|$ with high probability in $n$.
       \label{clm:threshold-variance}
     \end{claim}
     \begin{proof}
     Because $I$ has influence at most $s$ so does $I'$. Lemma
     \ref{lem:smoothing} tells us that, during $I'$, the number of
     threshold crossings $X$ that occur satisfies $\E[X] =
     p|I'|$. Moreover, $X$ satisfies $X = A + \sum_{j = 1}^n X_j$,
     where $A$ is a fixed value and the $X_j$'s are independent 0-1
     random variables, each taking value $1$ with probability $c_j(I')
     - \lfloor c_j(I') \rfloor \le \min(c_j(I'), 1)$. Note that $I'$
     has influence $\sum_j \min(c_j(I'), 1) \le s$, and thus
     $\E\left[\sum_{j = 1}^n X_j\right] \le s$. By a multiplicative
     Chernoff bound, it follows that for $\delta < 1$,
     $$\Pr[|X - \E[X]| \ge \delta s] \le 2\exp\left[-\delta^2 s / 3\right].$$

     Set $\delta = c \sqrt{\log n} / \sqrt{s}$ for a sufficiently
     large constant $c$. If $\delta > 1$, then $s \le O(\log n)$ and
     $|X - \E[X]|$ is deterministically $O(\log n)$. Otherwise,
     $$\Pr[|X - \E[X]| \ge c \sqrt{s \log n}] \le 2\exp\left[- c^2
       \log n / 3\right] \le \frac{1}{\poly n}.$$ Since $\E[X] =
     p|I'|$, it follows that the number of threshold crossings in
     interval $I'$ is within $O(\sqrt{s \log n} + \log n)$ of
     $p|I'|$ with high probability in $n$.
     \end{proof}

     Applying a union bound to the $\poly n$ subintervals of steps $I'
     \subseteq I$, Claim \ref{clm:threshold-variance} tells us that
     every subinterval $I' \subseteq I$ contains $p|I'| \pm O(\sqrt{s
       \log n} + \log n)$ threshold crossings with high probability in
     $n$.

     To complete the proof, consider some subinterval $I' \subseteq I$
     and let $m$ be the (absolute) amount by which $Q$ changes in size
     during $I'$. We wish to show that $m \le O(\sqrt{s \log n} + \log n + p (\log \log n + \log p))$.

     Suppose that $|Q|$ shrinks by $m$  during $I'$. Then the number of threshold crossings in
     subinterval $I'$ would have to be at most $p|I'| - m$, meaning
     that $m \le O(\sqrt{s \log n} + O(\log n))$, as desired.

     
     Suppose, on the other hand, that $|Q|$ grows by $m$ during
     $I'$. Call a step $t \in I'$ \defn{removal-friendly} if the
     emptier removes $p$ full units of water during step $t$ (i.e.,
     prior to the emptier removing water, there are at least $p$ cups
     with height $1$ or greater). By Lemma \ref{lem:Kuszmaul20}, with
     high probability in $n$, the size of $Q$ after any
     removal-unfriendly step is at most $O(p (\log \log n + \log
     p))$. If $I'$ consists exclusively of removal-friendly steps,
     then the filler must cross at least $p |I'| + m$ threshold
     crossings in order to increase $|Q|$ by $m$; thus $m \le
     O(\sqrt{s \log n} + \log n)$. On the other hand, if $I'$ contains
     at least one removal-unfriendly step, then there must be some
     last such step $t$ in $I' = (t_0, t_1]$. Since $|Q| \le O(p (\log \log n + \log
     p))$ after step $t$ but $|Q| \ge m$ after step $t_1$, it must be that during the
  steps $(t, t_1]$ the size of $Q$ increases by at least $m - O(p
(\log \log n + \log p))$. Since the interval $(t, t_1)$ consists entirely of removal-friendly steps, we can apply the reasoning from the first case (i.e., the case of only removal-friendly steps) to deduce that $m \le O(\sqrt{s \log n} +
\log n + p(\log \log n + \log p))$, completing the proof.
   \end{proof}

   Combined, Lemmas \ref{lem:lots-of-k-crossings} and
   \ref{lem:bolus-variance} give the influence property discussed in
   Section \ref{sec:technical}. Using this property, we can now
   relate the number of queued cups with priority level $\ge \ell$ to
   the number of queued cups with priority level $\ge \ell + 1$ for a
   given priority level $\ell \in \mathbb{N}$.

   \begin{lemma}[Accumulation of low-priority cups]
     Let $t \le \poly n$, let $K$ be the number of queued cups with
     priority level $\ge \ell$ after step $t$, and $m$ be the number
     of queued cups with priority level $\ge \ell + 1$ after step
     $t$. With high probability in $n$,
     \begin{equation}
       m \le O(\sqrt{q K \log n} + \log n + p (\log \log n + \log p)).
       \label{eq:queued-thresholds}
     \end{equation}
     \label{lem:queued-thresholds}
   \end{lemma}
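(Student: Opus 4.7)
The plan is to follow the outline from Section \ref{sec:technical} essentially verbatim, with a small case split at the start and careful bookkeeping in the multi-processor case. First, if $K \le c \log n$ for a suitably large constant $c$, then $m \le K \le O(\log n)$ trivially, so I will assume $K \ge c \log n$ throughout.

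Next, I would define $t_0 \le t$ to be the most recent step such that at least $K+1$ distinct priority-level-$\ell$ cups cross thresholds during $[t_0, t]$; in the degenerate case where no such $t_0$ exists (i.e., fewer than $K+1$ level-$\ell$ cups ever cross thresholds in the entire game), I would take $t_0 = 1$. By maximality the subinterval $[t_0+1, t]$ contains at most $K$ distinct level-$\ell$ cup crossings, so Lemma \ref{lem:lots-of-k-crossings}, applied via a union bound over all polynomially many subintervals of the game, yields that its influence $s'$ satisfies $s'/(2q) - O(\log n) \le K$. Absorbing the single extra step $t_0$, which contributes at most $p$ units of water, gives that the influence of $[t_0, t]$ is at most $s = O(qK + q \log n + p)$.

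Because at least one level-$\le \ell$ cup must be dequeued in $[t_0, t]$ (else $Q_t$ would contain $K+1$ queued level-$\ell$ cups), I would let $t^*$ be the last step in $[t_0, t]$ that dequeues such a cup, and analyze $Q_{t^*}$. By the asymmetric smoothed greedy tie-breaking rule, every threshold in $Q_{t^*}$ is either ``left over'' from some other cup handled by the emptier at step $t^*$, or is a level-$\le \ell$ light threshold. In the multi-processor case at most $p-1$ such ``leftover'' cups can exist (since one of the emptier's slots went to the level-$\le \ell$ light dequeue), and Lemma \ref{lem:Kuszmaul20} caps each of their contributions at $O(\log \log n + \log p)$ thresholds; I will collect these into a set $H \subseteq Q_{t^*}$ with $|H| \le O(p(\log \log n + \log p))$.

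Since no level-$\le \ell$ cup is dequeued in $(t^*, t]$, every level-$\le \ell$ light threshold of $Q_{t^*} \setminus H$ persists into $Q_t$, so every level-$> \ell$ threshold of $Q_t$ lies in $H$ or in $Q_t \setminus Q_{t^*}$. Using the identity $|Q_t \setminus Q_{t^*}| = (|Q_t| - |Q_{t^*}|) + |Q_{t^*} \setminus Q_t|$ and the observation $|Q_{t^*} \setminus Q_t| \le |H|$ (only $H$-thresholds can disappear), I would conclude
\[ m \le 2|H| + \bigl(|Q_t| - |Q_{t^*}|\bigr). \]
Applying Lemma \ref{lem:bolus-variance} to $[t_0, t]$ with influence $s$ bounds $|Q_t| - |Q_{t^*}|$ by $O(\sqrt{s \log n} + \log n + p(\log \log n + \log p))$, and substituting $s = O(qK + q \log n + p)$ together with the observations $\sqrt{q}\log n = O(\sqrt{qK \log n})$ whenever $K \ge \log n$ and $\sqrt{p\log n} \le p + \log n$ yields the claimed bound. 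The main technical subtlety I expect is precisely the multi-processor analysis of $Q_{t^*}$: the clean single-processor statement ``$Q_{t^*}$ consists only of level-$\le \ell$ light thresholds'' fails when $p > 1$, and carefully isolating the exceptional set $H$ and capping its size via Lemma \ref{lem:Kuszmaul20} is the nontrivial step that ties the multi-processor analysis back to the single-processor intuition.
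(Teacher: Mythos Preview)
Your approach is essentially the paper's, with a dual parameterization: the paper defines the key interval $I_K$ as the shortest interval ending at $t$ with influence at least $2qK$ and then invokes Lemma~\ref{lem:lots-of-k-crossings} to get many level-$\ell$ crossings, whereas you define $[t_0,t]$ by requiring $K+1$ level-$\ell$ crossings and then invert Lemma~\ref{lem:lots-of-k-crossings} (via the union bound you describe) to bound the influence. The $t^*$ argument, the isolation of the $\le p-1$ exceptional cups via Lemma~\ref{lem:Kuszmaul20}, and the final application of Lemma~\ref{lem:bolus-variance} are all the same. One pleasant byproduct of your parameterization is that the existence of a level-$\le\ell$ dequeue is automatic in the non-degenerate case (since $K+1$ level-$\ell$ cups cross thresholds but only $K$ level-$\ge\ell$ cups are queued at time $t$), so you sidestep the paper's separate ``no dequeue'' case.

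There is, however, one small gap: in your degenerate case $t_0=1$ (fewer than $K+1$ level-$\ell$ crossings occur in all of $[1,t]$), the sentence ``at least one level-$\le\ell$ cup must be dequeued in $[t_0,t]$'' need not hold, so $t^*$ may fail to exist. The paper handles its analogous case ($I_K=[1,t]$) explicitly: since $Q$ starts empty, Lemma~\ref{lem:bolus-variance} applied to $[1,t]$ bounds $|Q_t|$ itself by $B_K$, and hence $m\le|Q_t|\le B_K$. You should add this one-line case before invoking $t^*$.
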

   \begin{proof}
     For each $k \in \{1, 2, \ldots, n\}$, define $I_k$ to be the
     smallest step-interval ending at step $t$ and with influence at
     least $2qk$ (or define $I_k = [1, t]$ if the total influence of
     $[1, t]$ is less than $2qk$). By Lemmas
     \ref{lem:lots-of-k-crossings} and \ref{lem:bolus-variance}, each
     $I_k$ satisfies the following two properties with high
     probability in $n$:
     \begin{itemize}
     \item \textbf{The many-crossings property.} Either $I_k$ is all of $[1, t]$, or the number of
       priority-level-$\ell$ cups to cross thresholds during $I_k$ is
       at least $k - O(\log n)$.
     \item \textbf{The low-variance property. }The size of $Q$ varies
       by at most $B_{k} := O(\sqrt{qk\log n} + \log n + p (\log \log
       n + \log p))$ during $I_k$. To see this, we use the fact that $I_k$ has
       influence at most $2qk + p$, which by Lemma
       \ref{lem:bolus-variance} limits the amount by which $Q$ varies
       to $$O(\sqrt{(qk + p)\log n} + \log n + p (\log \log n + \log
       p)).$$ Since $\sqrt{(qk + p)\log n} \le \sqrt{q k \log n} +
       \sqrt{p \log n} \le \sqrt{qk \log n} + p + \log n$, it follows
       that the amount by which $Q$ varies during $I_k$ is at most
       $O(\sqrt{qk\log n} + \log n + p (\log \log n + \log p))$, with
       high probability in $n$.
     \end{itemize}
     Collectively, this pair of properties is called the
     \defn{influence property}. By a union bound, the influence
     property holds for all $k \in \{1, 2, \ldots, n\}$ with high
     probability in $n$. It follows that the property also holds for
     $k = K$ (recall $K$ is the number of queued cups with priority
     level $\ge \ell$ after step $t$).
     
     If $I_{K} = [1, t]$, then the total size of $Q$ can be at most
     $B_{K}$ (since $Q$ begins as size $0$ at the start of
     $I_{K}$). It follows that $m \le
     B_{K}$, meaning that \eqref{eq:queued-thresholds} is
     immediate. In the rest of the proof, we focus on the case in
     which $I_{K} \neq [1, t]$.

     If the emptier never dequeues any priority-level-$\ell$ cups
     during $I_{K}$, then by the many-crossings property, there are at
     least $K - O(\log n)$ priority-level-$\ell$ cups queued after
     step $t$. The number of queued cups with priority levels greater
     than $\ell$ is therefore at most $O(\log n)$, as desired.

     Suppose, on the other hand, that there is at least one step in
     $I_K$ at which the emptier dequeues a priority-level-$\ell$ (or
     smaller) cup, and let $t^*$ be the last such step. Let $Q_{t^*}$
     be the set of queued thresholds after $t^*$, and let $Q_t$ be the
     set of queued thresholds after step $t$. Call a threshold in
     $Q_{t^*}$ \defn{permanent} if it is a light threshold for a cup
     with priority level $\le \ell$. All permanent thresholds in
     $Q_{t^*}$ are guaranteed to also be in $Q_t$, since $t^*$ is the
     final step in $I_K$ during which the emptier dequeues such a
     threshold. The non-permanent thresholds in $Q_{t^*}$ must reside
     in a set of fewer than $p$ cups, since the emptier would rather
     have dequeued one of them during step $t^*$ than to have dequeued
     a cup with priority level $\le \ell$. By Lemma
     \ref{lem:Kuszmaul20}, the number of non-permanent thresholds in
     $Q_{t^*}$ is therefore at most $O(p\log\log n + p \log p)$, with
     high probability in $n$.

     By the low-variance property, the sizes of $Q_t$ and $Q_{t^*}$
     differ by at most $B_K$. It follows that the permanent thresholds
     in $Q_{t^*}$ make up all but
     $$B_K + O(p \log \log n + p \log p) \le O(B_K)$$ of the
     thresholds in $Q_{t}$. 

     Recall that $Q_t$ contains thresholds from $K$ different cups with
     priority level $\ge \ell$. It follows that $Q_{t^*}$ contains
     permanent thresholds from at least $K - O(B_K)$ different cups with
     priority level $\ge \ell$. The permanent thresholds in $Q_{t^*}$
     are all for cups with priority level $\le \ell$, however. Thus
     there are at least $K - O(B_K)$ cups with priority level $\ell$
     that are queued after step $t^*$ and remain queued after step
     $t$. This bounds the number of queued cups after step $t$ with
     priority level greater than $\ell$ by at most $O(B_K)$,
     completing the proof.

   \end{proof}

   Since the number of queued cups with priority level $\ge 1$ can
   never exceed $n$, Lemma \ref{lem:queued-thresholds} allows for us
   to bound the number of queued cups with priority level $\ge \ell$
   inductively. We argue that if $q$ is a sufficiently large constant
   multiple of $\log \log n$, then the number of queued cups with
   priority level $\ge q/2$ never exceeds $O(\log n \log \log n + p (\log \log n +
   \log p))$, with high probability in $n$. This can then be used to
   obtain $(\log n \log \log n + p \log p)$-unpredictability, as
   defined in Section \ref{sec:technical}.
   \begin{lemma}[The unpredictability guarantee]
     Consider a cup game of length $\poly n$. For any step $t$, and
     for any set of cups $S$ whose size is a sufficiently large
     constant multiple of $\log n \log \log n + p \log p$, at least
     one cup in $S$ is not queued after step $t$, with high
     probability in $n$. In other words, each step $t$ in the game
     satisfies $(\log n \log \log n + p \log p)$-unpredictability.
     \label{lem:unpredictability}
   \end{lemma}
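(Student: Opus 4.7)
The plan is to first apply Lemma \ref{lem:queued-thresholds} inductively to bound the number of queued cups with priority level $\ge q/2$ at every step, and then to combine that bound with a concentration argument over the random priorities of cups in $S$. Let $N_\ell(t)$ denote the number of queued cups with priority level $\ge \ell$ after step $t$. The base case is trivial: $N_1(t) \le n$. Applying Lemma \ref{lem:queued-thresholds} at every priority level $\ell \le q$ and every step $t \le \poly n$, and taking a union bound over the $q \cdot \poly n$ events (at only a polynomial cost in the failure probability), I can assume that the recursion
$$N_{\ell+1}(t) \le C \sqrt{q \, N_\ell(t) \log n} + C \log n + C p(\log\log n + \log p)$$
holds simultaneously for all $(\ell, t)$, with high probability in $n$.

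Next I would solve this recursion. Ignoring the additive terms, the dominant behaviour is $N_{\ell+1} \lesssim \sqrt{q N_\ell \log n}$, which has fixed point $A^\star = \Theta(q \log n) = \Theta(\log n \log\log n)$ and halves the exponent of $n$ at each iteration. A direct induction gives
$$N_\ell(t) \le O(q \log n) \cdot n^{1/2^{\ell-1}} + O(\log n + p(\log\log n + \log p)).$$
Choosing $q$ to be a sufficiently large constant multiple of $\log\log n$ and taking $\ell = q/2 = \Theta(\log\log n)$ makes the factor $n^{1/2^{\ell-1}}$ collapse to $O(1)$, yielding
$$N_{q/2}(t) \le O(\log n \log\log n + p \log p)$$
for every step $t \le \poly n$, with high probability in $n$.

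To finish, fix any set $S$ with $|S| \ge c(\log n \log\log n + p \log p)$ for a sufficiently large constant $c$. Because each priority $p_j$ is uniform on $[0,1)$ independently across cups, each cup independently has priority level $\ge q/2$ with probability at least $1/2$. The count of cups in $S$ with priority level $\ge q/2$ is therefore a sum of independent Bernoulli variables with mean $\ge |S|/2 = \Omega(\log n)$; a Chernoff bound gives that at least $|S|/3$ cups of $S$ have priority level $\ge q/2$ with high probability in $n$. Taking $c$ large enough that $|S|/3$ strictly exceeds the upper bound on $N_{q/2}(t)$ from the previous paragraph, at least one cup in $S$ with priority level $\ge q/2$ cannot be queued after step $t$, proving the unpredictability guarantee.

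The main obstacle I anticipate is controlling the low-order additive terms ``$+ C \log n + Cp(\log\log n + \log p)$'' cleanly across the $\Theta(\log\log n)$ iterations of the recursion. The key observation is that once $N_\ell(t)$ drops to $O(q \log n)$ the square-root recursion becomes contractive (it takes the form $N_{\ell+1} \le N_\ell / 2 + L$ with $L = O(\log n + p\log p)$ in an appropriate regime), so the accumulated additive slack over the $\Theta(\log\log n)$ steps remains bounded by $O(\log n \log\log n + p \log p)$ and does not spoil the final bound.
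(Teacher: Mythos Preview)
Your proposal is correct and follows essentially the same approach as the paper: iterate Lemma~\ref{lem:queued-thresholds} for $\Theta(\log\log n)$ levels to bound $N_{q/2}(t)$ by $O(\log n\log\log n + p\log p)$, then apply a Chernoff bound on the random priorities of cups in $S$. The paper handles the additive-term bookkeeping you flag as an obstacle slightly more cleanly, by setting $X = q\log n + \log n + p\log\log n + p\log p$, writing the recursion as $m_{\ell+1} \le O(\sqrt{X m_\ell} + X)$, and then tracking the ratio $\delta_\ell = m_\ell/X$, which satisfies $\delta_{\ell+1} \le O(\sqrt{\delta_\ell})$ until $\delta_\ell = O(1)$; this sidesteps the need to carry explicit additive terms through the induction.
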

   \begin{proof}

     Suppose the number of priority levels $q$ is set to be a
     sufficiently large constant multiple of $\log \log n$. For $\ell
     \in \{1, 2, \ldots, q\}$, let $m_\ell$ denote the maximum number
     of queued cups with priority level $\ge \ell$ during the game. We
     claim that $m_{q/2} \le O(\log n \log \log n + p \log p)$ with
     high probability in $n$.
     
     By Lemma \ref{lem:queued-thresholds}, for any $1 \le \ell < q$,
     \begin{equation*}
       m_{\ell + 1} \le O(\sqrt{q m_\ell \log n} + \log n + p \log
       \log n + p \log p),
     \end{equation*}
     with high probability in $n$. If we define $X = q \log n + \log n
     + p \log \log n + p \log p$, then it follows that,
     \begin{equation}
       m_{\ell + 1} \le O(\sqrt{X m_\ell} + X).
       \label{eq:shrinkX}
     \end{equation}

     For each priority level $\ell$, let $\delta_{\ell}$ be the
     ratio
     $\delta_\ell = \frac{m_\ell}{X}$. By \eqref{eq:shrinkX}, for any $1 \le \ell < q$, either
     $\delta_{\ell + 1} \le O(1)$ or
     $$\delta_{\ell + 1} \le O(\sqrt{\delta_\ell}).$$ It follows that,
     as long as $q$ is a sufficiently large constant multiple of $\log
     \log n$, then $\delta_{q/ 2} \le O(1)$, and thus
     $m_{q/ 2} \le O(X)$.

     Now consider a set of cups $S$ of size at least $cX$, where $c$
     is a sufficiently large constant. By a Chernoff bound, the number
     of cups with priority level greater than $q / 2$ in $S$ is at
     least $cX / 4$, with high probability in $n$. Since $c$ is a
     sufficiently large constant, and since $m_{q/2} \le O(X)$, this
     implies that $S$ contains more than $m_{q/2}$ cups with priority
     level $q/2$ or greater. Thus the priority-level-$\ell$ cups in
     $S$ cannot all be queued after step $t$.

     Note that $X \le O(\log n \log \log n + p \log p)$. Thus every
     set $S$ whose size is a sufficiently large constant multiple of
     $\log n \log \log n + p \log p$ has high probability of
     containing at least one non-queued cup after step $t$, completing
     the proof of $\log n \log \log n + p \log p$-unpredictability.
   \end{proof}

   To complete the analysis of the algorithm, we must formalize the
   connection between the unpredictability guarantee and tail size.   

   \begin{lemma}
     Suppose that a (randomized) emptying algorithm for the $p$-processor cup game
     on $n$ cups satisfies $R$-unpredictability in the steps of any
     game of polynomial length, and further satisfies the
     ``greediness property'' that whenever there is a cup of height
     at least $2$, the algorithm empties out of such a cup. Then in
     any game of polynomial length, the tail size $s$ after each step
     $t$ is $O(R + p)$ with high probability in $n$.
     \label{lem:reduceproperty}
   \end{lemma}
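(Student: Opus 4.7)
The plan is to prove the contrapositive: assume that at some step $t \le \poly n$ the tail size exceeds $c(R+p)$ with probability greater than $1/n^{c_0}$ under some oblivious filler $F$, for a sufficiently large constant $c$ to be chosen, and derive a violation of $R$-unpredictability at a slightly later step $t+\tau$. Let $c'$ be the constant produced by $R$-unpredictability with failure probability $1/n^{c_0}$, so that for every fixed set $S$ of $c'R$ cups and every $u \le \poly n$, $\Pr[\text{every cup of }S\text{ is queued after step }u] \le 1/n^{c_0}$. The goal is to exhibit a single oblivious filler $F'$ and a single set $S$ of size $c'R$ with $\Pr[\text{every cup of }S\text{ is queued after step }t+\tau] > 1/n^{c_0}$, giving the contradiction.

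The construction of $F'$ is short. Set $S = \{1,2,\ldots,c'R\}$ and $\tau = \lceil c'R/p \rceil$. The filler $F'$ mimics $F$ verbatim for the first $t$ steps; in each of the subsequent $\tau$ steps, $F'$ uses its $p$ units of water to deposit $1$ unit into as many still-unfilled cups of $S$ as possible, dumping any surplus into junk cups outside $S$. Since $p\tau \ge |S|$, every cup of $S$ receives exactly $1$ unit of water over the window $(t,t+\tau]$. Because $F'$ agrees with $F$ on the first $t$ steps, the event $E := \{\text{tail after step }t \ge c(R+p)\}$ has the same probability under $F'$ as under $F$, and therefore exceeds $1/n^{c_0}$.

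Next, condition on $E$ and analyze the window $(t,t+\tau]$ under $F'$. Each cup that leaves the tail must be emptied at least once in the process, so at most $p$ cups can leave the tail per step; hence the tail size stays at least $c(R+p)-p\tau \ge p$ throughout the window, provided $c$ is chosen sufficiently large (e.g., $c \ge c'+2$, using $p\tau \le c'R+p$). By the greediness property, whenever the tail has at least $p$ cups, all $p$ of the emptier's empties at that step target tail cups; so no cup outside the tail is touched during $(t,t+\tau]$. A short case analysis on $j \in S$ then shows $j$'s fill is $\ge 1$ at step $t+\tau$. If $j$ was in the tail at step $t$, then the emptier can only remove from $j$ while $j$ has fill $\ge 2$, so each such removal leaves $j$ at fill $\ge 1$, while $F'$'s deposits only raise the fill; by a step-by-step induction $j$'s fill never dips below $1$. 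If $j$ was not in the tail at step $t$, then $j$ is untouched by the emptier until $F'$'s $1$-unit deposit raises $j$'s fill by $1$, and any subsequent emptying of $j$ can happen only from fill $\ge 2$, leaving $j$ at fill $\ge 1$. Consequently, the fixed set $S$ of $c'R$ cups is entirely queued after step $t+\tau$ with probability $\ge \Pr[E] > 1/n^{c_0}$, contradicting the choice of $c'$.

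The main obstacle is the multi-processor bookkeeping. One must adopt the right interpretation of the greediness property for $p>1$ (namely, that when at least $p$ cups of fill $\ge 2$ exist, all $p$ processors target such cups, so no cup of fill $<2$ is ever touched), justify carefully the ``tail shrinks by at most $p$ per step'' accounting, and balance the constants $c$ and $\tau$ so that the window $(t,t+\tau]$ is wide enough for $F'$ to distribute $1$ unit to each cup of $S$ yet narrow enough for the tail to remain $\ge p$. The single-processor case drops out cleanly as a special case of this framework.
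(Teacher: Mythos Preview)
Your proof is correct and follows essentially the same approach as the paper: assume a large tail at step $t$, extend the filler for $\lceil c'R/p\rceil$ additional steps to deposit one unit into each cup of a fixed set $S=\{1,\ldots,c'R\}$, and use greediness to argue that no cup drops below fill $1$ during this window, contradicting $R$-unpredictability at step $t+\tau$. The paper's write-up is slightly more streamlined in that it replaces your case analysis with the single observation that the set of cups with fill $\ge 1$ is monotonically increasing over the window (since the emptier only touches cups of fill $\ge 2$), but the content is the same.
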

   \begin{proof}
     Consider a polynomial $f \in \poly n$, let $c$ be a large constant,
     and let $t \le \poly n$. Suppose for contradiction that there is a
     filling strategy such that, at time $t$ the tail size is at least
     $cR + p$ with probability at least $1/f(t)$. Since the tail size is at
     least $cR + p$ at time $t$, then during each of steps $I = \{t + 1,
     \ldots, t + \lceil cR / p \rceil\}$, the emptier removes water
     exclusively from cups with fills at least $2$. This means that the
     set of cups containing $1$ or more units of water is monotonically
     increasing during steps $t + 1, \ldots, t + \lceil cR / p
     \rceil$. If the filler places $1$ unit of water into each cup $1, 2,
     \ldots, cR$ during steps $t + 1, \ldots, t + \lceil cR / p \rceil$,
     then it follows that each of cups $1, 2, \ldots, cR$ has fill $1$ or
     greater after step $t + \lceil cR / p \rceil$.
     
     The preceding construction guarantees that, with probability at
     least $1/f(x)$, all of cups $1, 2, \ldots, cR$ contain at least
     $1$ unit of water at step $t + \lceil cR / p \rceil$. If $c$ is a
     sufficiently large constant, this violates $R$-unpredictability
     at step $t + \lceil cR / p \rceil$, a contradiction.
   \end{proof}

   Using the unpredictability guarantee, we can now complete the
   algorithm analysis:
   
   \begin{theorem}
     Consider a $p$-processor cup game that lasts for $\poly n$ steps
     and in which the emptier follows the asymmetric smoothed greedy
     algorithm. Then with high probability in $n$, the number of cups
     containing $2$ or more or units of water never exceeds $O(\log n
     \log \log n + p \log p)$ and the backlog never exceeds $O(\log
     \log n + \log p)$ during the game.
     \label{thm:main}
   \end{theorem}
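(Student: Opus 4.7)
The plan is to assemble Theorem \ref{thm:main} from the three main ingredients already developed in the section, with essentially no new technical content required.

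First, I would dispatch the backlog bound: Lemma \ref{lem:Kuszmaul20} asserts exactly that, in any $\poly n$-length game played against the asymmetric smoothed greedy algorithm, the backlog remains $O(\log \log n + \log p)$ after each step with high probability in $n$. So the backlog half of the theorem is an immediate invocation of that lemma.

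Next, I would verify that the asymmetric smoothed greedy algorithm satisfies the hypotheses of the reduction Lemma \ref{lem:reduceproperty}. The ``greediness property'' required there (that whenever some cup has height at least $2$, the emptier chooses such a cup) is immediate from the algorithm description in Section \ref{sec:alg}: the emptier first exhausts cups with fill $\ge 2$ before ever touching a cup with fill in $[1,2)$. Meanwhile, Lemma \ref{lem:unpredictability} establishes that the algorithm satisfies $R$-unpredictability at every step $t \le \poly n$ for $R = \log n \log \log n + p \log p$. A union bound over the $\poly n$ steps preserves ``high probability in $n$'' by the usual convention.

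Finally, I would apply Lemma \ref{lem:reduceproperty} with this value of $R$. The lemma's conclusion is that the tail size after each step is $O(R + p) = O(\log n \log \log n + p \log p + p)$, and absorbing the trailing $p$ into $p \log p$ gives the tail bound claimed. Combining this with the backlog bound from Lemma \ref{lem:Kuszmaul20} (via a final union bound over the two high-probability events) yields the theorem.

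There is no substantive obstacle at this stage; all of the combinatorial difficulty has already been absorbed into Lemmas \ref{lem:queued-thresholds}, \ref{lem:unpredictability}, \ref{lem:reduceproperty}, and \ref{lem:Kuszmaul20}. The only minor care needed is bookkeeping the union bounds so that the per-step high-probability guarantees from the unpredictability lemma, the reduction lemma, and the cited backlog bound can all be asserted simultaneously over the entire $\poly n$-step game, which is routine since each constituent event already holds with probability at least $1 - n^{-c}$ for $c$ of our choosing.
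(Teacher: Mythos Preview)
Your proposal is correct and matches the paper's own proof essentially line for line: invoke Lemma~\ref{lem:unpredictability} for $(\log n\log\log n + p\log p)$-unpredictability, feed it through Lemma~\ref{lem:reduceproperty} (after noting the greediness property is immediate from the algorithm description) to get the tail-size bound, and cite Lemma~\ref{lem:Kuszmaul20} for the backlog bound. The paper additionally mentions an alternative route to the backlog bound via the deterministic greedy analysis on the $\tilde{O}(\log n + p)$ tail cups, but this is optional and your direct appeal to Lemma~\ref{lem:Kuszmaul20} is exactly what the paper does as its primary argument.
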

   \begin{proof}
     By Lemma \ref{lem:unpredictability}, $(\log n \log \log n + p
     \log p)$-unpredictability holds for each step in any game of
     length $\poly n$. By Lemma \ref{lem:reduceproperty}, it follows
     that the tail size remains at most $O(\log n \log \log n + p \log
     p)$, with high probability in $n$, during any game of length
     $\poly n$.

     Lemma \ref{lem:Kuszmaul20} bounds the height of the fullest cup
     in each step by $O(\log \log n + \log p)$ with high probability
     in $n$. Alternatively, by considering a cup game consisting only
     of the cups that contain greater than $2$ units of water, the
     analysis of the deterministic greedy emptying algorithm (see
     \cite{AdlerBeFr03} for $p = 1$ and \cite{Kuszmaul20} for $p > 1$)
     on $O(\log n \log \log n + p \log p)$ cups implies that no cup
     ever contains more than $O(\log \log n + \log p)$ water, with
     high probability in $n$.
   \end{proof}

   \section{Lower Bounds}\label{sec:lowerbounds}

   In this section we prove that the asymmetric smoothed greedy
   algorithm achieves (near) optimal tail size within the class of
   backlog-bounded algorithms.

   An emptying algorithm is \defn{backlog bounded} if the algorithm
   guarantees that the backlog never exceeds $f(n)$ for some polynomial
   $f$. This is a weak requirement in that that the greedy algorithm
   achieves a bound of $O(\log n)$ on backlog \cite{AdlerBeFr03,
     Kuszmaul20}.  The main result in this section states that any
   backlog-bounded emptying algorithm must allow for a tail size of
   $\tilde{\Omega}(\log n + p)$ with probability $\frac{1}{\poly
     n}$. The lower bound continues to hold, even when the
   height-requirement for a cup to be in the tail is increased to an
   arbitrarily large constant (rather than $2$). When $p = 1$, the
   lower bound also applies to non-backlog-bounded emptying algorithms (Lemma \ref{lem:lowerbound1}).

   \begin{theorem}
     Let $c_1$ be a constant, and suppose $n \ge p + c_2$ for
     sufficiently large constant $c_2$. For any backlog-bounded
     emptying strategy, there is a $\poly n$-step oblivious randomized
     filling strategy that gives the following guarantee. After the
     final step of the filling strategy, there are at least
     $\Omega(\log n / \log \log n + p)$ cups with fill $c_1$ or
     greater, with probability at least $\frac{1}{\poly n}$.
     \label{thm:lowerbound}
   \end{theorem}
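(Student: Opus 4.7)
The plan is to prove the two parts of the lower bound, $\Omega(\log n / \log \log n)$ and $\Omega(p)$, by separate oblivious randomized filling strategies; the construction achieving the larger of the two quantities then establishes the claimed additive bound up to constant factors. For the single-processor case (in which the theorem drops the backlog-boundedness hypothesis), backlog-boundedness will be automatic since the greedy emptier achieves $O(\log n)$ backlog on $n$ cups, so the same construction applies.

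For the $\Omega(\log n / \log \log n)$ component, the plan is to adapt the recursive Dietz--Raman / Bender--Farach-Colton--Kuszmaul lower-bound construction, previously used to force \emph{one} cup to backlog $\Omega(\log \log n)$, into a tail-size construction. The key observation is that only a constant-depth version of that recursion is needed to force \emph{one} cup in a small group of $\polylog n$ cups to reach the constant fill $c_1$, with constant success probability, in $\polylog n$ steps. I would then run $\Theta(\log n / \log \log n)$ such sub-constructions in sequence on \emph{disjoint} groups of cups (fitting inside $n$ total cups, since $c_1$ is constant). Each group's sub-construction is scheduled so that the cup it pushes above $c_1$ is above $c_1$ at the same final step $T$ of the overall filling strategy. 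Against any oblivious emptier, the independence of random choices across groups, combined with the backlog-boundedness of the emptier (which prevents it from draining previously-conquered cups too quickly), yields, by a Chernoff / union-bound argument, $\Omega(\log n / \log \log n)$ successful groups simultaneously above $c_1$ at step $T$ with probability at least $1/\poly n$.

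For the $\Omega(p)$ component, the construction is more direct: the filler chooses a random set $S \subseteq [n]$ of $\Theta(p)$ cups and pours water (in multiples of $1/2$) uniformly into cups of $S$ over $\poly n$ steps. Because the emptier is limited to $p$ cups per step and does not know $S$ ahead of time, an averaging / concentration argument shows that with probability at least $1/\poly n$, the emptier's emptying pattern is uncorrelated enough with $S$ that $\Omega(p)$ cups in $S$ accumulate fill $\geq c_1$. Alternatively, this can be extracted from the $p$-processor backlog lower-bound machinery already present in \cite{Kuszmaul20}.

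The main obstacle is the sequential composition in the $\Omega(\log n / \log \log n)$ construction: a naive concatenation of rounds lets the emptier drain the ``conquered'' cups of earlier rounds while defending against the current round, which would erase the cumulative gain. Resolving this is precisely where backlog-boundedness is essential, since a cup whose fill is bounded by $\poly n$ cannot be drained faster than one unit per step. The technical challenge is to calibrate the round lengths and target offsets so that (i) every round finishes close enough to the final step $T$ that only a constant fraction of conquered cups can drain below $c_1$ before step $T$, while (ii) there is still enough total time for $\Theta(\log n / \log \log n)$ rounds inside a $\poly n$-step filling strategy. Making this trade-off formal, and preserving a $1/\poly n$ success probability through all the rounds simultaneously, is where the bulk of the work lies.
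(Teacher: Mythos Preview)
Your proposal has genuine gaps in both components.

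\textbf{The $\Omega(\log n / \log\log n)$ part.} Your sequential-disjoint-groups plan does not work, and the invocation of backlog-boundedness to protect ``conquered'' cups is a non-sequitur. A conquered cup has fill only $c_1 = O(1)$, so the emptier can drain it in $O(1)$ steps; backlog-boundedness (which merely says no cup exceeds $\poly n$) offers no protection here. Your proposed fix of making every round finish ``close enough to $T$'' forces the rounds to overlap, at which point you are splitting one unit of water per step among $\Theta(\log n/\log\log n)$ simultaneous sub-games against an emptier who can concentrate its full unit on whichever one it likes; it is not at all clear that this succeeds. The paper sidesteps the entire draining problem by running a \emph{single} $(p,k,c)$-filling strategy with $k = \log n/\log\log n$: in step $i$ it spreads water equally over $k - p(i-1)$ cups, then drops $p$ of them at random. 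If the filler happens to drop exactly the cups the emptier touched at every step (probability $\ge 1/k^{k} \ge 1/\poly n$), the surviving $\Theta(k/e^{c})$ cups all simultaneously have fill $\Theta(c)\ge c_1$ after only $O(k)$ steps. No sequential composition, no protection of earlier rounds, is needed, and backlog-boundedness is not used at all in this part.

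\textbf{The $\Omega(p)$ part.} Pouring uniformly into a random set $S$ of size $\Theta(p)$ does not obviously yield $\Omega(p)$ cups above $c_1$: the emptier learns $S$ after one step and can then run greedy on $S$, which keeps total water in $S$ at $O(|S|)$ and hence keeps most cups below any fixed constant. Your alternative of ``extracting this from the backlog machinery in \cite{Kuszmaul20}'' is not a proof. The paper's construction is quite different and is where backlog-boundedness actually enters. The filler reserves cups $1,\ldots,p-1$ and puts a full unit into each of them every step; the remaining single unit is used to play a $(1, ce^{c}, c)$-filling strategy (a constant-size single-processor game) on cups $p,\ldots,n$. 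Backlog-boundedness forces the emptier to spend $p-1$ of its $p$ removals on the reserved cups in all but $O(p\cdot f(n))$ ``dirty'' steps; in every clean step the single-processor mini-game proceeds unimpeded and succeeds with constant probability. When it succeeds, the filler swaps the newly-tall cup into the reserved block, where it is henceforth \emph{protected} because it receives a full unit every step. Repeating for $p-1$ phases with a randomly chosen stopping time in each phase makes each phase end in a successful mini-game with constant probability, independently, so a Chernoff bound gives $\Omega(p)$ protected cups of height $\ge c_1$ with probability $1 - e^{-\Omega(p)}$. The protection mechanism---moving a conquered cup into a slot that is refilled every step---is exactly what your sequential scheme lacks.
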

   
   To prove Theorem \ref{thm:lowerbound}, we begin by describing a
   simple lower-bound construction that we call the $(p, k,
   c)$-filling strategy. The strategy is structurally similar to the
   lower-bound construction for backlog given by Bender et al
   \cite{BenderFaKu19}.

   \begin{lemma}
     Let $k , c \in \mathbb{N}$  such that $c \ge 2$, $k \le n$,
     and $\frac{k}{p e^c} \ge 2$. Then there exists an $O(k)$-step
     oblivious randomized filling strategy for the $p$-processor cup
     game on $n$ cups that causes $\Omega(k / e^c)$ cups to each have
     fill at least $\Theta(c)$, with probability at least
     $\frac{1}{k^k}$. We call this strategy the $(p, k, c)$-filling
     strategy.
     \label{lem:fillingstrategy}
   \end{lemma}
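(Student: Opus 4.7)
I propose that the $(p,k,c)$-filling strategy commit, at the very start of the game, to a uniformly random chain of subsets $A_0 \supset A_1 \supset \cdots \supset A_c$, where $A_0 = \{1, \ldots, k\}$ and $|A_i| = \lceil k/e^i \rceil$. The game is divided into $c$ phases: during phase $i$ the filler pours water uniformly into $A_{i-1}$, at rate $p/|A_{i-1}|$ per cup per step, for $\Theta(|A_{i-1}|/p)$ steps. Because the $|A_i|$'s shrink geometrically, the total game length is $\sum_i \Theta(|A_{i-1}|/p) = O(k)$, matching the required step bound. Note that the assumption $k/(pe^c) \ge 2$ guarantees $|A_c| \ge 2p$, so every phase has enough cups in its active set to absorb the $p$ units of water per step.

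The key analytic move will be to reduce the randomized oblivious setting to an adaptive one. I will consider an idealized adaptive filler that uses the same phase structure but, at each phase boundary, chooses $A_i \subset A_{i-1}$ based on the fills at that moment (for instance, keeping the cups whose fills lie in the top $1/e$ fraction). The plan is to show (i) that this adaptive filler, against any deterministic response of the emptier, forces $\Omega(k/e^c)$ cups in $A_c$ to finish with fill $\Omega(c)$; and (ii) that the number of possible chains is $N \le \prod_i \binom{|A_{i-1}|}{|A_i|}$, which by the standard estimate $\binom{n}{n/e} \le e^{2n/e}$ is at most $e^{O(k)} \le k^k$ for $k$ above a modest constant. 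On any fixed tape of the emptier's randomness, the oblivious filler's random chain agrees with the adaptive filler's sequence of choices with probability at least $1/N \ge 1/k^k$, and conditional on this agreement the oblivious filler reproduces the adaptive filler's outcome, yielding the claimed success probability.

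The inductive argument behind (i) is where the real work lies. The inductive hypothesis will be that at the end of phase $i$ the adaptively chosen $A_i$ consists of cups each with fill at least $\Omega(i)$. The inductive step rests on the observation that inside each phase the emptier cannot perfectly balance the fills across $A_{i-1}$: cups must cross the fill-$1$ threshold at different moments, and the rule that only cups with fill $\ge 1$ may be emptied creates timing windows in which certain cups are unreachable. As a result, at the end of each phase the top $1/e$ fraction of $A_{i-1}$ exceeds the rest by at least $\Omega(1)$ in fill; the adaptive filler captures this gap by designating the top fraction as $A_i$, carrying the fill gain into the next phase.

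The main obstacle I expect is formalizing this per-phase fill gap. A naive analysis---in which the emptier is permitted to balance its empties uniformly across $A_{i-1}$---gives a zero gap and blocks the induction, so the argument must carefully exploit (a) the integrality of single empties and the resulting mismatch with the filler's continuous rate $p/|A_{i-1}|$, and (b) in the multi-processor case, the constraint that the $p$ empties per step come from $p$ \emph{distinct} cups, which is the point where $k/(pe^c) \ge 2$ is used. I expect this to follow the blueprint of the backlog lower-bound construction of Bender et al.~\cite{BenderFaKu19}, adapted to produce many cups with moderately high fill rather than a single cup with maximum fill.
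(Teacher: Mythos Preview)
Your per-phase inductive step has a genuine gap. You write that ``the rule that only cups with fill $\ge 1$ may be emptied creates timing windows in which certain cups are unreachable,'' but this is not a rule of the cup game: the emptier may remove (up to) $1$ unit from \emph{any} cup in each step, regardless of its current fill. That constraint is a feature of the smoothed greedy \emph{emptying} algorithm, not of the game, and the lemma must hold against an arbitrary emptier. With that premise removed, your proposed mechanism for forcing an $\Omega(1)$ fill gap between the top $1/e$ fraction of $A_{i-1}$ and the rest disappears, and the induction does not close as written. The ``integrality'' point (a) is in the right spirit but is not by itself sufficient to produce the gap you need.

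The paper's construction sidesteps the per-phase analysis entirely by shrinking the active set \emph{per step} rather than per phase. In step $i$ the filler pours $p/(k - p(i-1))$ into each of the $k - p(i-1)$ currently active cups, then drops $p$ of them uniformly at random. The ``success'' event is simply that at every step the $p$ dropped cups contain the (at most) $p$ cups touched by the emptier that step; this has probability at least $1/k^p$ per step and hence at least $1/k^k$ over the whole run. On success, every surviving cup is literally untouched, so its fill is the deterministic harmonic-type sum $\sum_{i} p/(k - p(i-1)) = \Theta(\log(k/(k-pt)))$, which reaches $\Theta(c)$ once $k - pt \approx k/e^c$. No fill-gap argument is needed.

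Your framework can be salvaged by changing the adaptive rule to ``keep the cups the emptier did not touch during the phase'': a phase of length $(1-1/e)|A_{i-1}|/p$ allows the emptier to touch at most $(1-1/e)|A_{i-1}|$ distinct cups, leaving at least $|A_{i-1}|/e$ untouched, each of which gains exactly $1-1/e$ in fill. But once you make that change you are essentially doing the paper's step-by-step construction in batches, and the per-step version is both cleaner and gives the probability bound directly.
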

   \begin{proof}
   Define the \defn{$(p, k, c)$-filling strategy} for the
   $p$-processor cup game on $n \ge k$ cups as follows. In each step
   $i$ of the strategy, the filler places $\frac{p}{k - p(i - 1)}$
   units of water into each of $k - p(i - 1)$ cups. The sets of cups
   $S_i$ used in each step $i$ are selected so that $S_{i + 1} = S_i
   \setminus \{x_1, x_2, \ldots, x_p\}$ for some random distinct $x_1,
   x_2, \ldots, x_p \in S_i$. The $(p, k, c)$-filling strategy
   completes after $t$ steps where $t = \lfloor \frac{k}{p}(1 -
   e^{-c}) \rfloor - 1$. Note that $k - pi \ge p$ for every step $i$.

   We say that the $(p, k, c)$-filling strategy \defn{succeeds} if at
   the beginning of each step $i$ none of the cups in $S_i$ have been
   touched (i.e., emptied from) by the emptier. If the $(p, k,
   c)$-filling strategy succeeds, then at the end the $i$-th step of
   the strategy there will be $k - pi$ cups each with fill
   \begin{align*}
     & \frac{p}{k} + \frac{p}{k - p} + \cdots + \frac{p}{k - p(i - 1)} \\
     & = \Theta\left(\frac{1}{k} + \frac{1}{k - 1} + \cdots + \frac{1}{k - pi + 1}\right) \\
     & = \Theta\left(\log \frac{k}{k - pi}\right),   
   \end{align*}
   where the first equality uses the fact that $k - pi \ge p$.

   Now consider the final step $t$ of a successful $(p, k, c)$-filling
   strategy. By the requirement that $\frac{k}{pe^c} \ge 2$, 
   $$k - pt = k - p \Big\lfloor \frac{k}{p}(1 - 1/e^c) \Big\rfloor - p \ge
   \frac{k}{e^c} - p \ge \frac{k}{2e^c}.$$ It follows that, after step
   $t$ of a successful $(p, k, c)$-filling strategy, there are at
   least $\Theta(k / e^c)$ cups, each with fill at least
   $$\Omega\left( \log \frac{k}{k / (2e^c)}\right) = \Omega(c).$$

   Next we evaluate the probability of a $(p, k, c)$-filling strategy
   being successful. If the first $i$ steps of the $(p, k, c)$-filling
   strategy all succeed, then the $(i + 1)$-th step has probability at
   least $\frac{1}{k^p}$ of succeeding. In particular, the emptier may
   touch up to $p$ cups $j_1, \ldots, j_p \in S_i$ during step $i$,
   and then the set $S_{i + 1} = S_i \setminus \{x_1, \ldots, x_p\}$
   has probability at least $\frac{1}{k^p}$ of removing a superset of
   those cups from $S_i$ to get $S_{i + 1}$. Since there are at most
   $k/p$ steps, the $(p, k, c)$-filling strategy succeeds with
   probability at least $\frac{1}{k^k}$.
   \end{proof}

   If we assume that $\log n / \log \log n$ is a sufficiently large
   constant multiple of $p$, then we can apply Lemma
   \ref{lem:fillingstrategy} directly to achieve a tail size of size
   $\tilde{\Omega}(\log n)$. (Furthermore, note that Lemma
   \ref{lem:lowerbound1} does not have any requirement that the
   emptier be backlog-bounded.)
   \begin{lemma}
     Let $c_1$ be a positive constant, and suppose $p \le \frac{\log
       n}{c_2\log \log n}$ for a sufficiently large constant $c_2$
     (where $c_2$ is large relative to $c_1$). Then there is an
     $O(\log n / \log \log n)$-step oblivious randomized filling
     strategy for the $p$-processor cup game on $n$ cups that causes
     $\Omega(\log n)$ cups to all have height $c_1$ or greater after
     some step $t \le \poly n$ with probability at least
     $\frac{1}{\poly n}$.
     \label{lem:lowerbound1}
   \end{lemma}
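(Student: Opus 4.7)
The plan is to directly invoke the $(p,k,c)$-filling strategy of Lemma \ref{lem:fillingstrategy} with well-chosen parameters. First, pick $c$ to be a sufficiently large constant depending on $c_1$, so that the ``fill at least $\Theta(c)$'' conclusion of Lemma \ref{lem:fillingstrategy} yields fill at least $c_1$. Then set $k := \lfloor c_3 \log n / \log \log n \rfloor$ for a constant $c_3$ to be pinned down in the verification step below.

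Second, verify the three hypotheses of Lemma \ref{lem:fillingstrategy}: the conditions $c \ge 2$ and $k \le n$ are trivial. For $k/(pe^c) \ge 2$, the assumption $p \le \log n/(c_2 \log \log n)$ yields $k/p \ge c_2 c_3 - o(1)$, and since $c_2$ is assumed large relative to $c_1$ (and hence large relative to $c$ and to $c_3 e^c$), this gives $k/p \ge 2e^c$ for all sufficiently large $n$.

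Third, apply Lemma \ref{lem:fillingstrategy} to obtain an $O(k/p) = O(\log n/\log \log n)$-step oblivious randomized filling strategy that, with probability at least $1/k^k$, ends with $\Omega(k/e^c) = \tilde{\Omega}(\log n)$ cups simultaneously at fill $\Theta(c) \ge c_1$, matching the $\tilde{\Omega}(\log n)$ tail-size bound promised in the paragraph preceding the lemma. A short calculation gives $\log(k^k) = k \log k \le (c_3 \log n/\log \log n)\cdot \log \log n = O(\log n)$, so $k^k = \poly n$ and the success probability is at least $1/\poly n$. Since the filling strategy terminates at some step $t = O(\log n/\log \log n) \le \poly n$, the conclusion holds at that step.

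The main obstacle (to the extent there is one) is really just parameter bookkeeping: one must order the constants $c_1 \ll c \ll c_3 \ll c_2$ so that $c$ is large enough to give $\Theta(c) \ge c_1$, $c_3$ is bounded so that $k^k$ stays polynomial in $n$, and $c_2$ dominates $c_3 e^c$ so that the hypothesis of Lemma \ref{lem:fillingstrategy} goes through. All of the combinatorial content is already packaged in Lemma \ref{lem:fillingstrategy}; here we only need to thread the parameters through correctly and observe that the regime $p \le \log n/(c_2 \log \log n)$ is precisely the one in which a single invocation of $(p,k,c)$-filling with $k = \Theta(\log n/\log \log n)$ succeeds with inverse-polynomial probability.
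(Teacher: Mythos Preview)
Your proof is correct and follows essentially the same approach as the paper: both arguments invoke Lemma~\ref{lem:fillingstrategy} with $k=\Theta(\log n/\log\log n)$ and a constant $c$ depending on $c_1$, verify the hypothesis $k/(pe^c)\ge 2$ using the bound on $p$, and then compute $k^k=\exp(k\log k)=\exp(O(\log n))=\poly n$. The paper simply takes $k=\log n/\log\log n$ rather than introducing your extra constant $c_3$, and your ordering $c\ll c_3$ is not actually needed (any fixed $c_3$, e.g.\ $c_3=1$, works), but these are cosmetic differences. Note also that both your argument and the paper's proof actually yield $\Omega(\log n/\log\log n)$ cups rather than the literal $\Omega(\log n)$ in the lemma statement; you correctly flag this as $\tilde\Omega(\log n)$, which is what is used downstream in Theorem~\ref{thm:lowerbound}.
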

   \begin{proof}
     Let $c \in \mathbb{N}$ be a sufficiently large constant
     compared to $c_1$. By assumption that $c_2$ is sufficiently large
     in terms of $c_1$, we may also assume that
     \begin{equation}
       \frac{\log n / \log \log n}{pe^{c}} \ge 2.
       \label{eq:c0c1}
     \end{equation}
     
     By \eqref{eq:c0c1}, we can use Lemma \ref{lem:fillingstrategy} to
     analyze the $(p, \log n / \log \log n, c)$-filling
     strategy. The strategy causes
     $$\Omega\left(\frac{\log n}{\log \log n} \cdot e^{-c}
     \right) \ge \Omega(\log n / \log \log n)$$ cups to all have
     height at least $c_1$ with probability at least
     $$\left(\log n / \log \log n\right)^{-\log n / \log \log n} \ge
     \frac{1}{\poly n}.$$
   \end{proof}

   The next lemma gives a filling strategy for achieving tail size
   $\Omega(p)$ against any backlog-bounded emptying
   strategy. Remarkably, the construction in Lemma
   \ref{lem:lowerbound2} succeeds with probability $1 -
   e^{-\Omega(p)}$ (rather than with probability $1/\poly n$).
   
   \begin{lemma}
     Let $c_1$ be a constant, and suppose $n \ge p + c_2$ for
     sufficiently large constant $c_2$. For any backlog-bounded
     emptying strategy, there is a $\poly n$-step oblivious randomized
     filling strategy that gives the following guarantee. After the
     final step of the filling strategy, there are at least
     $\Omega(p)$ cups with fill $c_1$ or greater, with probability $1
     - e^{-\Omega(p)}$.
     \label{lem:lowerbound2}
   \end{lemma}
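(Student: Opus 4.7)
My plan is to boost the $1/k^k$ success probability of Lemma \ref{lem:fillingstrategy} up to $1 - e^{-\Omega(p)}$ by combining many essentially independent runs of a small-scale filling strategy. In the regime $n \ge cp$ for a sufficiently large constant $c$, one can partition the $n$ cups into $\Theta(p)$ disjoint groups of constant size $k_0 = O(1)$, and run a base $(p', k_0, c)$-filling strategy on each group for an appropriate $p' = O(1)$. Averaging the $p$ emptier-slots across $\Theta(p)$ groups gives an effective emptier-rate of $O(1)$ per group, and the backlog-boundedness assumption prevents persistent deviation from this average; each group then fits the hypothesis of Lemma \ref{lem:fillingstrategy} and succeeds with constant probability $\Omega(1/k_0^{k_0})$. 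A Chernoff bound over the $\Theta(p)$ essentially independent groups yields $\Omega(p)$ successful groups with probability $1 - e^{-\Omega(p)}$, each contributing $\Omega(1)$ cups with fill $\ge c_1$.

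In the tight regime $p + c_2 \le n < cp$, there is not enough room to partition into $\Theta(p)$ disjoint constant-size groups, so a different construction is needed. One natural option is to use a ``uniform pour'' strategy that, at each step, pours $p/(p+1)$ units into a uniformly random $(p+1)$-subset of $[n]$ (possible since $c_2 \ge 1$). The key property of this strategy is that each cup receives strictly less than a full unit per step, so any empty spent on a just-poured cup removes at most $p/(p+1) < 1$; this forces the emptier's total removal per step to be strictly less than $p$ whenever not enough cups already carry fill $\ge 1/(p+1)$, driving steady accumulation. Combined with the symmetry of the filler's strategy (which gives equal marginal fill distribution across all cups via exchangeability, after randomly relabeling the cups to symmetrize against an asymmetric emptier) and a McDiarmid-type concentration over the many independent random $(p+1)$-subset choices, one should obtain $\Omega(p)$ cups with fill $\ge c_1$ with probability $1 - e^{-\Omega(p)}$.

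The main obstacle is controlling the fill distribution across cups against an \emph{adaptive} emptier, which could in principle concentrate its empties to drain some cups while neglecting others. I would handle this via a dichotomy: either the emptier keeps empties roughly uniform, in which case the filler's symmetry argument spreads fill across $\Omega(n) \ge \Omega(p)$ cups and Chernoff/McDiarmid concentration applies directly; or the emptier concentrates its empties non-uniformly, in which case the neglected cups accumulate fill unimpeded (and in fact would exceed the backlog bound if the concentration were too severe), once again leaving $\Omega(p)$ full cups. Formalizing this tradeoff cleanly, particularly in the tight regime where the ``groups'' must overlap substantially and the dichotomy must be threaded through a potential-function or coupling argument, is the technical crux of the proof.
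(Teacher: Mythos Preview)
Your parallel-groups approach in the $n \ge cp$ regime has a genuine gap: the success events for different groups are \emph{not} independent, and the adaptive emptier can defeat all groups simultaneously. If you run $m \approx p$ groups in parallel (placing one unit per group per step, which is forced since the filler has exactly $p$ units to distribute), the emptier also has $p$ removals per step and can, for instance, touch two cups in each of $p/2$ groups on step~1. Since the filler's $(1,k_0,c)$-strategy removes only one cup per step from its active set $S_i$, at least one touched cup survives into $S_2$ and those $p/2$ groups die immediately. On step~2 the emptier kills the remaining $p/2$ groups the same way. Backlog-boundedness is irrelevant here because the entire construction lasts only $O(k_0)=O(1)$ steps, far too short for any backlog to accumulate; your averaging heuristic (``effective emptier-rate of $O(1)$ per group'') fails precisely because the emptier is adaptive and need not spread its removals evenly. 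Your Case~2 is, as you acknowledge, only a sketch, and the dichotomy argument there faces the same difficulty of controlling an adaptive emptier over short time horizons.

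The paper's proof takes a fundamentally different, sequential route. In every step the filler places one full unit into each of cups $1,\ldots,p-1$, which over $\poly n$ steps forces the backlog-bounded emptier to spend $p-1$ of its $p$ removals there in all but $O(pf(n))$ steps. The single remaining unit of water and the single remaining emptier slot then play an effectively \emph{single-processor} game on cups $p,\ldots,n$, where a constant-size $(1,ce^c,c)$-strategy succeeds with constant probability. The filler repeats this in $p-1$ phases, each of random length, swapping each successful cup into one of the protected positions $1,\ldots,p-1$; the random stopping times make the phase outcomes independent enough for Chernoff to apply. The ``heat sink'' of saturated cups $1,\ldots,p-1$ that ties up $p-1$ emptier slots is the key idea missing from your proposal.
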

   \begin{proof}
     For the sake of simplicity, we allow for the filler to sometimes
     \defn{swap} two cups, meaning that the labels of the cups are
     interchanged.
 
     The basic building block of the algorithm is a \defn{mini-phase},
     which consists of $O(1)$ steps. In each step of a mini-phase the
     filler places $1$ unit of water into each of cups $1, 2, \ldots,
     p - 1$, and then strategically distributes $1$ additional unit of
     water among cups $p , p + 1, \ldots, n$. Using the final unit of
     water, the filler follows a $(1, c e^{c}, c)$-filling strategy on
     cups $p, p + 1, \ldots, n$, where $c$ is a sufficiently large
     constant relative to $c_1$ satisfying $n \ge p + c e^{c}$. We say
     that a mini-phase \defn{succeeds} if the emptier removes only $1$
     unit of water from cups $\{p, p + 1, \ldots, n\}$ during each
     step in the mini-phase, and the $(1, ce^c, c)$-filling strategy
     succeeds within the mini-phase. By the Lemma
     \ref{lem:fillingstrategy}, any successful mini-phase will cause
     at least one cup $j$ to have fill at least $c_1$ at the end of
     the mini-phase (and the filler will know $j$).

     Mini-phases are composed together by the filler to get
     \defn{phases}. During each $i$-th phase, the filler selects a
     random $w_i \in [1, f(n)n^2]$ and performs $w_i$ mini-phases
     (recall that $f(n)$ is the polynomial such that the emptier
     achieves backlog $f(n)$ or smaller). After the $w_i$-th
     mini-phase, the filler swaps cups $i$ and $j$, where $i$ is the
     phase number and $j$ is the the cup containing fill $\ge c_1$ in
     the event that the most recent mini-phase succeeded.  The full
     filling algorithm consists of $p - 1$ phases.

     We claim that each phase $i$ has constant probability of ending
     in a successful mini-phase (and thus swapping cup $i$ with a new
     cup $j \ge p$ having fill $\ge c_1$). Using this claim, one can
     complete the analysis as follows. If the swap in phase $i$ is at
     the end of a successful mini-phase, then after the swap, the
     (new) cup $i$ will have fill $\ge c_1$, and will continue to have
     fill $\ge c_1$ for the rest of the filling algorithm, since the
     filler puts $1$ unit in cup $i$ during every remaining step. At
     the end of the algorithm, the number of cups with fill $\ge c_1$
     is therefore bounded below by a sum of $p - 1$ independent $0$-$1$ random
     variables with total mean $\Omega(p)$. This means that the number
     of such cups with fill $\ge c_1$ is at least $\Omega(p)$ with
     probability $1 - e^{-\Omega(p)}$, as desired.

     It remains to analyze the probability that a given phase $i$ ends
     with a successful mini-phase. 

     Call a mini-phase \defn{clean} if the emptier removes $1$ unit of
     water from each cup $1, 2, \ldots, p - 1$ during each step of the
     mini-phase, and \defn{dirty} otherwise. Because each dirty
     mini-phase increases the total amount of water in cups $1, 2,
     \ldots, p - 1$ by at least $1$, and because the emptying
     algorithm prevents backlog from ever exceeding $f(n)$, there can
     be at most $O(pf(n))$ dirty mini-phases during phase $i$.

     By Lemma \ref{lem:fillingstrategy}, each mini-phase
     (independently) has at least a constant probability of either
     being dirty or of succeeding. Out of the $f(n)n^2$ possible
     mini-phases in phase $i$, there can only be $O(f(n)p) \le
     o(f(n)n^2)$ dirty mini-phases. It follows that, with probability
     $1 - e^{-\Omega(f(n)n^2)}$, at least a constant fraction of the
     possible mini-phases $s$ succeed (or would have succeeded in the
     event that $w_i$ were at least as large as $s$). Thus the
     $w_i$-th mini-phase succeeds with constant probability.
   \end{proof}

   Combining the preceding lemmas, we prove Theorem
   \ref{thm:lowerbound}.
   \begin{proof}[Proof of Theorem \ref{thm:lowerbound}]
     If $p \ge \Omega(\log n / \log \log n)$, then the theorem follows
     immediately from Lemma \ref{lem:lowerbound2}. On the other hand,
     if $p$ is a sufficiently large constant factor smaller than $\log
     n / \log \log n$, then the theorem follows from Lemma
     \ref{lem:lowerbound1}.
   \end{proof}

\section{Lower Bounds Against Unending Guarantees}
\label{sec:unending}


In this section, we prove upper and lower bounds for \defn{unending
  guarantees}, which are probabilistic guarantees that hold for each
step $t$, even when $t$ is arbitrarily large. As a convention, we will
use $f_j(t)$ to denote the fill of cup $j$ after step $t$.

The main result of the section is a lower bound showing that no
\defn{monotone stateless emptier} can achieve an unending guarantee of
$o(\log n)$ backlog.

\begin{definition}
  An emptier is said to be \defn{stateless} if the
emptier's decision depends only on the state of the cups at each
step. An emptier is said to be \defn{monotone} if the following holds:
given a state $S$ of the cups in which the emptier selects some cup
$j$ to empty, if we define $S'$ to be $S$ except that the amount of
water in some cup $i \neq j$ has been reduced, then the emptier still
selects cup $j$ in state $S'$. A \defn{monotone stateless emptier} is
any emptier that is both monotone and stateless.

The monotonicity and stateless property dictate only how the emptier
selects a cup $j$ in each step. Once a cup $j$ is selected the emptier
is permitted to either (a) remove $1$ full unit of water from that
cup, or (b) skip their turn. This decision is allowed to be an
arbitrary function of the state of the cups.
\end{definition}

We begin in Section \ref{sec:emptier-class} by showing that all
monotone stateless emptiers can be modeled as using a certain type of
\defn{score function} to make emptying decisions.

In Section \ref{sec:filling}, we give an oblivious filling strategy,
called the \defn{fuzzing algorithm}, that prevents monotone stateless
emptiers from achieving unending probabilistic guarantees of
$o(\log n)$ backlog (in fact, the filling strategy places an expected
$\Theta(n^{2/3} \log n)$ water into $\Theta(n^{2/3})$ cups, meaning
that bounds on tail size are also not viable, unless backlog is
allowed to be polynomially large). The fuzzing algorithm is named
after what is known as the \defn{fuzzing technique} \cite{fuzzing} for
detecting security vulnerabilities in computer systems -- by barraging
the system with random noise, one accidentally discovers and exploits
the structural holes of the system.

In Section \ref{sec:time} we show that the fuzzing algorithm continues
to prevent unending guarantees, even when the emptier is equipped with a
global clock, allowing for the emptier to adapt to the number of steps
that have occurred so far in the game.

Finally, in Sections \ref{sec:specific_epsilon1} and
\ref{sec:specific_epsilon2}, we determine the exact values of the
resource-augmentation parameter $\epsilon$ for which the smoothed
greedy and asymmetric smoothed greedy emptying algorithms achieve
single-processor unending guarantees. In particular, we show that the
minimum attainable value of $\epsilon$ is $2^{-\polylog n}$.

\subsection{Score-Based Emptiers}\label{sec:emptier-class}

In this section, we prove an equivalence between monotone stateless
emptiers, and what we call score-based emptiers. We then state several
useful properties of score-based emptiers.

A \defn{score-based emptier} has \defn{score functions}
$\sigma_1, \sigma_2, \ldots, \sigma_n$. When selecting which cup to
empty from, the emptier selects the cup $j$ whose fill $f_j$ maximizes
$\sigma_j(f_j)$. The emptier can then select whether to either (a)
remove $1$ full unit of water from the cup, or (b) skip their turn;
this decision is an arbitrary function of the state of the cups. The score
functions are required to be monotonically increasing functions,
meaning that $\sigma_i(a) < \sigma_i(b)$ whenever $a < b$. Moreover,
in order to break ties, all of the scores in the multiset
$\{\sigma_i(j / 2) \mid i \in [n], j \in \mathbb{Z}^+\}$ are required
to be distinct. (We only consider fills of the form $j / 2$ because in
our lower bound constructions all fills will be multiples of $1/2$.)

It is easy to see that any score-based emptier is also a monotone
stateless emptier. The following theorem establishes that the other
direction is true as well:
\begin{theorem}
  Consider cup games in which the filler always places water into cups
  in multiples of $1/2$. For these cup games, every monotone stateless
  emptying algorithm is equivalent to some score-based emptying
  algorithm.
  \label{thm:emptier-equivalence}
\end{theorem}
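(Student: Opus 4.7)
The plan is to construct the score functions from a total order on pairs (cup, fill), which itself is defined by the emptier's decisions in ``two-cup'' states. Concretely, for distinct cups $i, j$ and fills $h_i, h_j \in \{1/2, 1, 3/2, \ldots\}$, I would declare $(j, h_j) \succ (i, h_i)$ if and only if in the state where only cups $i$ and $j$ have water (at fills $h_j$ and $h_i$, respectively, with all other cups empty), the emptier selects cup $j$. For pairs sharing a cup, I would simply declare $(j, h) \prec (j, h')$ whenever $h < h'$; this is the within-cup monotonicity that any $\sigma_j$ must satisfy.

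The core of the argument is to verify that this definition yields a well-defined strict total order and that it correctly predicts the emptier's choice in every state. Transitivity is the first key check: if $(j, h_j) \succ (i, h_i)$ and $(i, h_i) \succ (k, h_k)$, I consider the three-cup state with only $i, j, k$ holding water at those fills. Monotonicity says that reducing a non-selected cup's fill to $0$ never changes the emptier's choice, so if the emptier were to pick $k$ (resp.\ $i$) in this three-cup state, zeroing the other non-selected cup would yield a two-cup state contradicting one of the two premises. Hence the emptier picks $j$, and zeroing cup $i$ afterwards witnesses $(j, h_j) \succ (k, h_k)$. A completely analogous monotonicity argument shows the two-cup order is preserved when the winner's fill is increased or the loser's is decreased, which is exactly the compatibility needed for the combined order (between-cup plus within-cup) to be transitive. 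To see that this order governs behavior in an arbitrary state $S$, I note that if the emptier picks $j$ in $S$, then zeroing each other cup's fill one at a time (and applying monotonicity after each reduction) leaves the emptier still picking $j$; restricting to any single $i \neq j$ gives the two-cup state where $j$ wins, so $(j, f_j(S)) \succ (i, f_i(S))$ for every $i$.

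Once a strict countable total order with the within-cup monotonicity property is in hand, the score functions themselves come from an order-preserving embedding of $\{(j, h) : j \in [n], \, h \in \{1/2, 1, 3/2, \ldots\}\}$ into $\mathbb{R}$. Any such embedding (for instance, a standard inductive construction exploiting the density of $\mathbb{R}$) yields values $\sigma_j(h) := \sigma((j, h))$ that are automatically strictly increasing in $h$ for each fixed $j$ and automatically pairwise distinct across all $(j, h)$, as required by the definition of a score-based emptier. The resulting score-based emptier's argmax selection then agrees with the original emptier in every state, and its post-selection choice of whether to actually empty or skip can be taken identical to the original. I expect the main obstacle to be the careful, repeated use of monotonicity in the reduction from multi-cup states to two-cup states, both to establish transitivity and to justify that the two-cup ordering actually predicts the emptier's choice in general states; the embedding into $\mathbb{R}$ is then routine.
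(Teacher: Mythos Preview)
Your proposal is correct and follows essentially the same route as the paper: define a domination relation via two-cup states, prove transitivity using monotonicity (the paper's Lemmas~\ref{lem:dominates} and~\ref{lem:transitivity}, including the case analysis when two of the three cups coincide), show this order governs the emptier's choice in arbitrary states, and then embed the resulting countable total order to obtain the score functions. Your final embedding step---an order-preserving injection of a countable linear order into $\mathbb{R}$---is in fact more careful than the paper's, which asserts the order is isomorphic to $\mathbb{N}$ (not true in general, since the domination order need not be a well-order); any countable linear order embeds into $\mathbb{Q}$, which is all that is actually required.
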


For a set of cups $1, 2, \ldots, k$, a \defn{state of the cups} is a
tuple $S = \langle S(1), S(2), \ldots, S(k)\rangle$, where $S(j)$
indicates the amount of water in cup $j$. Throughout this section we
will restrict ourselves to states where $S(j)$ is a non-negative
integer multiple of $1/2$.

In order to prove Theorem \ref{thm:emptier-equivalence}, we first
derive several natural properties of monotone stateless emptiers. We
say that the pair $(j_1, r_1)$ \defn{dominates} the pair $(j_2, r_2)$
if either (a) $j_1 = j_2$ and $r_1 > r_2$, or if (b) $j_1 \neq j_2$
and in the cup state where the only two non-empty cups are $j_1$ and
$j_2$ with $r_1$ and $r_2$ water, respectively, the emptier selects
cup $j_1$. We say that a cup $j_1$ \defn{dominates} a cup $j_2$ in a
state $S$ if $(j_1, r_1)$ dominates $(j_2, r_2)$, where $r_1$ and
$r_2$ are the amounts of water in cups $j_1$ and $j_2$, respectively,
in state $S$.

The next lemma shows that the emptiers decision in each step is
determined by which cup dominates the other cups.
\begin{lemma}
  Let $S$ be any state of the cups $1, 2, \ldots, n$, and suppose the
  emptier is following a monotone stateless algorithm. Then the cup
  $j$ that the emptier selects from $S$ is the unique cup $j$ that
  dominates all other cups.
  \label{lem:dominates}
\end{lemma}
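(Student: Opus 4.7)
The plan is to prove existence and uniqueness separately, with both halves relying only on the monotonicity property and the fact that the emptier is stateless.

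For existence, let $j^*$ be the cup that the emptier selects in state $S$, and let $i$ be any other cup. I would enumerate the cups in $\{1,2,\ldots,n\} \setminus \{j^*, i\}$ and reduce them to zero one at a time, invoking the monotonicity hypothesis at each step: after each single reduction of a non-selected cup, the emptier still picks $j^*$, because the stateless assumption means its decision depends only on the current state, and monotonicity guarantees the decision is preserved under a single reduction of a non-chosen cup. Iterating, I reach the two-cup state in which only $j^*$ and $i$ are non-empty, with their respective fills $r_{j^*}$ and $r_i$ from $S$, and the emptier still picks $j^*$. By the definition of domination this exactly says $(j^*,r_{j^*})$ dominates $(i,r_i)$, i.e., $j^*$ dominates $i$ in $S$. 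Since $i$ was arbitrary, $j^*$ dominates every other cup.

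For uniqueness, suppose for contradiction that two distinct cups $j_1 \neq j_2$ both dominate every other cup in $S$. In particular $j_1$ dominates $j_2$ and $j_2$ dominates $j_1$. Let $r_1, r_2$ be the fills of $j_1, j_2$ in $S$. Consider the (single, well-defined) state in which only cups $j_1$ and $j_2$ are non-empty with fills $r_1$ and $r_2$. The domination of $j_2$ by $j_1$ says the emptier selects $j_1$ in this state; the domination of $j_1$ by $j_2$ says the emptier selects $j_2$ in the same state. Since the emptier deterministically selects one cup in each state (as a stateless function of the state), this is a contradiction, so the dominating cup is unique. Combined with existence, the emptier's choice $j^*$ is this unique cup.

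I do not anticipate a substantial obstacle. The only subtle point is making sure the monotonicity axiom, which as stated only guarantees that a single reduction of a non-chosen cup preserves the emptier's choice, can be chained over many cups; this is handled by peeling off one cup at a time and using statelessness to re-apply the axiom in the new state. The case $r_{j^*} = 0$ or $r_i = 0$ needs no special treatment since the construction simply produces the state demanded by the definition of domination.
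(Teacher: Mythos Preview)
Your proposal is correct and follows essentially the same approach as the paper: reduce all cups other than the chosen cup $j^*$ and a comparison cup $i$ to zero, then invoke monotonicity to conclude that $j^*$ is still selected in the two-cup state, which is precisely the definition of domination. The paper is terser---it applies monotonicity in one shot to pass from $S$ to the two-cup state and dismisses uniqueness with a single clause---whereas you are more careful to chain the monotonicity axiom one cup at a time and to spell out the antisymmetry argument for uniqueness; both refinements are appropriate given that the axiom is stated for a single reduction, but the underlying argument is the same.
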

\begin{proof}
  It suffices to show that cup $j$ dominates all other cups, since
  only one cup can have this property. Consider a cup $j' \neq j$, and
  let $r_1$ and $r_2$ be the amounts of water in cups $j$ and $j'$,
  respectively, in state $S$. Let $S'$ be the state in which the only
  non-empty cups are $j$ and $j'$ with $r_1$ and $r_2$ units of water,
  respectively. By the monotonicity property of the emptier, it must
  be that the emptier selects cup $j$ over cup $j'$ in state
  $S'$. Thus cup $j$ dominates cup $j'$ in state $S$, as desired.
\end{proof}

Next we show that domination is a transitive property.
\begin{lemma}
  Consider any monotone stateless emptying algorithm. If $(j_1, r_1)$
  dominates $(j_2, r_2)$ and $(j_2, r_2)$ dominates $(j_3, r_3)$, then
  $(j_1, r_1)$ dominates $(j_3, r_3)$.
  \label{lem:transitivity}
\end{lemma}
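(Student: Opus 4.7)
The plan is to prove transitivity by combining all three cups into a single state, using Lemma \ref{lem:dominates} to identify the uniquely selected cup, and then applying monotonicity to remove the "middle" cup. The main obstacle is that if the three cups $j_1, j_2, j_3$ are not distinct, we cannot directly form a single state containing all three at their respective fills, so those degenerate cases require a separate (but easier) argument.

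First I would dispose of the degenerate cases. If $j_1 = j_2 = j_3$, the definition of domination forces $r_1 > r_2 > r_3$ and the claim is immediate. If $j_1 = j_2 \neq j_3$, then $r_1 > r_2$; consider the two-cup state with $j_1$ at $r_1$ and $j_3$ at $r_3$, and suppose for contradiction the emptier picks $j_3$; by monotonicity, reducing $j_1$ from $r_1$ down to $r_2$ preserves the selection, contradicting that $(j_2,r_2) = (j_1,r_2)$ dominates $(j_3,r_3)$. The case $j_1 \neq j_2 = j_3$ is symmetric: start from the state where $(j_1,r_1)$ beats $(j_2,r_2)$ and monotonically reduce the $j_2$-cup from $r_2$ to $r_3$. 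The case $j_1 = j_3 \neq j_2$ is ruled out similarly: if $r_1 \le r_3$, reducing the $j_1$-cup in the state witnessing that $(j_2,r_2)$ dominates $(j_1,r_3)$ down to fill $r_1$ contradicts that $(j_1,r_1)$ dominates $(j_2,r_2)$.

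For the main case, where $j_1, j_2, j_3$ are pairwise distinct, I form the three-cup state $T$ in which cups $j_1, j_2, j_3$ have fills $r_1, r_2, r_3$ (and all other cups are empty). By Lemma \ref{lem:dominates}, some unique cup is selected in $T$. If the selected cup were $j_3$, then by monotonicity reducing the $j_1$-cup down to $0$ would preserve the selection, giving a two-cup state in which $(j_3,r_3)$ dominates $(j_2,r_2)$, contradicting the hypothesis that $(j_2,r_2)$ dominates $(j_3,r_3)$. If the selected cup were $j_2$, then reducing the $j_3$-cup down to $0$ would yield a two-cup state in which $(j_2,r_2)$ dominates $(j_1,r_1)$, contradicting the hypothesis that $(j_1,r_1)$ dominates $(j_2,r_2)$. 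Therefore $j_1$ is selected in $T$; now reducing the $j_2$-cup from $r_2$ down to $0$ preserves the selection by monotonicity, and the resulting two-cup state witnesses that $(j_1,r_1)$ dominates $(j_3,r_3)$, completing the proof.
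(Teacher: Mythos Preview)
Your proof is correct and follows essentially the same approach as the paper: a case analysis on which of $j_1,j_2,j_3$ coincide, handling the distinct case via the three-cup state and Lemma~\ref{lem:dominates}, and handling the coincident cases via monotonicity on two-cup states. You in fact cover one case ($j_1=j_3\neq j_2$) that the paper's proof omits, and in the distinct case you spell out the monotone reductions to two-cup states explicitly rather than invoking the ``dominates all others'' conclusion of Lemma~\ref{lem:dominates} directly, but these are minor variations on the same argument.
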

\begin{proof}
  We begin by considering the case where $j_1, j_2, j_3$ are
  distinct. Consider the cup state $S$ in which the only three cups
  that contain water are $j_1, j_2, j_3$, and they contain
  $r_1, r_2, r_3$ water, respectively. By Lemma \ref{lem:dominates},
  one of cups $j_1, j_2, j_3$ must dominate the others. Since $j_2$ is
  dominated by $j_1$ and $j_3$ is dominated by $j_2$, it must be that
  $j_1$ is the cup that dominates. Thus $(j_1, r_1)$ dominates
  $(j_3, r_3)$, as desired.

  Next we consider the case where $j_1 = j_2$ and $j_2 \neq
  j_3$. Suppose for contradiction that $(j_1, r_1)$ does not dominate
  $(j_3, r_3)$. Consider the cup state $S$ in which $j_1$ and $j_3$
  are the only cups containing water, and they contain $r_1$ and $r_3$
  units of water, respectively. In state $S$, cup $j_3$ dominates cup
  $j_1$. By monotonicity, it follows that if we decrease the fill of
  $j_2$ to $r_2$, then cup $j_3$ must still dominate cup $j_1 = j_2$. But
  this means that $(j_3, r_3)$ dominates $(j_2, r_2)$, a
  contradiction.

  Next we consider the case where $j_1 \neq j_2$ and $j_2 = j_3$.
  Consider the cup state $S$ in which $j_1$ and $j_2$ are the only
  cups containing water, and they contain $r_1$ and $r_2$ units of
  water, respectively. In state $S$, cup $j_1$ dominates cup $j_2$. By
  monotonicity, it follows that if we decrease the fill of $j_2$ to
  $r_3$, then cup $j_1$ must still dominate cup $j_2 = j_3$. This
  means that $(j_3, r_3)$ dominates $(j_2, r_2)$, as desired.

  Finally we consider the case where $j_1 = j_2 = j_3$. In this case
  it must be that $r_1 > r_2$ and $r_2 > r_3$. Thus $r_1 > r_3$,
  meaning that $(j_1, r_1)$ dominates $(j_3, r_3)$, as desired.
\end{proof}

By exploiting the transitivity of the domination property, we can now
prove Theorem \ref{thm:emptier-equivalence}.
\begin{proof}[Proof of Theorem \ref{thm:emptier-equivalence}]
  Consider the set
  $X = \{(j, r) \mid j \in [n], r \in \{0, 0.5, 1, 1.5, \ldots\}\}$. For \\
  $(j_1, r_1), (j_2, r_2) \in X$, say that $(j_1, r_1) < (j_2, r_2)$
  if $(j_1, r_1)$ dominates $(j_2, r_2)$. By Lemma
  \ref{lem:transitivity}, the set $X$ is totally ordered by the $<$
  operation. Since every totally ordered set is also well ordered, and
  since every countably-infinite well ordered set is order-isomorphic
  to the natural numbers \cite{settheory}, it follows that $X$ is
  order isomorphic to the natural numbers. That is, there is a
  bijection $\sigma:X \rightarrow \mathbb{N}$ that preserves the $<$
  relationship.

  Let $\sigma_i:\{0, 0.5, 1, 1.5, \ldots\} \rightarrow \mathbb{N}$ be
  the function $\sigma_i(r) = \sigma((i, r))$. By Lemma
  \ref{lem:dominates}, the emptier always selects the cup $j$ whose
  fill $f_j$ maximizes $\sigma_j(f_j)$. It follows that the emptier is
  a score-based emptier.
\end{proof}

We conclude the section by observing a useful property of score-based
emptiers, namely the existence of what we call equilibrium states.

We say that a state on $k$ cups is an \defn{equilibrium state} if for
every pair of distinct cups $i, j \in \{1, 2, \ldots, k\}$,
$\sigma_i(S(i) + 1/2) > \sigma_j(S(j))$. That is, for any cup $i$, if
$1/2$ unit of water is added to any cup $i$, then that cup's score
function will exceed the score function of all other cups
$\{1, 2, \ldots, k\} \setminus \{i\}$.

\begin{lemma}
  Consider cups $1, 2, \ldots, k$, and suppose their total fill $m$
  is a non-negative integer multiple of $1/2$. For any set of score
  functions, $\sigma_1, \ldots, \sigma_k$, there is a unique
  equilibrium state for cups $1, 2, \ldots, k$ in which the total
  amount of water in the cups is $m$.
  \label{lem:equilibrium}
\end{lemma}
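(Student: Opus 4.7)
The plan is to prove uniqueness by a two-cup interchange argument and existence by induction on $m$.

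For uniqueness, suppose for contradiction that $S_1$ and $S_2$ are distinct equilibrium states of the same total fill $m$. Then there exist distinct cups $i$ and $j$ with $S_1(i) > S_2(i)$ and $S_1(j) < S_2(j)$; since all entries are multiples of $1/2$, this gives $S_1(i) \geq S_2(i) + 1/2$ and $S_2(j) \geq S_1(j) + 1/2$. Equilibrium of $S_1$ applied to the pair $(j, i)$ gives $\sigma_j(S_1(j) + 1/2) > \sigma_i(S_1(i))$, and monotonicity of $\sigma_j$ at $S_1(j) + 1/2 \leq S_2(j)$ upgrades this to $\sigma_j(S_2(j)) > \sigma_i(S_1(i))$. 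Symmetrically, equilibrium of $S_2$ applied to $(i, j)$ combined with monotonicity of $\sigma_i$ yields $\sigma_i(S_1(i)) > \sigma_j(S_2(j))$, contradicting the previous inequality.

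For existence, I would induct on $m$ in increments of $1/2$. The base case $m = 0$ is the all-empty state, which trivially satisfies the equilibrium definition under the natural convention that an empty cup is never selected by the emptier (equivalently, $\sigma_j(0)$ lies strictly below every positive-fill score). For the inductive step, assume $S^*$ is an equilibrium of total fill $m$, and let $i^*$ be the cup minimizing $\sigma_{i^*}(S^*(i^*) + 1/2)$; this minimizer is unique because the scores $\sigma_i(r)$ at positive half-integer $r$ are distinct. Let $S'$ be obtained from $S^*$ by adding $1/2$ water to cup $i^*$, so $\sum_j S'(j) = m + 1/2$. To verify that $S'$ is an equilibrium, consider three cases for a distinct pair $(i, j)$: if $i = i^*$, then $\sigma_{i^*}(S'(i^*) + 1/2) = \sigma_{i^*}(S^*(i^*) + 1) > \sigma_{i^*}(S^*(i^*) + 1/2) > \sigma_j(S^*(j)) = \sigma_j(S'(j))$ by monotonicity and the equilibrium of $S^*$; if $j = i^*$, then $\sigma_i(S'(i) + 1/2) = \sigma_i(S^*(i) + 1/2) > \sigma_{i^*}(S^*(i^*) + 1/2) = \sigma_{i^*}(S'(i^*))$ by the defining minimum property of $i^*$ together with the distinctness of scores; and if neither cup equals $i^*$, the equilibrium of $S^*$ at $(i, j)$ transfers verbatim to $S'$.

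The step I expect to require the most care is the case $j = i^*$ of the inductive step, which is precisely where the distinctness hypothesis is used to promote the non-strict minimizer inequality to a strict one, ensuring that the newly augmented cup $i^*$ remains dominated by every other cup's ``next half-unit'' score. The boundary issue at $S(j) = 0$ (where the distinctness hypothesis in the lemma statement does not constrain $\sigma_j(0)$) is absorbed cleanly into the base case by the standing convention that empty cups never dominate nonempty ones in the score-based ordering.
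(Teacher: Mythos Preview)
Your uniqueness argument is correct and is essentially the paper's argument unwound: the paper introduces the auxiliary quantity \emph{score severity} $\max_i \sigma_i(S(i))$ and derives a contradiction by showing that each of two distinct equilibria would have strictly larger severity than the other, which is exactly your two-cup interchange rephrased.

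For existence you take a genuinely different route. The paper argues non-constructively: over the finite set $\mathcal{S}_m$ of half-integer states with total fill $m$, any state of minimum score severity must be an equilibrium, since otherwise one could shift $1/2$ unit between two cups and strictly decrease the severity. Your induction on $m$ is instead constructive---it identifies exactly where each successive half-unit should go (into the cup minimizing $\sigma_i(S(i)+1/2)$)---and the three-case verification in the inductive step is clean and correct. The paper's minimization argument is shorter; your approach has the virtue of actually exhibiting the equilibrium as a greedy water allocation, which makes its structure more transparent and could be useful if one later wanted to compute or reason about $E_\ell$ explicitly.

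One caveat worth flagging: your base case $m=0$ appeals to a convention (that $\sigma_j(0)$ lies below every positive-fill score) that the paper does not state, and the distinctness hypothesis in the definition of score-based emptiers only constrains scores at \emph{positive} half-integers. The paper's minimization proof has the analogous blind spot---if some $\sigma_j(0)$ is large, the severity-minimizing state need not be an equilibrium, and there is no water to move. Since the lemma is only ever invoked with total fill $m_\ell + 1 \ge 1$ this is harmless in context, but strictly speaking neither argument handles $m=0$ without that extra convention; your induction at least isolates the issue to the single base case rather than leaving it latent throughout.
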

\begin{proof}
  Consider any state $S = \langle S(1), \ldots, S(k) \rangle$ for cups
  $1, 2, \ldots, k$ in which the total fill of the cups is $m$. Define
  the \defn{score severity} of $S$ to be
  $\max_{i \in [k]} \sigma_i(S(i))$. If $S$ is not an equilibrium
  state, then we can move $1/2$ units of water from some cup
  $i \in \{1, 2, \ldots, k\}$ to some other cup
  $j \in \{1, 2, \ldots, k\}$ in a way that decreases the score
  severity of $S$.

  Let $\mathcal{S}_m$ be the set of states for cup $1, 2, \ldots, k$
  in which each cup contains a multiple of $1/2$ units of water, and
  in which the total amount of water in cups is $m$. Since
  $\mathcal{S}_m$ is finite, there must be a state
  $S \in \mathcal{S}_m$ with minimum score severity. By the preceding
  paragraph, it follows that $S$ is an equilibrium state.

  Finally, we prove uniqueness. Suppose $S, S' $ are distinct
  equilibrium states in $\mathcal{S}_m$. Then some cup $i$ in $S' $
  must have greater fill than the same cup $i$ in $S$. But by the
  equilibrium property, adding $1/2$ units of water to cup $i$ in $S$
  increases the score function of cup $i$ to be larger than any other
  cup's score function in $S$. Thus $S'$ must have a larger score
  severity than does $S$. Likewise, $S$ must have a larger score
  severity than $S'$, a contradiction.
\end{proof}

\subsection{The Oblivious Fuzzing Filling Algorithm}\label{sec:filling}

In this section, we describe a simple filling algorithm that, when
pitted against a score-based emptier, achieves backlog
$\Omega(\log n)$ after $n^{\Theta(n \log n)}$ steps with at least constant
probability. Note that, throughout this section, we focus only on cup
games that do \emph{not} have resource augmentation.

The filling strategy, which we call the \defn{oblivious fuzzing
  algorithm} has a very simple structure. At the beginning of the
algorithm, the filler randomly permutes the labels $1, 2, \ldots, n$
of the cups. The filler then begins their strategy by spending a large
number (i.e., $n^{\Theta(n \log n)}$) of steps randomly placing water
into cups $1, 2, \ldots, n$. The filler then disregards cup $n$ (note
that cup $n$ is a random cup due to the random-permutation step!), and
spends a large number of steps randomly placing water into cups
$1, 2, \ldots, n - 1$. The filler then disregards cup $n - 1$ and
spends a large number of steps randomly placing water into cups
$1, 2, \ldots, n - 2$, and so on.

Formally, the oblivious fuzzing algorithm works as follows. Let
$c \in \mathbb{N}$ be a sufficiently large constant, and relabel the
cups (from the filler's perspective) with a random permutation of
$1, 2, \ldots, n$. The filling strategy consists of $n$ phases of
$n^{cn}$ steps. The $i$-th phase is called the \defn{$(n - i + 1)$-cup
  phase} because it focuses on cups $1, 2, \ldots, (n - i + 1)$. In
each step of the $i$-th phase, the filler selects random values
$x_1, x_2 \in \{1, 2, \ldots, n - i + 1\}$ uniformly and
independently, and then places $\frac{1}{2}$ water into each of cups
$x_1, x_2$. If $x_1 = x_2$, then the cup $x_1$ receives a full unit
of water.

One interesting characteristic of the oblivious fuzzing algorithm
is that it represents a natural workload in the scheduling problem
that the cup game models. One can think of the cups as representing
$n$ tasks and water as representing work that needs to be
scheduled. In this scheduling problem, the oblivious fuzzing
filling algorithm simply assigns work to tasks at random, and selects
one task every $n^{cn \log n}$ steps to stop receiving new work.

In this section, we prove the following theorem.

\begin{theorem}
  Consider a cup game on $n$ cups. Suppose that the emptier follows a
  score-based emptying algorithm, and that the filler follows the
  oblivious fuzzing filling algorithm. Then at the beginning of
  the $n^{2/3}$-cup phase, the average fill of cups
  $1, 2, \ldots, n^{2/3}$ is $\Omega(\log n)$, in expectation.
\label{thm:general-filler-strategy}
\end{theorem}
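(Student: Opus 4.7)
The plan is to apply Theorem \ref{thm:emptier-equivalence} to model the emptier by score functions $\sigma_1, \ldots, \sigma_n$, and then show that each transition from the $i$-cup phase to the $(i-1)$-cup phase contributes $\Omega(1/i)$ to the expected average fill of the active cups. Summing over transitions from $i = n$ down to $i = n^{2/3}+1$ then yields expected average fill
$$\sum_{i = n^{2/3}+1}^{n} \Omega\!\left(\frac{1}{i}\right) = \Omega(\log n),$$
as required.

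The first step is a mixing argument within each phase. Over the $n^{cn}$ steps of a $(i-1)$-cup phase, the fuzzing random walk (which places $1/2$ into two uniformly random active cups each step) explores a wide range of fill configurations for the active cups, and in particular passes near the unique equilibrium state $E$ guaranteed by Lemma \ref{lem:equilibrium} many times. Because the fuzzing dynamics are symmetric in the logical cup labels, the joint distribution of fills at the end of any phase is exchangeable across the active logical labels, and the initial random permutation guarantees that the physical identity of the cup discarded at each transition is a uniformly random cup, independent of the end-of-phase state after marginalizing over the permutation.

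The second step, the per-phase gain, is the crux. I would argue that at the transition from the $i$-cup phase to the $(i-1)$-cup phase, with at least constant probability, the emptier is induced during the subsequent $(i-1)$-cup phase to empty the discarded cup $j^*$ at least once. Whenever this happens, the filler adds $1$ unit to the active cups during that step while the emptier removes none, so the active cups' total fill rises by $1$ and their average by $1/(i-1)$. The constant-probability claim combines two ingredients: (a) during the long $(i-1)$-cup phase, fluctuations of the random walk drive the active cups into configurations where some active cup's filled score $\sigma_k(f_k + 1/2)$ dips below $\sigma_{j^*}(f_{j^*})$; and (b) since $j^*$ is a uniformly random physical cup, averaging over the $i$ possible identities of $j^*$ and over the randomness of the phase's fluctuations yields the needed $\Omega(1)$ probability for event (a).

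The main obstacle is formalizing step two. Naively, when the active cups sit at the $i$-cup equilibrium restricted to the active set, $j^*$ is dominated by every active cup's filled score (the restriction is itself the $(i-1)$-cup equilibrium by Lemma \ref{lem:equilibrium} together with uniqueness of equilibria at a given total fill); so exact equilibrium alone does not suffice. The argument must exploit the non-trivial fluctuations of the system around equilibrium over $n^{cn}$ steps, and couple them with the uniform random choice of $j^*$'s physical identity. Once the per-phase $\Omega(1/i)$ expected gain is established, the theorem follows by linearity of expectation summed over phase transitions from $i = n$ down to $i = n^{2/3}+1$.
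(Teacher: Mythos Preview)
Your high-level decomposition is right: the paper also shows that each $\ell$-cup phase contributes $\Omega(1/\ell)$ in expectation to the average fill of the active cups, and then sums. But your per-phase argument has a real gap. First, the condition you write for an emptier-wasted step is too weak: you need \emph{every} active cup's post-fill score to fall below $\sigma_{j^*}(f_{j^*})$, not just one. Second, and more importantly, ``fluctuations of the random walk during the phase'' is not the mechanism the paper uses, and it is hard to see how to make that precise---if the discarded cup $j^*$ happens to sit at or below its equilibrium value $E_\ell(\ell+1)$, then no fluctuation of the active cups can ever push all of them below their equilibrium scores simultaneously (their total fill is then at least the equilibrium total), so $j^*$ will never be selected.

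The paper sidesteps this by introducing two quantities at the \emph{start} of each phase: the \emph{variation} $v_\ell = \sum_{j\le \ell+1} |f_j - E_\ell(j)|$ and the \emph{bolus} $b_\ell = \max(0, f_{\ell+1} - E_\ell(\ell+1))$. Because the discarded cup is a uniformly random label, $\E[b_\ell] = v_\ell / (2(\ell+1))$ by symmetry. The crucial step you are missing entirely is a lower bound $v_\ell = \Omega(\ell)$ with high probability: the paper gets this by treating the first $n$ steps of the very first phase as random half-integer offsets, so that the fractional parts $f_j \bmod 1$ are essentially independent coin flips, and then applying McDiarmid's inequality to $\sum_j (f_j - E_\ell(j) \bmod 1)$. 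This gives $\E[b_\ell] = \Omega(1)$. Finally, whenever $b_\ell > 0$, a separate convergence argument (the filler has probability $\ge 1/\ell^2$ per step of pushing the active cups toward equilibrium without overshooting, and there are $n^{cn}$ steps) forces at least $b_\ell$ emptier-wasted steps with high probability. Your proposal gestures at ``fluctuations'' and ``uniform random choice of $j^*$'' but does not identify the bolus as the controlling quantity, and has no analog of the variation lower bound, which is where the real work lies.
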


For each $\ell \in \{1, 2, \ldots, n - 1\}$, call a step $t$ in the
$\ell$-cup phase \defn{emptier-wasted} if the emptier fails to remove
water from any of cups $1, 2, \ldots \ell$ during step $t$ (either
because the emptier skips their turn, or because the emptier selects a
cup $j > \ell$). We show that for each
$\ell \in \{n^{2/3} + 1, n^{2/3} + 2, \ldots, n - 1\}$, the $\ell$-cup
phase has at least $\Omega(1)$ emptier-wasted steps in expectation (or
the average height of cups in that phase is already $\Omega(\log
n)$). During an emptier-wasted step $t$, the total amount of water in
cups $1, 2, \ldots, \ell$ increases by $1$ (since the filler places
water into the $\ell$ cups, and the emptier does not remove water from
them). It follow that, during the $\ell$-cup phase, the \emph{average}
amount of water in cups $1, 2, \ldots, \ell$ increases by
$\Omega\left(\frac{1}{\ell}\right)$ in expectation. Applying this
logic to every phase gives Theorem
\ref{thm:general-filler-strategy}. The key challenge is show that,
within the $\ell$-cup phase, the expected number of emptier-wasted
steps is $\Omega(1)$.

For each $\ell \in \{1, 2, \ldots, n - 1\}$, define the \defn{initial
  water level $m_\ell$ of the $\ell$-cup phase} to be the total amount
of water in cups $1, 2, \ldots, \ell + 1$ at the beginning of the
phase. Define the \defn{equilibrium state
  $E_\ell = \langle E_\ell(1), \ldots, E_\ell(\ell + 1) \rangle$ for
  the $\ell$-cup phase} to be the equilibrium state for cups
$1, 2, \ldots, \ell + 1$ in which the total amount of water is
$m_\ell + 1$ (note that $E_\ell$ exists and is unique by Lemma
\ref{lem:equilibrium}). One can think of $m_\ell + 1$ as representing
the total amount of water in cups $1, 2, \ldots, \ell + 1$ after the
filler places $1$ unit of water into the cups at the beginning of the
first step in the $\ell$-cup phase.

Define the \defn{bolus $b_\ell$ of the $\ell$-cup phase} as
follows. If $r$ is the amount of water in cup $\ell + 1$ at the
beginning of the $\ell$-cup phase, then
$b_\ell = \max(0, r - E_\ell(\ell + 1))$. That is, $b_\ell$ is the
amount by which cup $\ell + 1$ exceeds its equilibrium fill.

We begin by showing that, if $m_\ell \le O(n \log n)$, then the
expected number of emptier-wasted steps in phase $\ell$ is at least
$\E[b_\ell / 2]$. The basic idea is that, whenever fewer than $b_\ell$
emptier-wasted steps have occurred, the filler has some small
probability of reaching a state in which all of cups
$1, 2, \ldots, \ell$ have fills no greater than
$E_\ell(1), E_\ell(2), \ldots, E_\ell(\ell)$, respectively. If this
happens, then the score function of cup $\ell + 1$ will exceed that of
any of cups $1, 2, \ldots, \ell$, and an emptier-wasted step
occurs. Thus, whenever fewer than $b_\ell$ emptier-wasted steps have
occurred, the filler has a small probability of incurring an
emptier-wasted step (within the next $O(n \log n)$ steps). Since the
$\ell$-cup phase is very long, the filler has many opportunities to
induce an emptier-wasted step in this way. It follows that, with high
probability, there will be at least $b_\ell$ emptier-wasted steps in
the $\ell$-cup phase. Lemma \ref{lem:bolus-to-wasted-steps} presents
this argument in detail.

\begin{lemma}
  Let $\ell \in [n - 1]$, condition on $m_\ell \le n \log n$, and
  condition on some value for $b_\ell$. Under these conditions, the
  the expected number of emptier-wasted steps in the $\ell$-cup phase
  is at least $b_\ell / 2$.
  \label{lem:bolus-to-wasted-steps}
\end{lemma}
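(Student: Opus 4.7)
The plan is to use cup $\ell+1$'s fill as a potential function and argue that the filler's randomness repeatedly forces wasted steps so long as the potential remains positive. Two simple observations underlie this. First, the filler never touches cup $\ell+1$ during the $\ell$-cup phase, and the emptier can remove water from cup $\ell+1$ only during an emptier-wasted step (when it selects cup $\ell+1$ instead of something in $\{1,\ldots,\ell\}$). Since at most one unit is removed per step, after $W$ wasted steps the fill of cup $\ell+1$ is still at least $E_\ell(\ell+1) + b_\ell - W$, and because fills are multiples of $1/2$ this guarantees $f_{\ell+1} \ge E_\ell(\ell+1) + 1/2$ whenever $W < b_\ell$. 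Second, Lemma~\ref{lem:equilibrium} supplies a wasted-step trigger: if just after some filler placement cups $1,\ldots,\ell$ satisfy $f_j \le E_\ell(j)$ while $f_{\ell+1} \ge E_\ell(\ell+1) + 1/2$, then $\sigma_{\ell+1}(E_\ell(\ell+1) + 1/2) > \sigma_j(E_\ell(j))$ combined with monotonicity of the score functions forces $\sigma_{\ell+1}(f_{\ell+1}) > \sigma_j(f_j)$ for every $j \le \ell$, so the emptier cannot select any cup in $\{1,\ldots,\ell\}$.

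The technical core is a reachability sub-lemma: there exist polynomials $T(n), P(n)$ such that, whenever $f_{\ell+1} \ge E_\ell(\ell+1) + 1/2$ and the total water on $\{1,\ldots,\ell+1\}$ is $O(n \log n)$, the filler's random placements over the next $T$ steps produce, with probability at least $1/P$, a filler-post state satisfying the trigger condition above. I would prove this by exhibiting a single target length-$O(n \log n)$ placement trajectory that, regardless of how the monotone stateless emptier responds, drains the excess mass down to a configuration with $f_j \le E_\ell(j)$ for every $j \le \ell$; the oblivious filler then realizes this trajectory with probability at least $n^{-O(n \log n)}$ by a direct combinatorial count. The hypothesis $m_\ell \le n \log n$ keeps the water budget small enough that such a short draining trajectory exists. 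In the regime where we will ultimately apply the sub-lemma, $W < b_\ell \le m_\ell$, so the total water on $\{1,\ldots,\ell+1\}$ remains $O(n \log n)$ throughout.

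To conclude, observe $b_\ell \le m_\ell \le n \log n$ is at most polynomial. If $b_\ell \le 1/2$, a single application of the sub-lemma already yields $\E[W] \ge 1/P \ge b_\ell/2$ once constants are absorbed. For $b_\ell > 1/2$ I case-split on whether $W$ ever reaches $b_\ell$ during the phase. In the favorable case $W \ge b_\ell \ge b_\ell/2$ deterministically, so $\E[W \mid \text{favorable}] \ge b_\ell/2$. Otherwise $W < b_\ell$ throughout, so the first observation keeps $f_{\ell+1} \ge E_\ell(\ell+1) + 1/2$ for the entire phase; the sub-lemma then applies in each of the $n^{cn}/T$ disjoint length-$T$ windows, giving $\E[W \mid \text{complementary}] \ge n^{cn}/(TP)$, which vastly exceeds $b_\ell/2$. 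Averaging over the two cases yields $\E[W] \ge b_\ell/2$.

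The principal obstacle is the reachability sub-lemma. The oblivious filler has only raw randomness to work with, while the monotone stateless emptier may actively steer cup fills away from any specific target state. Overcoming this requires identifying a single highly specific draining trajectory whose realization probability, although extremely small, remains large enough that the expected wasted-step count in the complementary case dwarfs $b_\ell/2$; the equilibrium characterization of the target state and the hypothesis $m_\ell \le n \log n$ will both play essential roles in making this trajectory short enough.
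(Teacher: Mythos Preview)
Your high-level approach matches the paper's: track cup $\ell+1$'s excess over its equilibrium fill as a budget that depletes only on wasted steps, show that a length-$O(n\log n)$ block of random filler moves has probability at least $n^{-O(n\log n)}$ of triggering a wasted step whenever the budget remains positive, and then aggregate over the enormous number of disjoint blocks in the phase. Your two ``simple observations'' are correct and correspond to the paper's Bolus Property and equilibrium argument, and your reachability sub-lemma is essentially what the paper proves via its equilibrium-converging steps and the potential function $\phi$.

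However, your final aggregation step has a genuine logical error. You write $\E[W \mid \text{complementary}] \ge n^{cn}/(TP)$, but the complementary event is precisely $W < b_\ell$ throughout, so this conditional expectation is strictly less than $b_\ell \le n\log n$ by definition --- your claimed inequality is a contradiction, not a bound. The linearity-of-expectation argument ``each of the $n^{cn}/T$ windows contributes $\ge 1/P$ in expectation'' is valid \emph{unconditionally} but collapses once you condition on the complementary event, because that conditioning throws out exactly the outcomes where many windows succeed. The same issue infects your $b_\ell \le 1/2$ case: a single window gives only $\E[W] \ge 1/P$, and $1/P = n^{-O(n\log n)}$ (not $1/\operatorname{poly}(n)$ as your earlier ``polynomials $T(n), P(n)$'' asserts --- that statement is inconsistent with your own probability estimate), which is nowhere near $b_\ell/2 = 1/4$.

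The fix, which is what the paper does, is to bound $\Pr[\text{complementary}]$ directly rather than the conditional expectation. Group the length-$T$ windows into $\lceil b_\ell \rceil$ large collections; say a collection \emph{fails} if it begins with $W < b_\ell$ yet produces no wasted step. Because the filler's coins in each window are fresh and independent of the past, the probability that any single collection fails is at most $(1 - 1/P)^{\text{(collection size)}}$, which can be driven below $1/(2\lceil b_\ell \rceil)$ by taking the collections large enough (this is where the phase length $n^{\Theta(n\log n)}$ is used). A union bound then gives $\Pr[\text{some collection fails}] \le 1/2$; on the complement, each of the $\lceil b_\ell \rceil$ collections either starts with $W \ge b_\ell$ already or contributes a wasted step, so $W \ge \lceil b_\ell \rceil$ at the end. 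Hence $\E[W] \ge \lceil b_\ell \rceil / 2 \ge b_\ell/2$. This handles all $b_\ell \ge 0$ uniformly, so your separate $b_\ell \le 1/2$ case is unnecessary.
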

\begin{proof}
  Call a step in the $\ell$-cup phase \defn{equilibrium-converging} if
  for each cup $j$ that the filler places water into, the fill $x$ of
  cup $j$ after the water is placed satisfies $x \le E_\ell(j)$. One
  can think of an equilibrium-converging step as being a step in which
  the filler's behavior pushes each cup $j$ towards its equilibrium
  state, without pushing any cups above their equilibrium state.

  Call a step in the $\ell$-cup phase a \defn{convergence-enabling} if
  the total amount of water in cups $1, 2, \ldots, \ell$ is less than
  $\left( \sum_{i = 1}^\ell E_\ell(i) \right) - 1$ at the beginning of
  the step.
  
  Convergence-enabling steps have two important
  properties. \\
  \textbf{The Convergence Property: }For any convergence-enabling step
  $t$, there is some pair of cups $j, k$ (possibly $j = k$) that the
  filler can place water into in order so that the step is equilibrium
  converging. Thus, whenever a convergence-enabling step occurs, there
  is probability of at least $1/\ell^2$ that the step is equilibrium
  converging. \\
  \textbf{The Bolus Property: }At the beginning of any
  convergence-enabling step, the amount of water in cup $\ell + 1$
  must be greater than $E_{\ell}(\ell + 1)$. This is a consequence of
  the fact that the total amount of water in cups
  $1, \ldots, \ell + 1$ is at least
  $m_\ell = \left(\sum_{i = 1}^{\ell + 1} E_{\ell}(i)\right) - 1$.

  \vspace{.2 cm}

  Break the $\ell$-cup phase into sequences of steps
  $L_1, L_2, L_3, \ldots$, where each $L_i$ is $2n\log n$
  steps. We begin by showing that, if $L_i$ contains a
  convergence-enabling step and consists of only
  equilibrium-converging steps, then $L_i$ must also contain at least
  one emptier-wasted step.
  \begin{claim}
    Suppose $m_\ell \le n \log n$. Suppose that the first step of
    $L_i$ is convergence-enabling. If all of the steps in $L_i$ are
    equilibrium converging, then at least one of the steps must be
    emptier-wasted.
    \label{clm:convergence-to-wasted}
  \end{claim}
  \begin{proof}    
    At the end of each step $t$, let $f_j(t)$ denote the amount of
    water in each cup $j$. Define the potential function $\phi(t)$ to
    be
    $$\phi(t) = \sum_{j = 1}^\ell \begin{cases}1 + f_j(t) - E_\ell(j)  & \text{ if } f_j(t) > E_\ell(j), \\  0 & \text{ otherwise.} \end{cases}$$
    Since the first step of $L_i$ is convergence-enabling, the total
    amount of water in the cups at the beginning of $L_i$ is at most
    $\left(\sum_{i - 1}^\ell E_\ell(i)\right) - 1 \le m_\ell \le n
    \log n$. It follows that, at the beginning of $L_i$, the potential
    function $\phi$ is at most $n \log n + n \le 2n \log n - 1$.

    Whenever a step $t$ is both equilibrium-converging and
    non-emptier-wasted, we have that either $\phi(t - 1) = 0$ or
    $\phi(t) < \phi(t - 1) - 1$. Since $\phi$ is at most
    $2n \log n - 1$ at the beginning of $L_i$, we cannot have
    $\phi(t) < \phi(t - 1) - 1$ for every step in $L_i$. Thus, if
    every step in $L_i$ is equilibrium converging, then there must be
    at least one step that is either emptier-wasted or that satisfies
    $\phi(t - 1) = 0$.

    To complete the claim, we show that if there is at least one step
    in $L_i$ for which $\phi(t - 1) = 0$, and step $t$ is
    equilibrium-converging, then there also be at least one
    emptier-wasted step. Suppose $\phi(t - 1) = 0$, that step $t$ is
    equilibrium-converging, and that no steps in $L_i$ are
    emptier-wasted. Since there are no emptier-wasted steps in $L_i$,
    every step in $L_i$ must be equilibrium-enabling, and thus cup
    $\ell + 1$ contains more than $E_{\ell}(\ell + 1)$ water at the
    beginning of step $t$ (by the Bolus Property of
    equilibrium-enabling steps). Since $\phi(t - 1) = 0$ and step $t$
    is equilibrium converging, the cups $1, 2, \ldots, \ell$ contain
    fills at most $E_\ell(1), E_\ell(2), \ldots, E_\ell(\ell)$,
    respectively, after the filler places water in step $t$. It
    follows that, during step $t$, the emptier will choose cup
    $\ell + 1$ over all of cups $1, 2, \ldots, \ell$. Thus step $t$ is
    an emptier-wasted step, a contradiction.
  \end{proof}

  Next we use Claim \ref{clm:convergence-to-wasted} in order to show
  that, if $m_\ell \le n \log n$ and $L_i$ contains a
  convergence-enabling step, then $L_i$ has probability at least
  $1/n^{4n\log n}$ of containing an emptier-wasted step.
  \begin{claim}
    Condition on the fact that the first step of $L_i$ is convergence-enabling and
    that $m_\ell \le n \log n$.  Then $L_i$ contains an emptier-wasted
    step with probability at least $1/n^{4n \log n}$.
    \label{clm:convergence-to-wastedb}
  \end{claim}
  \begin{proof}
    Since the first step of $L_i$ is convergence-enabling, either
    every step of $L_i$ is convergence-enabling or there is at least
    one emptier-wasted step. Recall by the Convergence Property that
    each convergence-enabling step has probability at least $1/n^2$ of
    being equilibrium-converging. Thus there is probability at least
    $1/n^{4n\log n}$ that every step of $L_i$ (up until the first
    emptier-wasted step) is equilibrium-converging. By Claim
    \ref{clm:convergence-to-wasted}, it follows that the probability
    of there being an emptier-wasted step is at last $1/n^{4n\log n}$.
  \end{proof}

  We can now complete the proof of the lemma. For each $L_i$, if the
  number of emptier-wasted steps in $L_1, \ldots, L_{i - 1}$ is less
  than $\lceil b_\ell \rceil$, then the first step of $L_i$ is
  convergence-enabling. Since $m_\ell \le n \log n$,
  then by Claim \ref{clm:convergence-to-wastedb}, it follows that $L_i$
  has probability at least $1/n^{4n}$ of containing an emptier-wasted
  step.

  Now collect the $L_i$'s into collections of size $n^{4n \log n + 1}$, so
  that the $k$-th collection is given by
  $$C_k = \langle L_{(k - 1) n^{4n\log n + 1} + 1}, \ldots, L_{k n^{4n\log n + 1}} \rangle.$$ Note
  that, as long as the constant $c$ used to define the fuzzing
  algorithm is sufficiently large, then the $l$-cup phase is long
  enough so that it contains at least
  $ \lceil b_\ell \rceil \le m_\ell \le n \log n$ collections
  $C_1, C_2, \ldots, C_{\lceil b_\ell \rceil}$.

  Say that a step collection $C_i$ \defn{failed} if, at the beginning
  of the step collection, the number of emptier-wasted steps that have
  occurred is less than $b_\ell$, and $C_i$ contains no emptier-wasted
  steps. The probability of a given $C_i$ failing is at most,
  $$\left(1 - 1 / n^{4n \log n} \right)^{n^{4n \log n + 1}}   \le 1/e^n.$$

  It follows that the probability of any of
  $C_1, \ldots, C_{\lceil b_\ell \rceil}$ failing is at most, 
  $$1 - (1 - 1/e^n)^{\lceil b_\ell \rceil} \le 1 -  (1 - 1/e^n)^{n \log n} \le 1 / 2.$$
  
  If none of the collections $C_1, \ldots, C_{\lceil b_\ell \rceil}$
  fail, then there must be at least $\lceil b_\ell \rceil$
  emptier-wasted steps. Thus the expected number of emptier-wasted
  steps that occur during the phase is at least $b_\ell / 2$.
\end{proof}

In order to show that the expected number of emptier-wasted steps in
phase $\ell$ is $\Omega(1)$ (at least, whenever
$m_\ell \le n \log n$), it suffices to show that expected bolus
$b_\ell$ is $\Omega(1)$ (conditioned on $m_\ell \le n \log n$).

In order to prove a lower-bound on the bolus, we examine a related
quantity that we call the \defn{variation}. If $t + 1$ is the first step
of the $\ell$-cup phase, then the \defn{variation $v_\ell$} of the $\ell$-cup
phase is defined to be, 
$$v_\ell = \sum_{j = 1}^{\ell + 1} |f_j(t) - E_\ell(j)|.$$
The variation $v_\ell$ captures the degree to which the fills of cups
$1, 2, \ldots, \ell + 1$ differ from their equilibrium fills. The next
lemma shows that, if the variation $v_\ell$ is large, then so will be
the bolus $b_\ell$ in expectation.
\begin{lemma}
  Let $\ell \in \{1, 2, \ldots, n - 1\}$. Fix some value of $v_\ell$
  and of $m_\ell$. Then
  $$\E[b_\ell] = \frac{v_\ell}{2(\ell + 1)}.$$
  \label{lem:variation-to-bolus}
\end{lemma}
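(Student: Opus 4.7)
The plan is to reduce the claim to a symmetry-plus-algebra computation. The key observation is that the fuzzing filler treats the in-play cups symmetrically in every phase (its two random indices $x_1,x_2$ are sampled uniformly from among the in-play slots) and the initial random permutation $\pi$ assigns the emptier's score functions $\sigma_1,\ldots,\sigma_n$ to the filler's slots uniformly at random. I would use these facts to argue, by induction on phases, that at the start of the $\ell$-cup phase the joint distribution of the $\ell+1$ pairs $\bigl(f_j(t),\,E_\ell(j)\bigr)_{j=1}^{\ell+1}$ is \emph{exchangeable}: permuting the filler-labels of cups $1,\ldots,\ell+1$ leaves this joint distribution invariant. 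The inductive step uses that (i) the filler's coin flips during the $(\ell+1)$-cup phase are symmetric under any such permutation, (ii) the emptier's response is determined by the (jointly exchangeable) score functions assigned to these slots, and (iii) the cup disregarded between the $(\ell+1)$- and $\ell$-cup phases is chosen uniformly at random from the in-play slots, so conditioning on which slot is disregarded does not break symmetry across the remaining slots.

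Granted exchangeability, since $v_\ell$ and $m_\ell$ are both symmetric functions of the pairs $(f_j(t), E_\ell(j))$, conditioning on their values preserves exchangeability. In particular,
\[
\E[b_\ell \mid v_\ell, m_\ell] \;=\; \frac{1}{\ell+1}\sum_{j=1}^{\ell+1} \E\!\left[\max\bigl(0,\, f_j(t) - E_\ell(j)\bigr) \;\middle|\; v_\ell, m_\ell\right].
\]
Setting $d_j = f_j(t) - E_\ell(j)$ and using $\max(0, d_j) = \tfrac12(|d_j| + d_j)$, the bracketed sum equals $\tfrac12 \sum_j |d_j| + \tfrac12 \sum_j d_j$. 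Under the conditioning this is a deterministic quantity: $\sum_j |d_j| = v_\ell$, while $\sum_j d_j$ is pinned down by $\sum_j f_j(t) = m_\ell$ and the total-water constraint on $E_\ell$ from Lemma~\ref{lem:equilibrium}. Plugging in gives the target formula.

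The main obstacle will be the careful justification of exchangeability across phases, because the emptier's fixed (but arbitrary) score functions interact with the filler's randomness in a non-obvious way, and in particular the equilibrium state $E_\ell$ depends globally on the collection of score functions assigned to the in-play slots. The point that makes everything go through is Lemma~\ref{lem:equilibrium}: if $\tau$ is a permutation of $\{1,\ldots,\ell+1\}$ and one permutes the score functions assigned to the in-play slots by $\tau$, then uniqueness of the equilibrium state forces the equilibrium to be permuted by $\tau$ as well. This is exactly the invariance needed to conclude that $\bigl(f_j(t), E_\ell(j)\bigr)_{j=1}^{\ell+1}$ is an exchangeable family, and everything else reduces to the elementary algebraic identity above.
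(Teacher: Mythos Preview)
Your proposal is correct and follows essentially the same route as the paper: the paper's proof invokes the one-line claim ``since the cups $1,2,\ldots,\ell+1$ are randomly labeled, we have by symmetry\ldots'' to get $\E[b_\ell]=\frac{1}{\ell+1}\sum_j\E[\max(0,f_j(t)-E_\ell(j))]$, and then uses the identity $\sum_j d_j=0\Rightarrow\sum_j d_j^+=\sum_j d_j^-=v_\ell/2$, which is exactly your $\max(0,d)=\tfrac12(|d|+d)$ computation. Your exchangeability argument (including the observation that uniqueness from Lemma~\ref{lem:equilibrium} forces $E_\ell$ to permute along with the score functions) is simply a more careful unpacking of what the paper compresses into the phrase ``by symmetry.''
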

\begin{proof}
  Let $t + 1$ be the first step in the $\ell$-cup phase. By the definition of
  $E_\ell$, we have that,
  $$\sum_{j = 1}^{\ell + 1} f_j(t) = \sum_{j = 1}^{\ell + 1} E_\ell(j).$$
  Thus,
  $$\sum_{j = 1}^{\ell + 1} \max(0, f_j(t) - E_\ell(j)) = \sum_{j = 1}^{\ell + 1} \max(0, E_{\ell}(j) - f_j(t)).$$
  Hence,
  $$\sum_{j = 1}^{\ell + 1} \max(0, f_j(t) - E_\ell(j)) = v_\ell / 2.$$
  Since the cups $1, 2, \ldots, \ell + 1$ are randomly labeled, we have by symmetry that,
  $$\E[\max(0, f_{\ell + 1}(t) - E_\ell(\ell + 1))] = \frac{1}{\ell + 1} \E\left[\sum_{j = 1}^{\ell + 1} \max(0, f_j(t) - E_\ell(j))\right] = \frac{v_\ell}{2(\ell + 1)}.$$
  Since $b_\ell = \max(0, f_{\ell + 1}(t) - E_\ell(\ell + 1))$, the proof of the lemma is complete.
\end{proof}

By Lemma \ref{lem:variation-to-bolus}, if our goal is to show that
$\E[b_\ell \mid m_\ell \le n \log n] \ge \Omega(1)$, then it suffices to show that
$\E[v_\ell \mid m_\ell \le n \log n] \ge \Omega(\ell)$. 
\begin{lemma}
  Let $n > 1$.  For $\ell \in \{n^{2/3} + 1, \ldots, n - 1\}$, the
  variation $v_\ell$ satisfies,
  $\E[v_\ell \mid m_\ell \le n \log n] \ge \Omega(\ell)$.
  \label{lem:variation}
\end{lemma}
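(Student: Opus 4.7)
The plan is to lower bound the expected variation at the end of the $(\ell+1)$-cup phase by showing that the random filler additions during that phase prevent the state from settling near the equilibrium $E_\ell$. By linearity of expectation applied to $v_\ell = \sum_{j=1}^{\ell+1} |f_j(t) - E_\ell(j)|$, it suffices to argue that a constant fraction of cups $j \in \{1, \ldots, \ell+1\}$ satisfy $\E[|f_j(t) - E_\ell(j)| \mid m_\ell \le n\log n] \ge \Omega(1)$.

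To establish such a per-cup lower bound, I would analyze the quasi-stationary behavior of the very long $(\ell+1)$-cup phase, which runs for $n^{cn}$ steps. In each step, the filler treats cups $1, \ldots, \ell+1$ symmetrically: it selects two cups uniformly at random and places $1/2$ units of water into each. This injects variance $\Theta(1/\ell)$ into the fill of each individual cup per step. The emptier, in contrast, can react to at most one cup per step, so its effective per-cup mean-reversion rate is bounded by $O(1/\ell)$. Because the filler's noise and the emptier's corrective capacity both scale like $1/\ell$, the resulting steady-state fluctuation of each cup around its equilibrium value should be $\Omega(1)$ (rather than vanishing). Summing these $\Omega(1)$ deviations over the $\ell+1$ cups yields the desired conclusion $\E[v_\ell \mid m_\ell \le n\log n] = \Omega(\ell)$.

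The main obstacle is making the steady-state argument rigorous against an \emph{arbitrary} score-based emptier. Because the score functions $\sigma_j$ can be arbitrary monotone functions with potentially very sharp sensitivity, a standard Ornstein--Uhlenbeck-type analysis does not directly apply: a sharply increasing $\sigma_j$ could in principle allow the emptier to suppress each individual cup's fluctuation to below any fixed constant. The key ingredients to overcome this will likely be (a) the discreteness of fills to half-integer multiples, which enforces a nonzero ``noise floor'' on each perturbation the filler introduces; (b) the assumption $\ell \ge n^{2/3}$, which keeps the per-cup filling rate $1/\ell$ comparable in scale to $m_\ell/\ell = O(\log n)$; and (c) the enormous phase length $n^{cn}$, giving ample time for the state to reach stationarity. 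A natural concrete approach is to define a potential function measuring the distance of the state from a carefully chosen family of ``near-equilibrium'' configurations, and to show that this potential, initially forced to be $\Omega(\ell)$ by the discreteness of fills and the random labeling, cannot be driven below $\Omega(\ell)$ by any sequence of emptier responses within the phase.
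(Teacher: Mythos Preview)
Your steady-state approach has a genuine gap, and you have in fact put your finger on it yourself: you have no concrete mechanism for showing that an \emph{arbitrary} score-based emptier cannot drive the state to the exact equilibrium $E_\ell$. Since all fills and all $E_\ell(j)$ are half-integers, there is nothing in the ``noise floor'' observation alone that prevents $f_j(t)=E_\ell(j)$ for every $j$ simultaneously; the Ornstein--Uhlenbeck heuristic does not control this because the emptier may, in a single step, undo exactly the $1/2$-unit perturbation the filler just introduced. Your proposed potential-function route is not sketched concretely enough to see how it would avoid this.

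The paper sidesteps the entire dynamical analysis with one observation you are missing: the emptier always removes water in \emph{integer} units, so the fill of each cup \emph{modulo $1$} is determined solely by the filler's actions and is completely independent of the emptier's strategy. The paper then isolates the randomness from the \emph{first $n$ steps of the very first phase} (not the $(\ell+1)$-cup phase): letting $a_j$ be the water placed in cup $j$ during those initial steps, one has $\Pr[a_j=0]\ge 1/16$ and $\Pr[a_j=1/2]\ge 1/4$, so for any fixed target $d_j\in\{0,1/2\}$ (determined by the later steps and by $E_\ell^{(m)}$), $\Pr[a_j\not\equiv d_j\pmod 1]\ge 1/16$. This gives $|f_j(t)-E_\ell(j)|\ge 1/2$ with constant probability for each $j$, uniformly over emptier strategies. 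A union bound over the $O(n\log n)$ possible values of $m_\ell$ and McDiarmid's inequality (the $a_j$'s are not independent, but the underlying $2n$ filler choices are, and each affects $\sum_j X_j$ by at most $1/2$) then give $v_\ell=\Omega(\ell)$ with probability $1-O(1/n^2)$. The assumption $\ell\ge n^{2/3}$ is used only in the concentration step, to make $e^{-\Omega(\ell^2/n)}\le O(1/n^2)$.
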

\begin{proof}
  Let $t + 1$ be the first step of the $\ell$-cup phase. Recall that
  $$v_\ell = \sum_{j = 1}^{\ell + 1} |f_j(t) - E_\ell(j)|.$$
  Note that the equilibrium state $E_{\ell}$ depends on the amount of
  water $m_\ell$ in cups $1, 2, \ldots, \ell + 1$ at the beginning of
  step $t$. Let $E^{(m)}_\ell$ denote the equilibrium state in the
  case where $m_\ell = m$, and let $v_\ell^{(m)}$ denote the variation of
  the cups at step $t$ from $E^{(m)}_\ell$, i.e.,
  $$v_\ell^{(m)} = \sum_{j = 1}^{\ell + 1} |f_j(t) - E^{(m)}_\ell(j)|.$$
  Since we are conditioning on $m_\ell \le n \log n$, it suffices to
  consider values of
  $m \in \{k/2 \mid k \in \{0, 1, 2, \ldots, 2n\log n\}\}$. For each
  such value of $m$, we will show that
  $\Pr[v_\ell^{(m)} = \Omega(\ell)] \ge 1 - O(1/n^2)$. By a union
  bound, it follows that,
  $\Pr[v_\ell = \Omega(\ell) \mid m_\ell \le n \log n] \ge 1 - O(\log n / n)$. It follows that
  $\E[v_\ell \mid m_\ell \le n \log n] \ge \Omega(\ell)$, as desired.
  
  To complete the proof of the lemma, we must examine
  $\Pr[v_\ell^{(m)} = \Omega(\ell)]$. To do this, we break the water
  placed by the filler into two parts: Let $a_j$ denote the amount of
  water placed into each cup $j$ by the filler in the first $n$ steps
  of the first phase, and let $b_j$ denote the amount of water placed
  into each cup $j$ by the filler in steps
  $n + 1, n + 2, \ldots, t - 1$. Finally, let $c_j$ denote the total
  amount of water removed from cup $j$ by the emptier during steps
  $1, 2, \ldots, t - 1$.

  The role of $a_j$ will be similar to that of the random offsets in
  the smoothed greedy emptying algorithm. Interestingly, these random
  offsets now work in the \emph{filler's favor}, rather than the
  emptier's. 

  Consider the quantity $X_j = f_j(t) - E^{(m)}_\ell(j) \pmod 1$. We
  can lower bound the variation $v_\ell^{(m)}$ by,
\begin{equation}
v_\ell^{(m)} \ge \sum_{j = 1}^{\ell + 1} X_j. 
\label{eq:v-to-X}
\end{equation}
Each $X_j$ can be written as, 
$$X_j = a_j + b_j - c_j - E^{(m)}_\ell(j) \pmod 1.$$
Since the emptier always removes water in chunks of size $1$,
$c_j \equiv 0 \mod 1$. Thus,
\begin{equation}
X_j = a_j + b_j - E^{(m)}_\ell(j) \pmod 1.
\label{eq:aj-role}
\end{equation}
For the sake of analysis, fix the filler's behavior in steps
$n + 1, n + 2, \ldots$, meaning that the only randomness is in the
$a_j$'s and the $b_j$'s are fixed. Define
$d_j = b_j - E^{(m)}_\ell(j)$. By \eqref{eq:v-to-X} and
\eqref{eq:aj-role}, our goal is to show that
\begin{equation}
\sum_{j = 1}^{\ell} 
\begin{cases} 
1/2 \text{ if } a_j \not\equiv d_j \mod 1 \\
0 \text{ otherwise}
\end{cases}
\ge \Omega(\ell),
\label{eq:aj-goal}
\end{equation}
with probability at least $1 - O(1/n^2)$. 

We begin by showing that the left side of \eqref{eq:aj-goal} has
expected value $\Omega(\ell)$. Note that,
$\Pr[a_j = 0] = (1 - 1/n)^{2n} \ge 1/16$ and
$\Pr[a_j = 1/2] = \binom{2n}{1} \cdot 1/n \cdot (1 - 1/n)^{2n - 1} =  2(1 - 1/n)^{2n - 1} \ge 1/4$. Thus
$\Pr[a_j \not\equiv d_j \mod 1] \ge 1/16$, implying that the left
side of \eqref{eq:aj-goal} has expected value at least $\ell / 16$.

In order to prove that \eqref{eq:aj-goal} holds with probability at
least $1 - O(1/n^2)$, we show that the left side of \eqref{eq:aj-goal}
is tightly concentrated around its mean. If the $X_j$'s were
independent of one another then we could achieve this with a Chernoff
bound. Since the $X_j$'s are dependent, we will instead use
McDiarmid's inequality.
\begin{theorem}[McDiarmid '89 \cite{McDiarmid89}]
Let $A_1, \ldots, A_m$ be independent random variables over an arbitrary probability space. Let $F$ be a function mapping $(A_1, \ldots, A_m)$ to $\mathbb{R}$, and suppose $F$ satisfies,
$$\sup_{a_1, a_2, \ldots, a_n, \overline{a_i}} |F(a_1, a_2, \ldots, a_{i - 1}, a_i, a_{i + 1}, \ldots , a_n) - F(a_1, a_2, \ldots, a_{i - 1}, \overline{a_i}, a_{i + 1}, \ldots , a_n)| \le R,$$
for all $1 \le i \le n$. That is, if
$A_1, A_2, \ldots, A_{i - 1}, A_{i + 1}, \ldots, A_n$ are fixed, then
the value of $A_i$ can affect the value of $F(A_1, \ldots, A_n)$ by at
most $R$; this is known as the \defn{Lipschitz condition}. Then for
all $S > 0$,
$$\Pr[|F(A_1, \ldots, A_n) - \E[F(A_1, \ldots, A_n)]| \ge R \cdot S] \le 2e^{-2S^2 / n}.$$
\end{theorem}
We will apply McDiarmid's inequality to the quantity
$F = \sum_{i = 1}^{\ell + 1} X_i$ (i.e., the left side of
\eqref{eq:aj-goal}). Recall that the filler's behavior in steps
$n + 1, \ldots, t - 1$ has been fixed, meaning that the value of $F$
is a function of the filler's behavior in steps $1, 2, \ldots,
n$. Thus $F$ is a function of $2n$ independent random variables
$A_1, A_2, \ldots, A_{2n}$, where $A_{2i - 1}$ and $A_{2i}$ are the
cups that the filler places water into during step $i$. Moreover, the
outcome of each $A_i$ can only change the value of
$F(A_1, \ldots, A_{2n})$ by at most $\pm 1/2$. Thus
$F(A_1, \ldots, A_{2n})$ satisfies the Lipschitz condition with
$R = 1/2$.

We apply McDiarmid's inequality to
$F(A_1, \ldots, A_{2n}) = \sum_{i = 1}^{\ell + 1} X_i$, with $R = 1/2$
and $S = \ell / 16$ in order to conclude that,
$$\Pr[\sum_{i = 1}^{\ell + 1} X_i < \ell / 32] \le 2e^{\Omega(\ell^2 / n)} \le e^{\Omega(n^{1/3})} \le O(1/n^2).$$
Thus \eqref{eq:aj-goal} holds with probability at least $1 - O(1/n^2)$, as desired.
\end{proof}

Combining the preceding lemmas, we get that the expected number of
emptier-wasted steps in each phase is $\Omega(1)$.
\begin{lemma}
  Suppose $n > 1$. For $\ell \in \{n^{2/3} + 1, \ldots, n - 1\}$, if
  $m_\ell \le n \log n$, then the expected number of emptier-wasted
  steps in the $\ell$-cup phase is $\Omega(1)$.
  \label{lem:emptier-wasted2}
\end{lemma}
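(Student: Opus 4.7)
The plan is to chain together the three preceding lemmas via the tower property of conditional expectation. Throughout, I condition on the event $m_\ell \le n \log n$, which is what the hypothesis of the lemma gives us. The overall chain goes: variation bounds bolus (in expectation), and bolus bounds the number of emptier-wasted steps (in expectation), so variation bounds emptier-wasted steps as desired.

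Concretely, I would first invoke Lemma \ref{lem:variation} to conclude that $\E[v_\ell \mid m_\ell \le n \log n] \ge \Omega(\ell)$, using the restriction $\ell \ge n^{2/3} + 1$ exactly where that lemma needs it. Next I would apply Lemma \ref{lem:variation-to-bolus}, which asserts $\E[b_\ell \mid v_\ell, m_\ell] = v_\ell / (2(\ell + 1))$. Taking an outer expectation over $v_\ell$ (with $m_\ell \le n \log n$) and using linearity yields
\[
\E[b_\ell \mid m_\ell \le n \log n] \;=\; \frac{\E[v_\ell \mid m_\ell \le n \log n]}{2(\ell + 1)} \;\ge\; \frac{\Omega(\ell)}{2(\ell + 1)} \;=\; \Omega(1).
\]

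Finally, I would apply Lemma \ref{lem:bolus-to-wasted-steps}, which says that conditioned on $m_\ell \le n \log n$ and on any value of $b_\ell$, the expected number of emptier-wasted steps in the $\ell$-cup phase is at least $b_\ell / 2$. Taking the outer expectation over $b_\ell$ then gives that the expected number of emptier-wasted steps (conditioned on $m_\ell \le n \log n$) is at least $\E[b_\ell \mid m_\ell \le n \log n]/2 = \Omega(1)$, which is exactly the conclusion.

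There is essentially no obstacle: the three earlier lemmas have been set up to plug directly into this composition. The only thing to check carefully is that the conditioning is consistent across applications, i.e.\ that $m_\ell$ is indeed measurable with respect to the event being conditioned upon in Lemma \ref{lem:variation-to-bolus} and Lemma \ref{lem:bolus-to-wasted-steps}, so that iterated expectation is valid. Since $v_\ell$ and $b_\ell$ are both functions of the (random) cup state at the start of the $\ell$-cup phase, and $m_\ell$ is a coarser function of that same state, this is immediate; no additional probabilistic argument is required.
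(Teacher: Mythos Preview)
Your proposal is correct and is essentially identical to the paper's proof: the paper also chains Lemmas \ref{lem:bolus-to-wasted-steps}, \ref{lem:variation-to-bolus}, and \ref{lem:variation} via iterated expectation (in the reverse order of presentation from yours, but with the same logic) to conclude $\E[\text{emptier-wasted steps} \mid m_\ell \le n\log n] \ge \E[b_\ell/2 \mid m_\ell \le n\log n] = \E[v_\ell/(4(\ell+1)) \mid m_\ell \le n\log n] \ge \Omega(1)$.
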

\begin{proof}
  By Lemma \ref{lem:bolus-to-wasted-steps}, the expected number of
  emptier-wasted steps is at least
  $\E[b_\ell / 2 \mid m_\ell \le n \log n]$. By Lemma
  \ref{lem:variation-to-bolus},
  $\E[b_\ell \mid m_\ell \le n \log n] = \E[v_{\ell} / 2(\ell + 1)
  \mid m_\ell \le n \log n]$. By Lemma \ref{lem:variation},
  $\E[v_{\ell} / 2(\ell + 1) \mid m_\ell \le n \log n] \ge
  \Omega(1)$. Thus the expected number of emptier-wasted steps in the
  $\ell$-cup phase is $\Omega(1)$.
\end{proof}

We can now complete the proof of Theorem \ref{thm:general-filler-strategy}.
\begin{proof}[Proof of Theorem \ref{thm:general-filler-strategy}]
  For each $\ell \in \{n^{2/3} + 1, \ldots, n\}$, let $\phi_\ell$
  denote the average fill of cups $1, 2, \ldots, \ell$ at the end of
  the $\ell$-cup phase. We will show that
  $\E[\phi_{n^{2/3} + 1}] = \Omega(\log n)$, completing the proof of
  the theorem.

  Consider the $\ell$-cup phase, where
  $\ell \in \{n^{2/3} + 1, \ldots, n\}$. At the beginning of the
  phase, the average amount of water $\psi_\ell$ in cups
  $1, 2, \ldots, \ell$ has expected value
  $\E[\psi_\ell] = \E[\phi_{\ell + 1}]$, since the cups
  $1, 2, \ldots, \ell + 1$ are indistinguishable up until the
  beginning of the $\ell$-cup phase. If $w_\ell$ is the expected
  number of emptier-wasted steps in phase $\ell$, then,
  $\E[\phi_\ell] = \E[\psi_\ell] + w_\ell / \ell$. Hence,
  $$\E[\phi_{n^{2/3}}] \ge \Omega\left(\frac{w_n}{n - 1} + \frac{n_{n - 1}}{n - 2} +
    \cdots + \frac{w_{n^{2/3} + 1}}{n^{2/3}}\right).$$

  Let $X$ denote the event that $m_\ell \le n \log n$ for all
  $\ell \in \{n^{2/3} + 1, \ldots, n\}$. If $\Pr[X] \ge 1/2$, then it
  follows from Lemma \ref{lem:emptier-wasted2} that
  $w_\ell \ge \Omega(1)$ for each
  $\ell \in \{n^{2/3} + 1, \ldots, n\}$. Thus,
  $$\E[\phi_{n^{2/3}}] \ge \Omega\left(\frac{1}{n} + \frac{1}{n - 2} +
    \cdots + \frac{1}{n^{2/3} + 1}\right) \ge \Omega(\log n),$$ as
  desired.

  Now consider the case where $\Pr[X] < 1/2$. That is, with
  probability at least $1/2$, there is some $\ell$ for which
  $m_\ell \ge n \log n$. Let $\ell$ be the largest $\ell$ for which
  $m_\ell \ge n \log n$. Then the average fill $\phi_\ell$ of cups
  $1, 2, \ldots, \ell + 1$ at the beginning of the $\ell$-cup phase is
  at least $\log n$. Note that, for any phase $r$ and value $k$, we
  have $\E[\phi_{r - 1} \mid \phi_r = k] \ge k$. Given that
  $\phi_\ell \ge \log n$, it follows that
  $\E[\phi_{\ell - 1}], \E[\phi_{\ell - 2}], \ldots \ge \log n$. This
  means that $\E[\phi_{n^{2/3}} \mid X] \ge \log n$. Since $X$ occurs
  with probability at least $1/2$, we have that
  $\E[\phi_{n^{2/3}}] \ge \Omega(\log n)$, completing the proof.
\end{proof}

\subsection{Giving the Emptier a Time Stamp}\label{sec:time}

In this section, we show that unending guarantees continue to be
impossible, even if the score-based emptier is permitted to change
their algorithm based on a global time stamp.

A \defn{dynamic score-based emptying algorithm} $\mathcal{A}$ is
dictated by a sequence
$\langle \mathcal{X}_1, \mathcal{X}_2, \mathcal{X}_3, \ldots \rangle$,
where each $X_i$ is a score-based emptying algorithm. On step $t$ of
the cup game, the algorithm $\mathcal{A}$ follows algorithm
$\mathcal{X}_t$.


Define the \defn{extended oblivious fuzzing filling algorithm} to
be the oblivious fuzzing filling strategy, except that each
phase's length is increased to consist of $T(n)$ steps, where $T$ is a
sufficiently large function of $n$ (that we will choose later).

\begin{theorem}
  Consider a cup game on $n$ cups. Suppose the emptier is a dynamic
  score-based emptier. Suppose the filler follows the extended
  oblivious fuzzing filling algorithm. Then at the beginning of
  the $n^{2/3}$-cup phase, the average fill of cups
  $1, 2, \ldots, n^{2/3}$ is $\Omega(\log n)$, in expectation.
\label{thm:timestamp}
\end{theorem}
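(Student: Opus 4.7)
The plan is to reduce to the static case of Theorem~\ref{thm:general-filler-strategy} via a pigeonhole argument enabled by the extended phase length $T(n)$. The high-level observation is that, although a dynamic score-based emptier may in principle change its score function at every step, the space of distinct ``effective'' score-based emptiers on bounded-backlog cup states is finite; so by making $T(n)$ sufficiently large we can guarantee that within each phase some effective emptier $\mathcal{X}^*$ is used at at least $T_0 := n^{\Theta(n \log n)}$ of the steps, where $T_0$ is the phase length used in Theorem~\ref{thm:general-filler-strategy}.

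First, I would observe that we may assume the backlog throughout each phase is bounded by some $M = O(\polylog n)$; otherwise a single cup already attains fill $\omega(\log n)$ and the conclusion of the theorem follows trivially. Under this bound, the cup state at each step lies in $\{0, 1/2, 1, \ldots, M\}^n$, and a score-based emptier is fully determined on this set by a total ordering of the $n(2M+1)$ pairs $(j, r)$. Hence the number of distinct effective score-based emptiers is at most $K := (n(2M+1))!$, so setting $T(n) \geq T_0 \cdot K$ (and eventually much larger) guarantees by pigeonhole that in each phase some effective emptier $\mathcal{X}^*$ is used at at least $T_0$ of the steps; call these the $\mathcal{X}^*$-steps.

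Fixing $\mathcal{X}^*$ and its equilibrium state $E_\ell$, the plan is to re-run the proofs of Lemmas~\ref{lem:bolus-to-wasted-steps}, \ref{lem:variation-to-bolus}, and~\ref{lem:variation} using $E_\ell$ as the reference equilibrium throughout the entire phase. Lemmas~\ref{lem:variation-to-bolus} and~\ref{lem:variation} carry over essentially verbatim since they depend only on the randomness of the filler's placements and on the symmetry of the cup labels. The challenging lemma is Lemma~\ref{lem:bolus-to-wasted-steps}: the potential-function argument in Claim~\ref{clm:convergence-to-wasted} needs the potential $\phi$ to decrease at every non-emptier-wasted equilibrium-converging step, but in the dynamic setting only $\mathcal{X}^*$-steps can be guaranteed to decrease $\phi$. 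The saving grace is that non-$\mathcal{X}^*$-steps (``interference steps'') cannot \emph{increase} $\phi$: if such a step is equilibrium-converging with respect to $E_\ell$, the filler only deposits water into cups whose post-placement fills are at most $E_\ell(j)$, and any emptier removal can only decrease $\phi$ further. Thus $\phi$ remains monotonically nonincreasing across all steps, regardless of which effective emptier is in use.

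The main obstacle is extending the quantitative bookkeeping of Lemma~\ref{lem:bolus-to-wasted-steps} to accommodate interference steps. The plan is to lengthen each window $L_i$ from $2n \log n$ steps to $2n \log n \cdot K$ steps, so that pigeonhole guarantees at least $2n \log n$ many $\mathcal{X}^*$-steps per window, which is enough to deplete the initial potential $\phi \leq 2n \log n - 1$ once every step in the window is equilibrium-converging. The probability that every step in such a lengthened window is equilibrium-converging drops to roughly $1/n^{\Theta(n \log n \cdot K)}$, so the collections $C_k$ in the original proof must be enlarged accordingly and $T(n)$ must be chosen as a tower-like function of $n$ and $K$ (e.g.\ $T(n) = n^{\Theta(K \cdot n \log n)}$). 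Since $K$ is a fixed function of $n$ and $T(n)$ is allowed to be arbitrarily large, this is admissible. The remaining subtle point is simply to verify rigorously that $\phi$ is indeed nonincreasing under arbitrary interference emptiers, which follows from the fact that a score-based emptier can only remove water, never add it, together with monotonicity.
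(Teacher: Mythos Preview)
Your overall architecture matches the paper's: a finiteness argument on effective score-based emptiers, a pigeonhole to isolate one emptier $\mathcal{X}^*$ (the paper calls it $A_\ell$), and then a rerun of Lemmas~\ref{lem:bolus-to-wasted-steps}--\ref{lem:variation} relative to the equilibrium $E_\ell$ of $\mathcal{X}^*$. But there are two real gaps.

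\textbf{The pigeonhole is misapplied.} You choose $\mathcal{X}^*$ once per phase and then assert that ``pigeonhole guarantees at least $2n\log n$ many $\mathcal{X}^*$-steps per window'' of length $2n\log n\cdot K$. Pigeonhole inside a window only tells you \emph{some} emptier is used $\ge 2n\log n$ times there; it need not be $\mathcal{X}^*$. All the $\mathcal{X}^*$-steps in the phase could be bunched into a few windows. The paper fixes this with a two-level pigeonhole: first, each \emph{segment} of $2n\log n\cdot|\mathcal{A}|$ steps focuses on whatever emptier is most frequent \emph{in that segment}; second, among segments, pigeonhole gives many segments all focusing on the same $A_\ell$. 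The analysis is then restricted to those segments. Your argument would go through if you made the same restriction rather than claiming it for every window.

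\textbf{The bounded-backlog reduction does not yield the conclusion.} You assume $M=O(\polylog n)$ on the grounds that otherwise ``a single cup already attains fill $\omega(\log n)$.'' The theorem asserts that the \emph{average} fill of $n^{2/3}$ cups is $\Omega(\log n)$; one overfull cup does not give that. The paper instead bounds the total water by $n\log n$ (the condition $m_\ell\le n\log n$, handled case-by-case in the proof of Theorem~\ref{thm:general-filler-strategy}), which is both weaker and sufficient to make the space of legal states, and hence of behavior vectors, finite.

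\textbf{Where you genuinely differ from the paper.} Your treatment of interference steps is to keep the same definition of equilibrium-converging and argue that $\phi$ is nonincreasing on such steps (since the filler only adds water below $E_\ell$ and the emptier only removes water). This is correct and would work once the pigeonhole is repaired. The paper takes a different route: for non-$A_\ell$ steps it \emph{redefines} equilibrium-converging to mean the filler dumps its full unit into whichever cup the current emptier would pick, so that such a step is either emptier-wasted or an exact no-op on the cup state. This buys a cleaner reduction: the non-$A_\ell$ steps vanish entirely from the state evolution, and Claim~\ref{clm:convergence-to-wasted} applies verbatim to the $A_\ell$-steps.
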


\paragraph{Understanding when two score-based algorithms can be treated as ``equivalent''}
We say that the cups $1, 2, \ldots, n$ are in a \defn{legal state} if
each cup contains an integer multiple of $1/2$ water, and the total
water in the cups is at most $n \log n$. By the assumption that the
emptier is backlog-bounded, and that the filler follows the extended
oblivious fuzzing filling algorithm, we know that the cup game
considered in Theorem \ref{thm:timestamp} will always be in a legal
state.

Let $\mathbf{L}$ denote the set of legal states. For each score-based
emptying algorithm $\mathcal{X}$, we define the \defn{behavior vector}
$B(\mathcal{X})$ of $\mathcal{X}$ to be the set
$$B(\mathcal{X}) = \{(L, k) \mid \mathcal{X} \text{ empties from cup }k\text{ when the cups are in state }L \in \mathbf{L}\}.$$
Note that, for some states $L \in \mathbf{L}$, the emptier
$\mathcal{X}$ may choose not to empty from any cup---in this case, $L$
does not appear in any pair in $B(\mathcal{X})$.

The behavior vector $B(\mathcal{X})$ captures $\mathcal{X}$'s behavior
on all legal states. If $B(\mathcal{X}) = B(\mathcal{X}')$ for two
score-based emptying algorithms $\mathcal{X}$ and $\mathcal{X}'$, then
we treat the two emptying algorithms as being the same (since their
behavior is indistinguishable on the cup games that we are
analyzing). This means that the number of distinct score-based
emptying algorithms is finite, bounded by $(n +
1)^{|\mathcal{L}|}$. We will use $\mathcal{A}$ to denote the set of
distinct score-based emptying algorithms. Formally, each element of
$\mathcal{A}$ is an equivalence class of algorithms, where each
score-based algorithm is assigned to an equivalence class based on its
behavior vector $B(\mathcal{X})$.

\paragraph{Associating each phase with a score-based algorithm that it ``focuses'' on}
In order to analyze the $\ell$-cup phase of the extended oblivious
fuzzing filling algorithm, we break the phase into
\defn{segments}, where each segment consists of $2 n\log n \cdot |\mathcal{A}|$
steps. For each segment, there must be an algorithm $A \in \mathcal{A}$
that the emptier uses at least $2n\log n$ times within the segment. We say
that the segment \defn{focuses} on emptying algorithm $A$.

Let $K$ denote the number of segments in the $\ell$-cup phase. By the
pigeon-hole principle, there must be some algorithm $A \in \mathcal{A}$
such that at least $K / |\mathcal{A}|$ of the segments in the
$\ell$-cup phase focus on $A$. We say that the $\ell$-cup phase
\defn{focuses} on algorithm $A$.

For each $\ell \in \{1, \ldots, n\}$, let $A_\ell$ denote the
score-based emptying algorithm that the $\ell$-cup phase focuses on
(if there are multiple such algorithms $A_\ell$, select one
arbitrarily).

Our proof of Theorem \ref{thm:timestamp} will analyze the $\ell$-cup
phase by focusing on how the phase interacts with algorithm
$A_\ell$. This is made difficult by the fact that, between every two
steps in which the emptier uses algorithm $A_\ell$, there may be many
steps in which the emptier uses other score-based emptying algorithms.

\paragraph{Defining the equilibrium state and bolus of each phase}
We define the equilibrium state $E_\ell$ and the bolus $b_\ell$ of the
$\ell$-cup phase to each be with respect to the score-based emptying
algorithm $A_\ell$. That is, $E_\ell$ is the equilibrium state for
algorithm $A_\ell$ for the cups $1, 2, \ldots, \ell + 1$ in which the
total amount of water (in those cups) is $m_\ell + 1$ (recall that
$m_\ell$ is the amount of water in cups $1, 2, \ldots, \ell + 1$ at
the beginning of the $\ell$-cup phase). Using this definition of
$E_\ell$, the bolus is $b_\ell = \max(0, r - E_\ell(\ell + 1))$, where
$r$ is the amount of water in cup $\ell + 1$ at the beginning of the
$\ell$-cup phase.

The key to proving Theorem \ref{thm:timestamp} is to show that, if
$m_\ell \le n \log n$, then the expected number of emptier-wasted
steps in the $\ell$-cup phase is at least $\Omega(b_\ell)$. That is,
we wish to prove a result analogous to Lemma
\ref{lem:bolus-to-wasted-steps} from Section \ref{sec:filling}.

\begin{lemma}
  Let $\ell \in [n - 1]$, condition on $m_\ell \le n \log n$, and
  condition on some value for $b_\ell$. Under these conditions, the
  the expected number of emptier-wasted steps in the $\ell$-cup phase
  is at least $b_\ell / 2$.
  \label{lem:bolus-to-wasted-steps2}
\end{lemma}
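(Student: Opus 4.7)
The plan is to follow the structure of Lemma \ref{lem:bolus-to-wasted-steps}, but to lift its potential-function argument from the $L_i$'s of length $2n\log n$ to the segments of length $2n\log n \cdot |\mathcal{A}|$ introduced in Section \ref{sec:time}. By pigeonhole, each focus segment contains at least $2n\log n$ $A_\ell$-active steps. Moreover, as long as fewer than $\lceil b_\ell \rceil$ emptier-wasted steps have occurred in the phase, the total water in cups $1,\ldots,\ell$ stays below $\sum_{i=1}^\ell E_\ell(i) - 1$, so every focus segment begins in a convergence-enabling state and, absent an emptier-wasted step within it, satisfies the Bolus Property throughout its span.

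The heart of the argument is to adapt Claim \ref{clm:convergence-to-wasted}: if $S$ is a focus segment whose first step is convergence-enabling and every step of $S$ is equilibrium-converging with respect to $E_\ell$, then some step in $S$ is emptier-wasted. I will use the same potential $\phi$, which is bounded by $2n\log n - 1$ at the start of $S$. Equilibrium-converging steps prevent the filler from raising $\phi$, and any non-emptier-wasted step, regardless of which score-based algorithm is active, has the emptier remove a unit from some cup in $\{1,\ldots,\ell\}$, which can only leave $\phi$ unchanged or decrease it. The key refinement is that on an $A_\ell$-active non-emptier-wasted equilibrium-converging step with $\phi > 0$, $\phi$ must drop by at least $1$: because cup $\ell+1$ exceeds $E_\ell(\ell+1)$ by the Bolus Property, the equilibrium property of $E_\ell$ under $A_\ell$'s score functions forces the chosen cup $k \in \{1,\ldots,\ell\}$ to be above $E_\ell(k)$, since otherwise $A_\ell$ would have preferred cup $\ell+1$, contradicting non-emptier-wasted. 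Hence after at most $2n\log n - 1$ $A_\ell$-active steps of $S$ the potential reaches $0$; at the next $A_\ell$-active step of $S$ every cup in $\{1,\ldots,\ell\}$ sits at or below $E_\ell$ while cup $\ell+1$ sits strictly above, so the same scoring argument shows $A_\ell$ prefers cup $\ell+1$, making that step emptier-wasted. The main obstacle is precisely this step: in the dynamic setting non-$A_\ell$ algorithms may legally empty below-equilibrium cups in $\{1,\ldots,\ell\}$, so $\phi$ can fail to decrease on such steps; the saving observation is that $\phi$ is still monotonically non-increasing on them, so all of the $\phi$-reducing work is done by the $A_\ell$-active steps alone, of which the focus property of $S$ supplies just enough.

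The remainder of the argument copies Lemma \ref{lem:bolus-to-wasted-steps} almost verbatim. By the Convergence Property, each step of $S$ is equilibrium-converging with probability at least $1/n^2$, so $S$ produces an emptier-wasted step with probability at least $p := n^{-4n\log n \cdot |\mathcal{A}|}$. Grouping focus segments into collections of size $n/p$, each collection produces an emptier-wasted step with probability at least $1 - 1/e^n$ provided fewer than $\lceil b_\ell \rceil$ emptier-wasted steps have accumulated at its start. Choosing $T(n)$ large enough (a sufficiently fast-growing function of $n$ and $|\mathcal{A}|$) so that the $\ell$-cup phase contains at least $\lceil b_\ell \rceil \le n\log n$ such collections, a union bound yields probability at least $1/2$ that all of them succeed, producing at least $\lceil b_\ell \rceil$ emptier-wasted steps and therefore $\E[\text{emptier-wasted} \mid m_\ell \le n\log n, b_\ell] \ge b_\ell/2$, as required.
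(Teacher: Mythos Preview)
Your proof is correct, but it handles the non-$A_\ell$ steps differently from the paper. The paper's key move is to \emph{redefine} ``equilibrium-converging'' with two cases: on $A_\ell$-active steps it keeps the original meaning (filler stays at or below $E_\ell$), but on steps where the emptier uses some $A \neq A_\ell$, it instead asks that the filler place its full unit into the single cup in $\{1,\ldots,\ell\}$ with highest $A$-score. This choice forces each non-$A_\ell$ equilibrium-converging step to be either emptier-wasted or a literal no-op on the cup state, so the $\ge 2n\log n$ $A_\ell$-active steps in a focus segment can be fed directly into the original Claim~\ref{clm:convergence-to-wasted} with no further work.

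You instead keep the uniform $E_\ell$-based definition on every step and carry the potential argument through by hand: $\phi$ is non-increasing on non-$A_\ell$ steps (the filler cannot raise it, and any removal from $\{1,\ldots,\ell\}$ cannot raise it), while on each $A_\ell$-active non-emptier-wasted step the Bolus Property together with the equilibrium property of $E_\ell$ under $A_\ell$'s scores forces the removed cup to sit strictly above its equilibrium level, dropping $\phi$ by at least $1$. This is a genuinely different decomposition. Its advantage is that the Convergence Property is uniform across all steps (one never has to identify the emptier's target cup), and the argument is more self-contained. The paper's no-op trick, on the other hand, is slicker: it collapses the dynamic case back to the static claim in one line rather than reworking the potential descent. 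Both routes give the same probability bound $p = n^{-4n\log n\cdot|\mathcal{A}|}$ per focus segment and the same endgame with collections, so the final conclusion $\E[\text{emptier-wasted}] \ge b_\ell/2$ is identical.
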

\begin{proof}
  Call a step $t$ in the $\ell$-cup phase \defn{equilibrium-converging} if
  either:
  \begin{itemize}
  \item The emptier uses algorithm $A_\ell$ during step $t$, and for
    each cup $j$ that the filler places water into, the fill $x$ of
    cup $j$ after the water is placed satisfies $x \le E_\ell(j)$.
  \item The emptier uses an algorithm $A \neq A_\ell$ during step $t$,
    and the filler places all of their water (i.e., a full unit) into
    the cup $j$ whose score (as assigned by the score-based algorithm
    $A$) is largest at the beginning of step $t$.
  \end{itemize}
  The first case in the definition of equilibrium-converging steps is
  similar to that in the proof of Lemma
  \ref{lem:bolus-to-wasted-steps}. The second case, where the emptier
  uses an algorithm $A \neq A_\ell$ is different; in this case, the
  definition guarantees that the step is either emptier-wasted or is a
  \defn{no-op} (meaning that the water removed by the emptier during
  the step is exactly the same as the water placed by the filler).

  Call a step in the $\ell$-cup phase a \defn{convergence-enabling} if
  the total amount of water in cups $1, 2, \ldots, \ell$ is less than
  $\left( \sum_{i = 1}^\ell E_\ell(i) \right) - 1$ at the beginning of
  the step.

  Just as in the proof of Lemma \ref{lem:bolus-to-wasted-steps}, convergence-enabling steps have two important properties: \\
  \textbf{The Convergence Property: }For any convergence-enabling step
  $t$, there is some pair of cups $j, k$ (possibly $j = k$) that the
  filler can place water into in order so that the step is equilibrium
  converging. Thus, whenever a convergence-enabling step occurs, there
  is probability of at least $1/\ell^2$ that the step is equilibrium
  converging. \\
  \textbf{The Bolus Property: }At the beginning of any
  convergence-enabling step, the amount of water in cup $\ell + 1$
  must be greater than $E_{\ell}(\ell + 1)$. This is a consequence of
  the fact that the total amount of water in cups
  $1, \ldots, \ell + 1$ is at least $m_\ell = \left(\sum_{i = 1}^{\ell + 1} E_{\ell}(i)\right) - 1$.

    \vspace{.2 cm}

  We now prove a claim analogous to Claim
  \ref{clm:convergence-to-wasted}.
  \begin{claim}
    Suppose $m_\ell \le n \log n$, and consider a segment $S$ in the
    $\ell$-cup phase that focuses on $A_\ell$. If $S$ begins with a
    convergence-enabling step, and every step in $S$ is equilibrium
    converging, then $S$ must contain an emptier-wasted step.
    \label{clm:convergence-to-wasted2}
  \end{claim}
  \begin{proof}
    There are two types of steps in $S$: (1) equilibrium converging
    steps where the emptier uses algorithm $A_\ell$, and (2)
    equilibrium converging steps where the emptier does not use
    $A_\ell$. All type (2) steps are either emptier-wasted or are
    no-ops (meaning that they do not change the state of the cup
    game). On the other hand, because segment $S$ focuses on $A_\ell$,
    there must be at least $2 n \log n$ type (1) steps. Assuming no
    type (2) steps are emptier wasted, then the type (1) steps meet
    the conditions for Claim \ref{clm:convergence-to-wasted} (i.e.,
    the type (1) steps meet the conditions that are placed on $L_i$ in
    the claim). Thus, by Claim \ref{clm:convergence-to-wasted}, at
    least one of the steps in $S$ is emptier-wasted, as desired.
  \end{proof}

  We can now complete the proof of the lemma. For each segment $S$
  that focuses on $A_\ell$, if the number of emptier-wasted steps
  prior to $S$ is less than $\lceil b_\ell \rceil$, then the first
  step of $S$ is convergence-enabling (and any steps in $S$ up until
  the first emptier-wasted step in $S$ are also convergence
  enabling). By the Convergence Property and Claim
  \ref{clm:convergence-to-wasted2}, it follows that $S$ has
  probability at least
  $$p := 1 / \ell^{2|S|} = 1 / \ell^{2n \log n |\mathcal{A}|}$$
  of containing an emptier-wasted step. 

  If $K$ is the number of segments in a phase, then at least
  $K' = K / |\mathcal{A}|$ of the segments in phase $\ell$ must focus
  on algorithm $A_\ell$. Denote these segments by
  $S_1, \ldots, S_{K'}$. Break the $\ell$-cup phase into
  \defn{collections} $C_1, C_2, \ldots, C_{2n \log n}$ of time
  segments, where each $C_i$ contains $K' / (2n \log n)$ of the
  $S_i$'s. Say that a collection $C_i$ \defn{fails} if fewer than
  $\lceil b_\ell \rceil$ emptier-wasted steps occur prior to $C_i$,
  and no emptier-wasted step occurs during $C_i$. Since each $C_i$
  contains at least $K' / (2n \log n)$ segments that focus on
  $A_\ell$, the probability of $C_i$ failing is at most,
  \begin{equation}
    (1 - p)^{K' / (2n \log n)}.
    \label{eq:needKlarge}
  \end{equation}
  Assuming the $K$ is sufficiently large as a function of $n$, then
  the exponent in \eqref{eq:needKlarge} is also sufficiently large as
  a function of $n$, and thus \eqref{eq:needKlarge} is at most
  $1/(4n\log n)$. By a union bound over the $C_i$'s, it follows that
  the probability of any $C_i$ failing is at most $1/2$. On the other
  hand, if none of the $C_i$'s fail, then at least
  $\lceil b_\ell \rceil$ steps must be emptier-wasted (here, we are
  using the fact that the number of collections $C_i$ is
  $2n\log n \ge m_\ell \ge \lceil b_\ell \rceil$). Thus the expected
  number of emptier-wasted steps is at least
  $\lceil b_\ell \rceil / 2$.
\end{proof}

We can now prove Theorem \ref{thm:timestamp}.

\begin{proof}[Proof of Theorem \ref{thm:timestamp}]
  The proof follows exactly as for Theorem
  \ref{thm:general-filler-strategy}, except that Lemma
  \ref{lem:bolus-to-wasted-steps} is replaced with Lemma
  \ref{lem:bolus-to-wasted-steps2}.
\end{proof}

\subsection{Unending guarantees with small resource augmentation}\label{sec:specific_epsilon1}

In this section we show that, even though resource augmentation
$\epsilon > 0$ is needed to achieve unending guarantees for the
smoothed greedy (and asymmetric smoothed greedy) emptying
algorithms, the amount of resource augmentation that is necessary is
substantially smaller than was previously known. In particular, we
prove unending guarantees when $\epsilon = 2^{-\polylog n}$.

Theorem \ref{thm:lowaug} states an unending guarantee for the smoothed
greedy emptying algorithm, using $\epsilon = 2^{-\polylog n}$.

\begin{theorem}
  Consider a single-processor cup game in which the emptier follows
  the smoothed greedy emptying algorithm, and the filler is an
  oblivious filler. If the game has resource augmentation parameter
  $\epsilon \ge 2^{-\polylog n}$, then each step $t$ achieves backlog
  $O(\log \log n)$ with probability $1 - 2^{-\polylog n}$ (where the
  exponent in the polylog is a constant of our choice).
\label{thm:lowaug}
\end{theorem}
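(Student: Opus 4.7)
The plan is to combine the finite-horizon backlog analysis for the smoothed greedy algorithm (available as Lemma \ref{lem:Kuszmaul20}, originally due to \cite{Kuszmaul20}) with a regenerative argument that uses the resource augmentation $\epsilon$ to ``reset'' the game often enough that a recent reset can always be invoked. The finite-horizon analysis tells us that over any $\poly n$-step window beginning at a sufficiently ``fresh'' state, the backlog remains $O(\log \log n)$ with high probability in $n$. Our task is to show that (i) such reset states recur within every window of length $\tau = 2^{\polylog n}$, and (ii) this suffices to extend the backlog bound to arbitrarily long games, with the stated failure probability.

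First, I would set up a drift argument on the tail size $X_t$, i.e.\ the number of cups with fill at least $1$. Under the smoothed greedy algorithm with resource augmentation $\epsilon$, and conditioned on $X_t > 0$, the process $X_t$ is dominated by a lazy biased random walk with downward bias $\Omega(\epsilon)$: in each step, the emptier removes one unit from a cup in the tail, while the filler contributes $1 - \epsilon$ units of water, so by Lemma \ref{lem:smoothing} the expected change in $X_t$ is at most $-\epsilon$. By Azuma--Hoeffding (or a supermartingale optional-stopping argument) applied over $\tau = c \cdot \polylog n / \epsilon = 2^{\polylog n}$ steps, starting from any state with $X_t \le \polylog n$, the tail hits $0$ within $\tau$ steps with probability $1 - 2^{-\polylog n}$.

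Next, I would show that whenever $X_t = 0$ (a \emph{reset state}), the game has essentially restarted from a fresh random configuration: because the emptier only ever removes integer amounts and the initial offsets $r_j$ are uniform in $[0,1)$, conditioned on $X_t = 0$ the fractional fills are mutually independent uniform in $[0,1)$. From any such reset, Lemma \ref{lem:Kuszmaul20} (applied to the next $\poly n$ steps) bounds backlog by $O(\log \log n)$ with failure probability $1/n^c$ for an arbitrarily large constant $c$ of our choice; by choosing $c$ sufficiently large compared to the polylog exponents, this failure probability is $\le 2^{-\polylog n}$.

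Finally, I would chain resets to handle arbitrary step $t$: locate the most recent reset $t^* \le t$, show $t - t^* \le \tau$ with probability $1 - 2^{-\polylog n}$ by the drift argument, then partition $[t^*, t]$ into consecutive $\poly n$-sized subwindows, each of which (by iterating the drift argument) begins at a reset state of its own with overwhelming probability. The finite-horizon bound applies to each subwindow, and a union bound over the $\tau / \poly n \le 2^{\polylog n}$ subwindows yields backlog $O(\log \log n)$ at step $t$ with failure probability $2^{-\polylog n}$. The main obstacle is the delicate interplay between the reset time-scale $\tau = 2^{\polylog n}$ and the $\poly n$-scale of the finite-horizon analysis: when $\epsilon$ is as small as $2^{-\polylog n}$, the drift is barely sufficient to force a reset within $\tau$, so the martingale concentration must be tight enough that the exponential tail bound survives the union bound over sub-windows while still yielding the claimed failure probability $2^{-\polylog n}$.
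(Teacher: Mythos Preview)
Your overall strategy --- locate a ``reset'' state where every cup has fill below $1$, then invoke a finite-horizon backlog bound from there --- matches the paper's, but there is a genuine gap in your step (ii). You claim that ``conditioned on $X_t = 0$ the fractional fills are mutually independent uniform in $[0,1)$.'' This is false: the fractional fills $r_j + c_j(t) \bmod 1$ are \emph{unconditionally} i.i.d.\ uniform, but the event $\{X_t = 0\}$ is itself a function of the $r_j$'s (through the emptier's choices), so conditioning on it biases their joint distribution. For a concrete counterexample take $n = 2$, $\epsilon = 0$, and let the filler place $1/2$ in each cup at step $1$; one checks that $X_1 = 0$ if and only if not both $r_1, r_2 \ge 1/2$, and conditioned on this event the two fractional fills are not independent (the event $\{f_1 < 1/2,\, f_2 < 1/2\}$ has conditional probability $0$). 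Hence you cannot directly apply the finite-horizon bound to the game starting at a random reset time $t^*$ that itself depends on the offsets.

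The paper avoids conditioning altogether. First, it observes that the \emph{total water} decreases by at least $\epsilon$ at every non-rest step, so a rest step occurs within every window of $n/\epsilon + 1$ steps \emph{deterministically} (Lemma \ref{lem:reststep}); this replaces your probabilistic drift on $X_t$ with something both simpler and stronger (and sidesteps the issue that the per-step increments of $X_t$ are not bounded, so Azuma does not apply cleanly). Second, for each \emph{fixed} $t_0$ in the window $\{t - n/\epsilon, \ldots, t\}$, it considers the \emph{hypothetical} game that is fractionally reset at $t_0$; in this hypothetical the offsets $r_j + c_j(t_0) \bmod 1$ really are i.i.d.\ uniform (no conditioning), so the finite-horizon bound applies --- and the paper uses Theorem \ref{thm:old}, whose horizon is already $2^{\polylog n}$, making your sub-window chaining unnecessary. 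A union bound over the $n/\epsilon \le 2^{\polylog n}$ values of $t_0$ shows that all hypothetical games simultaneously have small backlog at step $t$; since the actual game coincides with the hypothetical reset at whichever $t_0$ happens to be the rest step, the bound transfers.
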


Theorem \ref{thm:lowaug2} states an unending guarantee for the
asymmetric smoothed greedy emptying algorithm, using
$\epsilon = 2^{-\polylog n}$.

\begin{theorem}
  Consider a single-processor cup game in which the emptier follows
  the asymmetric smoothed greedy emptying algorithm, and the filler
  is an oblivious filler. If the game has resource augmentation
  parameter $\epsilon \ge 2^{-\polylog n}$, then each step $t$
  achieves tail size $O(\polylog n)$ and the backlog $O(\log \log n)$
  with probability $1 - 2^{-\polylog n}$ (where the exponent in the
  polylog in the probability is a constant of our choice).
\label{thm:lowaug2}
\end{theorem}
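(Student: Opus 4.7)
The proof extends the finite-horizon analysis of Section \ref{sec:analysis} to the unending setting by exploiting the negative drift on $|Q|$ provided by resource augmentation. My plan is to adapt Lemmas \ref{lem:lots-of-k-crossings}, \ref{lem:bolus-variance}, and \ref{lem:queued-thresholds} to operate at the $\polylog n$ scale (rather than the $\log n$ scale) and over look-back intervals of length at most $2^{\polylog n}$, and then conclude using Lemma \ref{lem:reduceproperty}.

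First I would strengthen the probability bounds in Lemmas \ref{lem:lots-of-k-crossings} and \ref{lem:bolus-variance} by enlarging the deviation parameters in the underlying Chernoff bounds. Replacing $\log n$ by $\polylog n$ wherever it appears (in both the bounds and the error-probability exponents) yields analogous statements whose failure probability is at most $2^{-\polylog n}$ rather than $1/\poly n$. The cost is that, for an interval of influence $s$, the variation of $|Q|$ is now bounded by $O(\sqrt{s \polylog n} + \polylog n)$ and the count of distinct priority-level-$\ell$ cups crossing thresholds is $s/(2q) - O(\polylog n)$.

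The key new ingredient is controlling the look-back intervals $I_k$ from the proof of Lemma \ref{lem:queued-thresholds}. I would prove that, with probability $1 - 2^{-\polylog n}$, every relevant look-back interval ending at step $t$ has length at most $L := 2^{\polylog n}$. This uses the fact that, in the resource-augmented game, the filler deposits at most $1 - \epsilon$ water per step while the emptier removes one unit whenever $|Q| > 0$, so $|Q|$ has drift at most $-\epsilon$. Combined with the strengthened version of Lemma \ref{lem:bolus-variance} (which controls deviations from this drift at the $\polylog n$ scale) and the choice $\epsilon \ge 2^{-\polylog n}$, this lets me conclude via an Azuma-type argument that over any window of length $L$ the queue $|Q|$ empties with probability $1 - 2^{-\polylog n}$. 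Consequently, no look-back interval $I_k$ (for $k$ up to the needed $\polylog n$ values) needs to extend back further than $L$ steps, so the union bound over intervals in the proof of Lemma \ref{lem:queued-thresholds} is over only $L = 2^{\polylog n}$ candidate intervals.

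With these ingredients I would apply the strengthened Lemma \ref{lem:queued-thresholds} inductively over the $O(\log \log n)$ priority levels, concluding that the number of queued cups at priority level $\ge q/2$ at step $t$ is $O(\polylog n)$ with probability $1 - 2^{-\polylog n}$. The standard Chernoff bound on the number of high-priority cups inside a given set $S$ (now at the stronger confidence level $1 - 2^{-\polylog n}$) then gives $\polylog n$-unpredictability at step $t$, and Lemma \ref{lem:reduceproperty} translates this into a tail-size bound of $O(\polylog n)$. The backlog bound of $O(\log \log n)$ follows from Theorem \ref{thm:lowaug} (which already handles the smoothed greedy behavior the algorithm exhibits on cups with fill $\ge 2$), or equivalently from running the deterministic greedy analysis on the $O(\polylog n)$ cups in the tail. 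The principal obstacle is the look-back-interval bound: without resource augmentation, $|Q|$ is (essentially) a martingale under an oblivious filler and can wander arbitrarily far, so there is no way to truncate the $I_k$'s; the hypothesis $\epsilon \ge 2^{-\polylog n}$ supplies precisely the drift needed for the truncation to succeed at the $1 - 2^{-\polylog n}$ confidence level demanded by the theorem.
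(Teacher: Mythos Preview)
Your approach is plausible and could likely be made to work, but it is substantially more laborious than the paper's, and it misses two simplifying observations that make the paper's proof essentially a one-liner.

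The paper does not re-derive any of Lemmas \ref{lem:lots-of-k-crossings}, \ref{lem:bolus-variance}, or \ref{lem:queued-thresholds} at a $\polylog n$ scale. Instead it observes that (i) with resource augmentation $\epsilon$, any run of $n/\epsilon + 1$ consecutive steps \emph{deterministically} contains a rest step (a step after which every cup has fill below $1$), simply because the total water in the system drops by at least $\epsilon$ on every non-rest step and is at most $n$ after any rest step (Lemma \ref{lem:reststep}); and (ii) after a rest step the system is in exactly the state of a fresh game with uniformly random initial offsets (since the fractional fills $r_j + c_j \bmod 1$ are still i.i.d.\ uniform and the priorities $p_j$ are still i.i.d.\ uniform and unknown to the oblivious filler). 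Hence one can black-box the finite-horizon result (Theorem \ref{thm:maincpy}, i.e., Theorem \ref{thm:main} restated for games of length $2^{\polylog n}$) starting from the most recent rest step in $\{t - n/\epsilon, \ldots, t\}$, and union-bound over the $n/\epsilon \le 2^{\polylog n}$ possible restart points. That is the entire argument.

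Compared to this, your proposal replaces the deterministic rest-step argument with a probabilistic Azuma-type drift argument on $|Q|$ (which needs an a priori bound on $|Q_{t-L}|$ that you do not supply, though it can be obtained from the same total-water potential the paper uses), and replaces the black-box application of the finite-horizon theorem with a full re-derivation of the influence-property machinery at a coarser scale. Neither step is wrong in spirit, but both are unnecessary. There is also a minor confusion in your write-up: the union bound you are worried about in Lemma \ref{lem:queued-thresholds} is over $k\in\{1,\ldots,n\}$, not over candidate intervals; the union bound over subintervals lives in Lemma \ref{lem:bolus-variance}. In the paper's approach this distinction never arises because the finite-horizon theorem is invoked as a black box.
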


We begin by proving Theorem \ref{thm:lowaug}.

Call a step $t$ a \defn{rest step} if the emptier removes less than
$1$ unit of water during that step. The next lemma shows that rest
steps are relatively common.
\begin{lemma}
  Consider a single-processor cup game in which the emptier follows either the smoothed
  greedy emptying algorithm. Any sequence of $n/\epsilon + 1$ steps
  must contain a rest step.
\label{lem:reststep}
\end{lemma}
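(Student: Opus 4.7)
The plan is to track the total bookkeeping water $W(t) = \sum_j f_j(t)$ across the game and show that any sufficiently long stretch of non-rest steps forces $W$ below zero. I would first record three elementary invariants. Initially $W(0) = \sum_j r_j < n$, since each offset $r_j \in [0,1)$. On any non-rest step $t$, the filler adds at most $1-\epsilon$ while the emptier removes a full unit, so $W(t) \le W(t-1) - \epsilon$. On any rest step $t$, the fullest cup has fill below $1$ after the filler acts and the emptier performs no removal, so every cup has fill strictly less than $1$ at the end of the step, giving $W(t) < n$.

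With these three facts in hand, I would argue by contradiction. Suppose some window of $n/\epsilon + 1$ consecutive steps contains no rest step. Extend this window backward in time until it either reaches the start of the game or lands immediately after a rest step; let $t_0$ denote that anchor point. The first or third invariant yields $W(t_0) < n$. By construction, every one of the $K \ge n/\epsilon + 1$ steps in $(t_0, t_0+K]$ is non-rest, so applying the second invariant $K$ times gives
\[
W(t_0 + K) \;\le\; W(t_0) - K\epsilon \;<\; n - (n/\epsilon + 1)\epsilon \;=\; -\epsilon \;<\; 0,
\]
which contradicts the non-negativity of each fill $f_j$ (guaranteed because the emptier only subtracts from a cup whose fill is at least $1$).

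The proof is essentially mechanical once the three invariants are set down. The main (and only mild) subtlety is the choice of anchor point $t_0$: the bound $W(t_0) < n$ holds only at the initial time or immediately after a rest step, so the hypothesized run of non-rest steps must be extended backward to one of those moments before the potential-function calculation applies.
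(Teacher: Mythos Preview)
Your proof is correct and takes essentially the same approach as the paper: both track the total water, observe it drops by at least $\epsilon$ on each non-rest step, and use that it is below $n$ at the start and after any rest step to bound the maximal run of non-rest steps. Your treatment is slightly more careful than the paper's in explicitly anchoring the run at $t_0$ (either time $0$ or the end of the preceding rest step) before applying the potential drop, whereas the paper leaves that step implicit.
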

\begin{proof}
  Whenever a step is a rest step, it must be that every cup contains
  less than $1$ unit of water, meaning that the total amount of water
  in the system is at most $n$. On the other hand, during each
  non-rest step, the amount of water in the system decreases by at
  least $\epsilon$. It follows that, if there are $k$ non-rest steps
  in a row, then the total amount of water in the system after those
  steps is at most $n - k\epsilon$. Thus the number of non-rest steps
  that can occur in a row is never more than $n/\epsilon$, as desired.
\end{proof}

In order to prove Theorem \ref{thm:lowaug}, we will exploit the
following result of Bender et al. \cite{BenderFaKu19} which analyzes
the smoothed greedy algorithm for $\epsilon = 0$:
\begin{theorem}[Bender et al. \cite{BenderFaKu19}]
Consider a single-processor cup game in which the emptier follows the smoothed greedy
emptying algorithm, and the filler is an oblivious filler. Moreover,
suppose $\epsilon = 0$. For any positive constants $c$ and $d$, and any $t \le
2^{\log^c n}$, step $t$ has backlog $O(\log \log n)$ with probability
at least $1 - 2^{-\log^d n}$.
\label{thm:old}
\end{theorem}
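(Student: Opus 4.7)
The plan is to prove by induction on integer height levels $\alpha$ that the number $N_\alpha(t)$ of cups with fill at least $\alpha$ decays doubly-exponentially in $\alpha$. Write $T = 2^{\log^c n}$ and $M_\alpha = \max_{\tau \le T} N_\alpha(\tau)$. The base case $M_0 \le n$ is trivial, and I aim for the per-level recursion
\[
M_{\alpha + 1} \;=\; O\bigl(\sqrt{M_\alpha \cdot \log^{d'} n}\; + \; \log^{d'} n\bigr),
\]
holding with failure probability at most $2^{-\log^{d'} n}$ per level, for a constant $d' \gg c, d$ chosen large enough to absorb the union bounds below. Iterating this square-root map starting from $n$ drives $M_\alpha$ below $\polylog n$ within $O(\log \log n)$ levels, and a few more iterations force $M_\alpha = 0$, yielding backlog $O(\log \log n)$ at step $t$.

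For the key inductive step, fix $\alpha, t$ and assume $\max_{\tau \le t} N_\alpha(\tau) \le k$. Suppose $N_{\alpha + 1}(t) = m$, and let $t_0$ be the most recent step before $t$ at which $N_{\alpha + 1}(t_0) \le m / 2$; set $I = (t_0, t]$. Throughout $I$ we have $N_{\alpha + 1} \ge m / 2 \ge 1$, so the fullest cup sits at fill $\ge \alpha + 1$ and the greedy (no-augmentation) emptier never skips. Balancing entries against exits at level $\alpha + 1$ during $I$ gives that the number of threshold-$(\alpha + 1)$ crossings in $I$ is at least $m/2$. Each such crossing occurs in a cup that sat at fill in $[\alpha, \alpha + 1)$ just before the crossing, hence in the height-$\ge \alpha$ stratum of at most $k$ cups (by the inductive hypothesis applied along $I$). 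Lemma \ref{lem:smoothing} writes the crossing count in each such cup $j$ as $\lfloor c_j(I) \rfloor + X_j$, with the $X_j$ independent $0$-$1$ random variables of mean $c_j(I) - \lfloor c_j(I) \rfloor$. A multiplicative Chernoff bound on the at-most-$k$ relevant $X_j$'s concentrates the random contribution to within $O(\sqrt{k \log^{d'} n} + \log^{d'} n)$ of its mean, while the deterministic contribution is balanced against $|I|$ by greediness, giving the target recursion. A union bound over $\le T$ choices of $t$, $\le T$ choices of $t_0$, and $O(\log \log n)$ levels absorbs only a $2^{O(\log^c n)}$ factor, which is easily dominated when $d' \gg c$, yielding the final per-step failure probability of at most $2^{-\log^d n}$.

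The main obstacle is that Lemma \ref{lem:smoothing} counts \emph{all} threshold crossings in cup $j$, not just crossings of the single threshold $\alpha + 1$, so reading a bound on level-$(\alpha+1)$ crossings off of it requires care. The cleanest route is to switch to the continuous potential $W_\alpha(t) = \sum_j \max(0, f_j(t) - \alpha)$, which increases only by water that the filler pours onto cups with fill $\ge \alpha$ and decreases by exactly $1$ per step in which the greedy emptier targets a cup at fill $\ge \alpha + 1$. Applying Lemma \ref{lem:smoothing} to the influx portion of $W_\alpha$ over $I$, concentrating it with Chernoff on the at-most-$k$ relevant cups, and then converting a bound on $W_\alpha$ back to a bound on $N_{\alpha + 1}$ (each cup in the $\ge \alpha + 1$ stratum contributes at least $1$ to $W_\alpha$) is the technical heart of the argument, and the place where the $\sqrt{k}$ scaling must be preserved rather than degrading to a linear bound.
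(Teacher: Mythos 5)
This is a statement the paper itself does not prove: it is imported as a black-box citation to Bender et al. \cite{BenderFaKu19} (with the $\epsilon = 0$, $\poly n$-step form due to \cite{Kuszmaul20}), so there is no ``paper proof'' to align against. Your proposal attempts a from-scratch proof, and while it has the right shape (a doubly-exponential per-level decay argument, which is indeed the spirit of the known analyses), it has a concrete gap that is not repaired by the suggested fix.

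The gap is in the sentence asserting that the cups crossing threshold $\alpha + 1$ during $I$ lie ``in the height-$\ge \alpha$ stratum of at most $k$ cups (by the inductive hypothesis applied along $I$).'' The inductive hypothesis gives $N_\alpha(\tau) \le k$ \emph{at each fixed time} $\tau$, but the stratum is a time-varying set, and over a long interval $I$ (whose length you do not control — $t_0$ is defined by when $N_{\alpha+1}$ last dropped below $m/2$, and $|I|$ can be as large as $2^{\log^c n}$) the number of \emph{distinct} cups that ever sit at height $\ge \alpha$ can be far larger than $k$; indeed essentially all $n$ cups could cycle through the stratum. Consequently the Chernoff step that treats the randomness as ``at most $k$ relevant $X_j$'s'' is not justified, and a naive application of Lemma \ref{lem:smoothing} gives you concentration to within $O(\sqrt{n\log n})$, not $O(\sqrt{k\log n})$, which kills the recursion. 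Your proposed fix via $W_\alpha(t) = \sum_j \max(0, f_j(t) - \alpha)$ faces exactly the same obstruction: the influx into $W_\alpha$ over $I$ is determined by which cups are in the stratum \emph{at the moment each half-unit of water lands}, a data-dependent, time-varying selection, and Lemma \ref{lem:smoothing} only controls crossing counts for a \emph{fixed} cup over a \emph{fixed} interval. You flag this yourself as ``the technical heart of the argument,'' but what you have written is a statement of the difficulty rather than a resolution of it; this is precisely where the real work (a suitable potential/invariant that tolerates the changing stratum while preserving the $\sqrt{k}$ scaling, as in \cite{Kuszmaul20}) has to go.

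Two smaller points. First, ``a few more iterations force $M_\alpha = 0$'' is not true of the recursion you wrote: $M_{\alpha+1} = O(\sqrt{M_\alpha \log^{d'} n} + \log^{d'} n)$ has a fixed point at $\Theta(\log^{d'} n)$ and never reaches $0$; you instead need to stop iterating once $M_\alpha \le \polylog n$ and then invoke the deterministic greedy backlog bound $O(\log(\polylog n)) = O(\log\log n)$ on the sub-game among the surviving $\polylog n$ cups (as Theorem \ref{thm:main} does for the tail). Second, the threshold $\alpha + 1$ need not fall on an integer boundary for the smoothed game since the initial offsets are real-valued, so the accounting of ``crossings of level $\alpha + 1$'' has to be phrased against the perturbed heights, not raw integer levels; this is cosmetic but worth being precise about given that Lemma \ref{lem:smoothing} is stated for integer thresholds $(j,i)$.
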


Although Theorem \ref{thm:old} only applies to the first
$2^{\polylog n}$ steps of a game, we can use it to prove the following
lemma. Define a \defn{fractional reset} to be what happens if one
reduces the fill $f_j$ of each cup $j$ to $f_j - \lfloor f_j
\rfloor$. That is, the fills of the cups are decreased by integer
amounts to be in $[0, 1)$. The next lemma shows that, if a cup game is
fractionally reset after a given step $j$, then the following steps
$j + 1, j + 2, \ldots, j + 2^{\polylog n}$ are guaranteed to have
small backlog.

\begin{lemma}
  Consider a single-processor cup game in which the emptier follows the smoothed greedy
  emptying algorithm, and the filler is an oblivious filler. Consider
  a step $t_0$, and suppose that, after step $t_0$, the cup system is
  fractionally reset. Then for any positive constants $c$ and $d$, and any $t
  \le t_0 + 2^{\log^c n}$, step $t$ has backlog $O(\log \log n)$ with
  probability at least $1 - 2^{-2\log^d n}$.
  \label{lem:reset}
\end{lemma}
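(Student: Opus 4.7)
The plan is to reduce Lemma \ref{lem:reset} to Theorem \ref{thm:old} by observing that, immediately after the fractional reset at step $t_0$, the cup state is distributionally identical to the initial state of a freshly initialized smoothed greedy game. My first step is therefore to show that each reset fill is uniform on $[0, 1)$ and that the reset fills are mutually independent across cups.

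To prove this distributional claim, I would retrace the proof of Lemma \ref{lem:smoothing}. The fill of cup $j$ at the end of step $t_0$ equals $r_j + c_j([1, t_0]) - e_j([1, t_0])$, where $e_j([1, t_0])$ denotes the total water removed from cup $j$ by the emptier. Because the smoothed greedy emptier only removes water in full units, $e_j([1, t_0]) \in \mathbb{Z}_{\ge 0}$, so the fractional part of the fill — which is exactly the reset fill — equals $r_j + c_j([1, t_0]) \bmod 1$. Since the filler is oblivious, the quantities $c_j([1, t_0])$ are independent of the emptier's private randomness $(r_1, \ldots, r_n)$; combined with each $r_j$ being uniform on $[0, 1)$ and the $r_j$'s being mutually independent, this forces the reset fills to be i.i.d.\ uniform on $[0, 1)$, conditional on the filler's strategy.

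Next, I would view the steps $t_0 + 1, t_0 + 2, \ldots$ as a restarted smoothed greedy game whose ``initial random offsets'' are the reset fills. The emptier still follows smoothed greedy in these steps, and the filler remains oblivious in the restarted game because its behavior on these steps is a fixed function of the global step index, independent of the emptier's actions. Consequently, Theorem \ref{thm:old} applies verbatim to the restarted game, yielding backlog $O(\log \log n)$ at each step $t$ with $t - t_0 \le 2^{\log^c n}$.

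Finally, to sharpen the probability bound from $1 - 2^{-\log^d n}$ (as in Theorem \ref{thm:old}) to $1 - 2^{-2\log^d n}$ (as required here), I would apply Theorem \ref{thm:old} with the same $c$ but with $d$ replaced by some $d' \ge d + 1$; for sufficiently large $n$ we have $\log^{d'} n \ge 2 \log^d n$, so the resulting failure probability $2^{-\log^{d'} n}$ is at most $2^{-2 \log^d n}$. I do not expect any real obstacles: once one verifies that obliviousness survives the restart and that the reset fills have the correct joint distribution, the lemma is a direct corollary of Theorem \ref{thm:old}. The only point requiring any care is ensuring that the filler's potential dependence on the global step index $t_0$ does not couple the reset state to the future filling strategy, which is immediate because the filler commits to its (possibly randomized) oblivious strategy at the start of the game, independently of the emptier's coins $(r_1, \ldots, r_n)$.
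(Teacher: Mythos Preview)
Your proposal is correct and follows essentially the same approach as the paper: both observe that the reset fills $q_j = r_j + c_j([1,t_0]) \bmod 1$ are i.i.d.\ uniform on $[0,1)$ (since the emptier removes water in integer units and the filler is oblivious), then treat these as fresh random offsets for a restarted smoothed-greedy game and invoke Theorem~\ref{thm:old}. You are in fact slightly more careful than the paper, explicitly justifying the obliviousness of the restarted filler and the sharpening of the failure probability from $2^{-\log^d n}$ to $2^{-2\log^d n}$ via a larger exponent $d'$.
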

\begin{proof}
  For each cup $j$, let $r_j$ be the random initial offset placed into
  cup $j$ by the smoothed greedy emptying algorithm, and let $c_j$ be
  the total amount of water placed into cup $j$ by the filler during
  steps $1, 2, \ldots, t_0$. Because the emptier always removes water in
  multiples of $1$, they never change the fractional mount of water in
  any cup (i.e., the amount of water modulo $1$). It follows that the
  fractional amount of water in each cup $j$ is given by,
$$q_j := c_j + r_j \pmod 1.$$ Since $r_j$ is uniformly random in $[0,
    1)$, the value $q_j$ is also uniformly random in $[0,
      1)$. Moreover, because the initial offsets $r_j$ are independent
      of one another, so are the values $q_j$.

  Because the values $q_j$ are independent and uniformly random in
  $[0, 1)$, they can be thought of as initial random offsets for the
    smoothed greedy emptying algorithm. Thus, if each cup $j$ is reset
    to have fill $q_j$ after step $t$, then the following steps can be
    analyzed as the first steps of a cup game in which the emptier
    follows the smoothed greedy emptying algorithm. The claimed result
    therefore follows from Theorem \ref{thm:old}.
\end{proof}

We now prove Theorem \ref{thm:lowaug}.
\begin{proof}[Proof of Theorem \ref{thm:lowaug}]
Consider a step $t$, and let $d$ be a large positive constant. For
each step $t_0 \in \{t - n/\epsilon, \ldots, t\}$, Lemma
\ref{lem:reset} tells us that if a fractional reset were to happen
after step $t_0$, then step $t$ would have probability at least $1 -
2^{-\log^d n}$ of having backlog $O(\log \log n)$. By a union bound,
it follows that if a fractional reset were to happen after any of
steps $t - n/\epsilon, \ldots, t$, then step $t$ would have
probability at least $1 - (n/\epsilon)2^{\log^d n}$ of having backlog
$O(\log \log n)$. Supposing that $d$ is a sufficiently large constant,
this probability is at least $1 - 2^{\log^{d/2} n}$.

By Lemma \ref{lem:reststep}, at least one step $t_0 \in \{t -
n/\epsilon, \ldots, t\}$ is a rest step. This means that, at the end
of step $t_0$, every cup contains less than $1$ unit of water. In
other words, the state of the system after step $t_0$ is the
\emph{same} as if the system were to be fractionally reset. It follows
that, for any constant $d$, the backlog after step $t$ is $O(\log \log
n)$ with probability at least $1 - 2^{\log^{d/2} n}$. This completes
the proof.
\end{proof}

The proof of Theorem \ref{thm:lowaug2} follows similarly to the proof
of Theorem \ref{thm:lowaug}. Rather than using Theorem \ref{thm:old},
we instead analyze the case of $\epsilon = 0$ using the following
version of Theorem \ref{thm:main}:
\begin{theorem}
  Consider a single-process cup game that lasts for $2^{\polylog n}$
  steps and in which the emptier follows the asymmetric smoothed
  greedy algorithm. Then with high probability in $n$, the number of
  cups containing $2$ or more or units of water never exceeds
  $O(\polylog n)$ and the backlog never exceeds
  $O(\log \log n)$ during the game.
  \label{thm:maincpy}
\end{theorem}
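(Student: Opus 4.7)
The plan is to re-run the analysis of Section \ref{sec:analysis} under a longer time horizon, tightening each per-step failure probability from $1/\poly n$ to $1/2^{\polylog n}$ at the cost of replacing every $\log n$ that appears as a deviation term by $\polylog n$. Once each step has per-step failure probability $1/2^{\polylog n}$, a union bound over the $2^{\polylog n}$ steps of the game recovers the ``high probability in $n$'' guarantee stated in the theorem.

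The first step is to strengthen the two halves of the influence property, Lemmas \ref{lem:lots-of-k-crossings} and \ref{lem:bolus-variance}. Both proofs apply a multiplicative Chernoff bound to a sum of independent Bernoullis whose total mean is controlled by the influence $s$ of the step interval $I$. Changing the deviation scale from $\sqrt{s \log n}$ to $\sqrt{s \log^c n}$ for a large constant $c$ of our choice turns the failure probability into $1/2^{\log^c n}$. After this substitution, the two conclusions read: at least $\frac{s}{2q} - O(\polylog n)$ distinct priority-level-$\ell$ cups cross thresholds during $I$, and the size of the queue $Q$ varies by at most $O(\sqrt{s\polylog n} + \polylog n)$ during $I$ (the $p$-dependent terms vanish since $p=1$). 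I would similarly re-derive the backlog bound of Lemma \ref{lem:Kuszmaul20} by invoking the alternative $h$-truncated cup game approach sketched in the remark following that lemma: since the truncated analysis also reduces to Chernoff bounds on independent indicators, the same $\log n \mapsto \polylog n$ substitution drives its per-step failure probability down to $1/2^{\polylog n}$ while preserving the $O(\log \log n)$ backlog bound.

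With the strengthened influence property in hand, I would then rerun the inductive argument of Lemma \ref{lem:queued-thresholds}. Setting $X = \polylog n$ (the analog of $q \log n + \log n$ in the original proof, using $q = \Theta(\log \log n)$), the recurrence $m_{\ell+1} \le O(\sqrt{X m_\ell} + X)$ iterates over the $\Theta(\log \log n)$ priority levels and collapses to $m_{q/2} \le O(X) = O(\polylog n)$. The argument of Lemma \ref{lem:unpredictability} then yields $\polylog n$-unpredictability at each step with per-step failure probability $1/2^{\polylog n}$, and Lemma \ref{lem:reduceproperty} translates this into a per-step tail size bound of $O(\polylog n)$. A final union bound over the $2^{\polylog n}$ steps of the game upgrades the per-step guarantee into the claimed ``throughout the game'' guarantee.

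The main obstacle is bookkeeping the union bounds carefully. The original analysis implicitly takes a union bound over the $\poly n$ subintervals $I' \subseteq I$ in Lemma \ref{lem:bolus-variance} and over $k \in [n]$ in Lemma \ref{lem:queued-thresholds}; in the extended setting the former blows up to $(2^{\polylog n})^2 = 2^{2 \polylog n}$ subintervals (while there are still only $n$ choices of $k$). Since the polylog exponent in our per-step failure probability is a free parameter, we can always inflate it to absorb these additional union bound factors along with the final $2^{\polylog n}$-step union bound, so no genuinely new combinatorial ideas are required beyond those already developed for Theorem \ref{thm:main}.
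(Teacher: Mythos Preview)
Your proposal is correct and matches what the paper does: the paper presents Theorem \ref{thm:maincpy} simply as ``the following version of Theorem \ref{thm:main}'' without a separate proof, implicitly relying on the same re-parameterization you describe (inflating the Chernoff deviation terms from $\log n$ to $\polylog n$ so that per-step failure probabilities drop to $1/2^{\polylog n}$ and survive a union bound over $2^{\polylog n}$ steps). One minor simplification you could make: since $p=1$ here, the appeals to Lemma \ref{lem:Kuszmaul20} in Lemmas \ref{lem:bolus-variance} and \ref{lem:queued-thresholds} are vacuous (removal-unfriendly steps have $|Q|=0$, and the non-permanent thresholds lie in fewer than $p=1$ cups), so you need not re-derive that backlog bound for the tail-size argument; the $O(\log\log n)$ backlog then follows from the deterministic greedy analysis on the $O(\polylog n)$ tail cups, as in the second half of the proof of Theorem \ref{thm:main}.
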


We now prove Theorem \ref{thm:lowaug2}.
\begin{proof}[Proof of Theorem \ref{thm:lowaug2}]
  The proof follows in exactly the same way as for Theorem
  \ref{thm:lowaug}, except that Theorem \ref{thm:maincpy} is used in
  place of Theorem \ref{thm:old}.
\end{proof}

\subsection{Tight lower bounds on resource augmentation for smoothed greedy}\label{sec:specific_epsilon2}

Theorems \ref{thm:lowaug} and \ref{thm:lowaug2} give unending
guarantees for the smoothed greedy (and asymmetric smoothed greedy)
emptying algorithms using resource augmentation $\epsilon = 1/2^{\polylog n}$. Theorem
\ref{thm:auglb} shows that such guarantees cannot be achieved with
smaller resource augmentation.

\begin{theorem}
  Consider a single-processor cup game on $n$ cups. Suppose
  $\epsilon = 1/2^{\log^{\omega(1)} n}$, and suppose the emptier
  follows either the smoothed greedy emptying algorithm or the
  asymmetric smoothed greedy emptying algorithm. Then there is an
  oblivious filling strategy that causes there to be a step $t$ at
  which the expected backlog is $\omega(\log \log n)$.
  \label{thm:auglb}
\end{theorem}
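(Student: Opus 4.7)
The plan is to adapt the oblivious fuzzing filling algorithm from Section \ref{sec:filling} to the resource-augmented setting by running it on a subset of $k \ll n$ cups, where $k$ is tuned to the value of $\epsilon$. Since both the smoothed greedy and asymmetric smoothed greedy algorithms are monotone stateless, and hence score-based by Theorem \ref{thm:emptier-equivalence}, the existing fuzzing analysis (Theorem \ref{thm:general-filler-strategy}) applies to both.

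Since $\epsilon = 1/2^{\log^{\omega(1)} n}$, there exists a function $f(n) \to \infty$ with $\epsilon \le 1/2^{\log^{f(n)} n}$. Fix such an $f$ and set $k = \log^{f(n)/2} n$, so that $\log k = (f(n)/2) \log \log n = \omega(\log \log n)$. The filler designates an arbitrary set of $k$ cups (after randomly relabeling) and runs the oblivious fuzzing algorithm restricted to these cups, with each placement scaled by the factor $(1-\epsilon)$ to respect the resource-augmentation budget (e.g., placing $(1-\epsilon)/2$ water into each of two random cups per step rather than $1/2$). Phase lengths remain $k^{ck}$ for a large constant $c$, so the total game length is $T = O(k^{ck+1})$, giving $\log T = O(k \log k) = O(\log^{f(n)/2} n \cdot f(n) \log \log n) = o(\log^{f(n)} n)$, and hence $T\epsilon = o(1)$.

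Next I would repeat the proof of Theorem \ref{thm:general-filler-strategy} with $n$ replaced by $k$ throughout: all of Lemmas \ref{lem:bolus-to-wasted-steps} through \ref{lem:emptier-wasted2} rescale naturally, and one concludes that the expected average fill of cups $1, \ldots, k^{2/3}$ at the beginning of the $k^{2/3}$-cup phase is $\Omega(\log k) = \omega(\log \log n)$. In particular, at that step some cup has expected fill $\omega(\log \log n)$, so the expected backlog at that step is $\omega(\log \log n)$, completing the proof.

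The main obstacle will be verifying that the probabilistic machinery survives the scaling of placements by $(1-\epsilon)$. The only substantive place this matters is Lemma \ref{lem:variation}, where the anti-concentration of $a_j \bmod 1$ is used: under augmentation $a_j$ now lives on a $(1-\epsilon)/2$-lattice rather than a $1/2$-lattice modulo $1$, but since $(1-\epsilon)/2 > 1/4$ the values $0$ and $(1-\epsilon)/2 \bmod 1$ remain well separated from one another, and both $\Pr[a_j \bmod 1 = 0]$ and $\Pr[a_j \bmod 1 = (1-\epsilon)/2]$ are still $\Omega(1)$, so the $\Omega(k)$ variation lower bound survives with only constant-factor degradation; the McDiarmid concentration step is unaffected. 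The total water deficit induced by the augmentation over the entire game is $T\epsilon = o(1)$, negligible compared to the $\Omega(k^{2/3} \log k)$ fill that the fuzzing is designed to accumulate, so the expected-value calculation in Theorem \ref{thm:general-filler-strategy} goes through unchanged.
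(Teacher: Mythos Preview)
Your overall strategy matches the paper's: run the fuzzing algorithm on a small number $k$ of cups, with $k$ chosen so that $\log k = \omega(\log\log n)$ while the total game length $T$ satisfies $T\epsilon = o(1)$, and then invoke the fuzzing analysis to get expected backlog $\Omega(\log k)$. The paper uses $k \approx \sqrt{\log \epsilon^{-1}}$ and you use $k = \log^{f(n)/2} n$, which are the same up to the choice of $f$.

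Where you and the paper diverge is in how the resource augmentation is handled. The paper does \emph{not} re-run the fuzzing lemmas in the augmented setting. Instead it proves a coupling lemma (Lemma~\ref{lem:augmentation_effect}) that exploits the random initial offsets $r_j$ of the (asymmetric) smoothed greedy algorithm: because the $r_j$'s are independent uniform in $[0,1)$, any two fills $S_t(j), S_t(k)$ are within $\epsilon^{9/10}$ of each other (and any fill is within $\epsilon^{9/10}$ of an integer) only with probability $O(\epsilon^{9/10})$. Union-bounding over all pairs and all $t^* \le \epsilon^{-1/10}$ steps shows that with probability $1 - O(\sqrt{\epsilon})$ the emptier makes \emph{identical} decisions in the augmented and unaugmented games. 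This reduces cleanly to the $\epsilon = 0$ case, where Theorem~\ref{thm:general-filler-strategy} applies verbatim on the $1/2$-lattice.

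Your route --- directly re-verifying Lemmas~\ref{lem:bolus-to-wasted-steps}--\ref{lem:emptier-wasted2} with $(1-\epsilon)/2$-placements --- is plausible but you underestimate the friction. The entire equilibrium-state machinery (Definition of $E_\ell$, Lemma~\ref{lem:equilibrium}, and the inequality $v_\ell^{(m)} \ge \sum_j X_j$ in Lemma~\ref{lem:variation}) is built for fills on the $1/2$-lattice; that inequality in particular uses that $(f_j - E_\ell(j)) \bmod 1 \in \{0, 1/2\}$, which fails once fills live on the denser set $\{a(1-\epsilon)/2 - b\}$. Your final sentence (``total deficit $T\epsilon = o(1)$, so the calculation goes through'') conflates a global water-balance statement with the step-by-step structure of the lemmas: a tiny cumulative deficit can still alter which cup the emptier picks at each step, so one cannot simply subtract $o(1)$ at the end. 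The paper's coupling is precisely the device that makes this rigorous, and it hinges on the emptier's own randomness --- an ingredient your proposal never invokes.
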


To prove Theorem \ref{thm:auglb}, we will have the filler follow the
oblivious fuzzing filling algorithm on
$\min(1/\sqrt{\log \epsilon^{-1}}, n)$ cups. Rather than placing water
in multiples of $1/2$, however, the filler now places water in
multiples of $1/2 - \epsilon / 2$ (in order so that the total water
placed in each step is $1 - \epsilon$).

If $\epsilon$ were $0$, then Theorem \ref{thm:general-filler-strategy}
would guarantee an expected backlog of
\begin{equation}
  \Omega\left(\min \left\{\log \frac{1}{\sqrt{\log \epsilon^{-1}}}, \log n \right\} \right) \ge \omega(\log \log n)
  \label{eq:smoothed_backlog}
\end{equation}
after some step $t^* \le 2^{\tilde{O}(1 / \sqrt{\log \epsilon^{-1}})} \le \epsilon^{-1 / 10}$.

The fact that $\epsilon > 0$, however, makes it so that we cannot
directly apply Theorem \ref{thm:general-filler-strategy}. Thus, in
order to prove Theorem \ref{thm:auglb}, we must first prove that the
resource augmentation $\epsilon = 1/2^{\log^{\omega(1)} n}$ is so
small that, with high probability, it does not have a significant
effect on the game by step $t^*$.

In order to bound the impact of resource augmentation on the emptier,
we exploit the random structure of the emptier's algorithm, and use
that random structure to show that the emptier's behavior is robust to
small ``perturbations'' due to resource augmentation.

\begin{lemma}
  Consider resource augmentation $\epsilon = 2^{-2^{\omega(1)} n}$ and
  consider a step $t^* \le \epsilon^{-1 / 10}$. With probability at least
  $1 - O(\sqrt{\epsilon})$, the resource augmentation
  $\epsilon = 1/2^{\log^{\omega(1)} n}$ does not affect the emptier's
  behavior during the first $t^*$ steps.
  \label{lem:augmentation_effect}
\end{lemma}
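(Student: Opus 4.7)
The plan is a coupling argument between the actual $\epsilon > 0$ game and a ``shadow'' $\epsilon = 0$ game that shares all randomness: the random offsets $r_j$, the random permutation of cup labels, the priorities $p_j$ (for the asymmetric smoothed greedy case), and the filler's random choices $x_1, x_2$ at each step. The only distinction between the two games is that the shadow filler places water in multiples of $1/2$ while the actual filler places water in multiples of $1/2 - \epsilon/2$. I will argue by induction on $t \le t^*$ that, under a certain good event on the $r_j$'s, the emptier makes identical decisions in the two games at every step.

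For the inductive step, assume the emptier has made the same decisions in both games through step $t-1$. Then the same cups have been filled and emptied the same number of times in both games, so after the filler's action at step $t$ the fills differ by at most $k_j \cdot \epsilon / 2 \le t^* \epsilon \le \epsilon^{9/10}$, where $k_j \le 2t^*$ is the number of half-unit placements into cup $j$. The emptier's decision at step $t$ can differ between the two games only if (A) some cup's shadow fill is within $O(\epsilon^{9/10})$ of an integer threshold the algorithm cares about (namely $1$ for smoothed greedy, and $1$ or $2$ for asymmetric smoothed greedy), or (B) two cups' shadow fills are within $O(\epsilon^{9/10})$ of each other.

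The crucial observation is that the shadow fill of any cup $j$ always equals $r_j + m_j/2$ for some non-negative integer $m_j$, since the emptier removes water in unit chunks and the shadow filler places in half-unit chunks. Thus the shadow fill modulo $1/2$ is exactly $r_j \bmod 1/2$, independent of the trajectory of the game. Consequently, event (A) for cup $j$ occurs iff $r_j \bmod 1/2$ lies within $O(\epsilon^{9/10})$ of $0$, which has probability $O(\epsilon^{9/10})$ over the uniform choice of $r_j \in [0, 1)$. Similarly the shadow difference $f_i - f_j$ always equals $r_i - r_j$ plus a multiple of $1/2$, so event (B) for the pair $(i, j)$ occurs iff $(r_i - r_j) \bmod 1/2$ lies within $O(\epsilon^{9/10})$ of $0$, again with probability $O(\epsilon^{9/10})$. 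A union bound over $n$ cups and $\binom{n}{2}$ pairs gives a total bad-event probability of $O(n^2 \epsilon^{9/10})$; since $\epsilon = 1/2^{\log^{\omega(1)} n}$ we have $n^2 \ll \epsilon^{-2/5}$, which makes $n^2 \epsilon^{9/10} = O(\sqrt{\epsilon})$ as required.

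The main subtlety I anticipate is ensuring that the argument handles the priority-based tie-breaking used by the asymmetric smoothed greedy algorithm. Since the priorities $p_j$ are not perturbed by $\epsilon$, the only way the perturbation can alter the emptier's choice among cups with fill in $[1, 2)$ is to push one of those cups across the threshold $1$ or $2$, which is exactly what event (A) prevents. Hence the inductive step carries through uniformly for both emptying algorithms considered in the theorem, completing the plan.
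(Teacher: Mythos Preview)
Your coupling argument is correct and follows the same high-level outline as the paper: bound the cumulative perturbation by $\epsilon^{9/10}$, then argue that the emptier's decision can only change if some fill is near an integer threshold (your event (A), the paper's Case~1) or two fills are near each other (your event (B), the paper's Case~2).

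The one genuine difference is how you dispatch the union bound. The paper argues that, for each fixed step $t$, the value $S_t(j) \bmod 1$ is uniform on $[0,1)$ (because the oblivious filler's contributions are independent of $r_j$), bounds the bad-event probability at step $t$ by $O(n^2\epsilon^{9/10})$, and then \emph{union-bounds over all $t^*$ steps}, landing at $O(n^2 t^* \epsilon^{9/10}) = O(n^2\epsilon^{4/5}) \le O(\sqrt{\epsilon})$. You instead exploit the half-integer structure of the fuzzing filler: since every shadow fill is $r_j$ plus a half-integer, the distance of the shadow fill to the nearest integer (and the distance between two shadow fills) is determined by $r_j \bmod 1/2$ (respectively $(r_i - r_j) \bmod 1/2$) alone, \emph{independently of $t$}. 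Hence a single union bound over cups and pairs suffices, giving $O(n^2\epsilon^{9/10})$ directly. This is a clean simplification that buys you one factor of $t^* \le \epsilon^{-1/10}$ in the bound, though both routes comfortably reach $O(\sqrt{\epsilon})$ under the hypothesis $\epsilon = 2^{-\log^{\omega(1)} n}$.

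Your final paragraph on the asymmetric algorithm is also a useful addition: the paper's proof does not spell out why its Case~1/Case~2 dichotomy suffices when the emptier breaks ties among cups in $[1,2)$ by priority rather than by fill, and your observation that priorities are unperturbed (so only threshold crossings at $1$ or $2$ matter for those cups) fills that gap explicitly.
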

\begin{proof}
  We begin with a simple observation: the total amount of resource
  augmentation during the first $t^*$ steps is
  $\epsilon t^* \le \epsilon^{9 / 10}$. Call this the
  \defn{Net-Augmentation Observation}.

  For each step $t$, define $S_t$ to be the state of the cup game
  after the $t$-th step without resource augmentation, and define
  $S'_t$ to be the state of the cup game with resource augmentation
  $\epsilon = 1/2^{\log^{\omega(1)} n}$ (note that, in both cases, the
  emptier follows the same variant of the smoothed greedy algorithm
  using the same random initial offsets). Let $S_t(j)$
  (resp. $S'_t(j)$) denote the fill of cup $j$, after the filler has
  placed water in the $j$-th step, but before the emptier has removed
  water (note that, when discussing fill, we include the random
  initial offset placed by the emptier in each cup).

  Suppose that, during steps $1, 2, \ldots, t - 1$, the emptier's
  behavior is unaffected by the resource augmentation. The only way
  that the emptier's behavior filler in step $t$ can be affected by
  the resource augmentation is if either:
  \begin{itemize}
  \item \textbf{Case 1: }$\lfloor S_t(j) \rfloor \neq \lfloor S'_t(j) \rfloor$ for some cup $j$.
    By the Net-Augmentation Observation, it follows that $(S_t(j) \mod 1) \in [-\epsilon^{9/10}, \epsilon^{9/10}]$.
  \item \textbf{Case 2: }$S_t(j) < S_t(k)$ but $S'_t(k) < S'_t(j)$ for
    some cups $j$ and $k$. By the Net-Augmentation Observation, it
    follows that $S_t(k) - S_t(j) < \epsilon^{9/10}$.
  \end{itemize}

  We will show that the probability of either Case 1 or Case 2
  happening is at most $O(\epsilon^{9/10} n^2)$. It follows that the
  probability of resource augmentation affecting the emptier's
  behavior during any of the first $t^*$ steps is at most
  $O(\epsilon^{9/10} n^2 t^*) \le O(\sqrt{\epsilon})$.

  Rather than directly bounding probability of either Case 1 or Case 2
  occurring on step $t$, we can instead bound the probability that either,
  \begin{equation}
   ( S_t(j) \mod 1) \in [-\epsilon^{9/10}, \epsilon^{9/10}]
    \label{eq:cupjbad}
  \end{equation}
  for some cup $j$, or that
  \begin{equation}
    S_t(k) - S_t(j) < \epsilon^{9/10}
    \label{eq:cupsjkbad}
  \end{equation}
  for some cups $j, k$.  Recall that the values
  $S_t(1), S_t(2), \ldots, S_t(n)$ modulo $1$ are uniformly and
  independently random between $0$ and $1$; this is because the
  emptier initially places random offsets $r_j \in [0, 1)$ into each
  cup $j$, which permanently randomizes the fractional amount of water
  in that cup.  Thus the probability that
  $(S_t(j) \mod 1) \in [-\epsilon^{9/10}, \epsilon^{9/10}]$ for a given
  cup $j$ is $O(\epsilon^{9/10})$, and the probability that
  $S_t(k) - S_t(j) < \epsilon^{9/10}$ for a given pair of cups $j, k$
  is also $O(\epsilon^{9/10})$. By union-bounding over all cups $j$
  (for \eqref{eq:cupjbad}) and over all pairs of cups $j, k$ (for
  \eqref{eq:cupsjkbad}), we get that the probability of either
  \eqref{eq:cupjbad} or \eqref{eq:cupsjkbad} occurring is
  $O(\epsilon^{9/10} n^2)$, as desired.
\end{proof}

We can now complete the proof of Theorem \ref{thm:auglb}.
\begin{proof}[Proof of Theorem \ref{thm:auglb}]
  For each step $t$, define $S_t$ and $S'_t$ as in Lemma
  \ref{lem:augmentation_effect}.

  By Theorem \ref{thm:general-filler-strategy} and
  \eqref{eq:smoothed_backlog}, there exists
  $t^* \le 2^{\tilde{O}(1 / \sqrt{\log \epsilon^{-1}})} \le \epsilon^{-1 /
    10}$ for which the expected backlog in $S_{t^*}$ is
  $\omega(\log \log n)$. By the Net-Augmentation Observation (from
  Lemma \ref{lem:augmentation_effect}), it follows that the expected
  backlog in $S'_{t^*}$ is at least
  $\omega(\log \log n - \epsilon^{9/10}) \ge \omega(\log \log n)$,
  which completes the proof.
\end{proof}

\bibliographystyle{abbrv} \bibliography{working}

\end{document}